\newif\ifarXiv
\title{Normalization for Cubical Type Theory}
\author{%
  \IEEEauthorblockN{Jonathan Sterling}%
  \IEEEauthorblockA{Computer Science Department\\ Carnegie Mellon University}
  \and
  \IEEEauthorblockN{Carlo Angiuli}%
  \IEEEauthorblockA{Computer Science Department\\ Carnegie Mellon University}%
}
\begin{document}
\maketitle

\ifarXiv
\pagestyle{plain}
\fi

\begin{abstract}
  We prove normalization for (univalent, Cartesian) cubical type theory, closing
  the last major open problem in the syntactic metatheory of cubical type
  theory. Our normalization result is \emph{reduction-free}, in the sense of
  yielding a bijection between equivalence classes of terms in context and a
  tractable language of $\beta/\eta$-normal forms. As corollaries we obtain both
  decidability of judgmental equality and the injectivity of type constructors.
\end{abstract}

\section{Introduction}

De Morgan \citep{cchm:2017} and Cartesian
\citep{abcfhl:2019,angiuli-favonia-harper:2018} cubical type theory are recent
extensions of Martin-L\"of type theory which provide constructive formulations
of higher inductive types and Voevodsky's univalence axiom; unlike homotopy type
theory \citep{hottbook}, both enjoy \emph{canonicity}, the property that closed
terms of base type are judgmentally equal to constructors
\citep{huber:2018,angiuli-favonia-harper:2018}.

Several proof assistants already implement cubical type theory, most notably
Cubical Agda \citep{vezzosi-mortberg-abel:2019} (for the De Morgan variant) and
\redtt~\citep{redtt:2018} (for Cartesian).
Like most type-theoretic proof assistants, both typecheck terms using
algorithms inspired by \emph{normalization by evaluation}~\citep{abel:2013},
which interleave evaluation and decomposition of types. The correctness of these
algorithms hinges not on canonicity but on \emph{normalization} theorems
characterizing judgmental equivalence classes of \emph{open} terms---and
consequences thereof, such as decidability of equality and injectivity of type
constructors. But unlike canonicity, normalization and its corollaries have
until now remained conjectures for cubical type theory.

We contribute the first normalization proof for Cartesian cubical type theory
\citep{abcfhl:2019}. By relying on recent advances in the metatheory of type
theory, our proof is significantly more abstract and concise than existing
canonicity proofs for cubical type theory; moreover, it can be adapted to De
Morgan cubical type theory without conceptual changes.

\subsection{Cubical type theory and synthetic semantics}

Cubical type theory extends type theory with a number of features centered
around a primitive interval $\II$ with elements $0,1:\II$. Propositional
equality is captured by a path type $\Con{path}\prn{A,a_0,a_1}$ whose elements
are functions $f:\Prod{i:\II}{A\prn{i}}$ satisfying $f\prn{0}=a_0$ and
$f\prn{1}=a_1$ judgmentally. Congruence of paths follows from substitution; the
remaining properties of equality are defined at each type by the \emph{Kan
operations} of coercion and box filling (or composition).

In Cartesian cubical type theory, coercion is a function
\[
  \Con{coe} : \Prod{A:\II\to\mathcal{U}}{\Prod{r,s:\II}{A\prn{r} \to A\prn{s}}}
\]
satisfying $\Con{coe}\prn{A,r,r,a}=a$ for all $A:\II\to\mathcal{U}$ and
$a:A\prn{r}$,%
  \footnote{In De Morgan cubical type theory, coercion is limited to
  $A\prn{0}\to A\prn{1}$, a restriction counterbalanced by the additional (De
  Morgan) structure on $\II$.}
and additional equations for each particular connective, e.g.:
\begin{align*}
  & \Con{coe}\prn{\lambda i.\prn{\Prod{x:A\prn{i}}{B\prn{i,x}}},r,s,f} \\
  {}={}&
  \lambda x.
  \Con{coe}\prn{
    \lambda i. B\prn{i,\Con{coe}\prn{A,s,i,x}},r,s,
    f\prn{\Con{coe}\prn{A,s,r,x}}
  }
\end{align*}

The equations governing coercion and composition are complex, especially for the
$\Con{glue}$ type which justifies univalence; calculating (and verifying the
well-definedness of) these equations was a major obstacle in early work on
cubical type theory. Orton and Pitts \citep{orton-pitts:2016,lops:2018}
streamlined this process by observing that the model construction for De Morgan
cubical type theory---the most technical part of which is these equations---can
be carried out \emph{synthetically} in the internal extensional type theory of
any topos satisfying nine axioms (e.g., whose objects are De Morgan cubical
sets); \citet{abcfhl:2019} establish an analogous result for the Cartesian
variant.

A subtle aspect of these models is that coercion is defined not on types
$A:\mathcal{U}$ but on type families $A:\II\to\mathcal{U}$; consequently, a
semantic universe $\mathscr{U}$ of types-with-coercion must strangely admit a
coercion structure for every figure $\II\to\mathscr{U}$. \citet{lops:2018}
obtain $\mathscr{U}$ by transposing the coercion map across the right adjoint
to exponentiation by $\II$ given by the tininess of $\II$.

The synthetic approach of Orton and Pitts simplifies and clarifies the model
construction of cubical type theory by factoring out naturality obligations, in
much the same way that homotopy type theory provides a ``synthetic homotopy
theory'' that factors out e.g., continuity obligations.

In this paper, we combine ideas of Orton and Pitts with the \emph{synthetic
Tait computability} (STC) theory of \citet{sterling-harper:2020}, which factors
out bureaucratic aspects of syntactic metatheory. In STC, one considers an
extensional type theory whose types are (proof-relevant) logical relations; the
underlying syntax is exposed via a proof-irrelevant proposition $\Syn$ under
which the syntactic part of a logical relation is projected.

\subsection{Canonicity for cubical type theory}

Traditional canonicity proofs fix an evaluation strategy for closed terms, and
associate to each closed type a proof-irrelevant \emph{computability predicate}
or \emph{logical relation} ranging over closed terms of that type. Then, one
ensures that evaluation is contained within judgmental equality, that
computability is closed under evaluation, and that computability at base type
implies evaluation to a constructor; canonicity follows by proving that every
well-typed closed term is computable.

Whereas evaluation in ordinary type theory need not descend under binders,
evaluating a closed coercion $\Con{coe}\prn{\lambda i.A,r,s,a}$ in cubical type
theory requires determining the head constructor of (i.e., evaluating) the type
$A$ in context $i:\II$. Accordingly, the cubical canonicity proofs of
\citet{huber:2018,angiuli-harper-wilson:2017,angiuli-favonia-harper:2018}
define evaluation and computability for terms in context
$\prn{i_1:\II,\dots,i_n:\II}$. In both proofs, the difficulty arises that
typing and thus computability are closed under substitutions of the form
$\Mor{\II^n}{\II^m}$, but evaluation is not; both resolve the issue by showing
evaluation is closed under computability up to judgmental equality.

\subsection{Semantic and proof-relevant computability}

The past several years have witnessed an explosion in \emph{semantic}
computability techniques for establishing syntactic metatheorems
\citep{altenkirch-hofmann-streicher:1995,fiore:2002,%
shulman:2015,uemura:2017,coquand:2019,%
kaposi-huber-sattler:2019,coquand-huber-sattler:2019,sterling-angiuli-gratzer:2019,sterling-harper:2020}.
What makes semantic computability different from ``free-hand'' computability is
that it is expressed as a \emph{gluing model}, parameterized in the generic model of
the type theory; hence one is always working with typed terms up to judgmental
equality.

A new feature of semantic computability, forced in many cases by the absence of
raw terms, is that a term may be computable in more than one way. This
\emph{proof-relevance} plays an important role in the normalization arguments
of \citet{altenkirch-hofmann-streicher:1995,fiore:2002,coquand:2019} as well as
the canonicity arguments of \citet{coquand:2019,sterling-angiuli-gratzer:2019}.
The proof-relevant approach is significantly simpler to set up than the
alternative and it provides a compositional account of computability for
universe hierarchies, which had been the main difficulty in conventional
free-hand arguments.

\begin{example}
  A semantic canonicity argument for ordinary type theory associates to each
  closed type a \emph{computability structure} which assigns to each equivalence
  class of closed terms of that type a set of ``computability proofs.'' We choose
  the computability structure at base type to be a collection of ``codes'' for each
  constant; then, by exhibiting a choice of computability proof for every
  well-typed term, we conclude that every equivalence class at base type is a
  constant.
\end{example}

Crucially, the ability to store data within the computability proofs
circumvents the need to define a subequational evaluation function, allowing us
to carry out the entire argument over equivalence classes of terms; rather than
choosing a representative of each equivalence class, we encode canonical forms
as a \emph{structure} indexed over equivalence classes of terms.

\subsubsection{In what contexts do we compute?}

Semantic computability arguments have already been used to establish
ordinary canonicity, cubical canonicity, and ordinary normalization. The key
difference between these arguments lies in what is considered an ``element'' of
a type, or more precisely, what are the contexts (and substitutions) of
interest. In ordinary canonicity proofs, the only context of interest is the
closed context, in which each type has just a \emph{set} of elements; the
computability structures are thus families of sets.

Following Huber and Angiuli, Favonia, and Harper
\citep{huber:2018,angiuli-favonia-harper:2018}, cubical canonicity proofs must
consider terms in all contexts $\II^n$, with all substitutions
$\Mor{\II^n}{\II^m}$ between them. These contexts and substitutions induce a
\emph{cubical set} (i.e., a presheaf) of elements of each type; the
computability structures in question are thus families of cubical sets indexed in
the application of a ``cubical nerve'' applied to a syntactical object, an
arrangement suggested by Awodey in 2015. Notably, because semantic
computability arguments are not evaluation-based, cubical canonicity proofs in
this style (e.g., that of Sterling, Angiuli, and Gratzer
\citep{sterling-angiuli-gratzer:2019}) entirely sidestep the evaluation
coherence difficulties of prior work
\citep{huber:2018,angiuli-favonia-harper:2018}.

The passage to presheaves of elements is not a novel feature of cubical
canonicity; it appears already in normalization proofs for ordinary type theory,
in the guise of Kripke logical relations of varying arities
\citep{jung-tiuryn:1993}. Because normalization is by definition a
characterization of open terms, one must necessarily consider the presheaf of
elements of a type relative to \emph{all} contexts; but in light of the fact
that normal forms are not closed under substitutions of terms for variables, one
considers only a restricted class of substitutions (e.g., weakenings, injective
renamings, or all variable renamings).

Following Tait \citep{tait:1967}, the normal forms of type theory are defined
mutually as the \emph{neutral forms} $\Ne{}{A}$ (variables and eliminations thereof)
and the \emph{normal forms} $\Nf{A}$ (constants and constructors applied to
normals) of each open type $A$. Proof-irrelevant normalization arguments then
establish that every neutral term is computable (via \emph{reflection}
$\Reflect{A}{}$), and that every computable term has a normal form (via
\emph{reification} $\Reify{A}$). Proof-relevant normalization arguments follow
the same yoga of reflection and reification, except that we speak not of the
subset of neutral terms but rather the collection of neutrals and normals
encoding each equivalence class of terms \citep{coquand:2019}.

\subsubsection{Related work on gluing for type theory}

The past forty years have brought a steady stream of research developing the
\emph{gluing} perspective on logical
relations~\citep{freyd:1978,crole:1993,altenkirch-hofmann-streicher:1995,streicher:1998,fiore:2002};
however, only in the past several years has our understanding of the syntax and
semantics of dependent
types~\citep{awodey:2018:natural-models,newstead:2018,altenkirch-kaposi:2016,uemura:2019,gratzer-sterling:2020}
caught up with the mathematical tools required to advance a truly objective
metatheory of dependent type theory.

In particular, Coquand's analysis \citep{coquand:2019} of proof-relevant
canonicity and normalization arguments for dependent type theory in terms of
categorical gluing was the catalyst for a number of recent works that obtain
non-trivial metatheorems for dependent type theory by semantic means, although
some years earlier \citet{shulman:2015} had already used gluing to prove
a homotopy canonicity result for a univalent type theory.

\citet{uemura:2017} proved a general gluing theorem for certain dependent
type theories in the language of Shulman's type theoretic fibration categories;
\citet{kaposi-huber-sattler:2019} proved a similar
result in the language of categories with families.
\citet{coquand-huber-sattler:2019} employed gluing to
prove a homotopy canonicity result for a version of cubical type theory that
omits certain computation rules,
and \citet{kapulkin-sattler:2019} used gluing to prove
homotopy canonicity for homotopy type theory (as famously conjectured by
Voevodsky).
\citet{sterling-angiuli-gratzer:2019} adapted
Coquand's gluing argument to prove the first non-operational strict canonicity
result for a cubical type theory.
\citet{gratzer-kavvos-nuyts-birkedal:2020} used gluing to prove
canonicity for a general \emph{multi-modal} dependent type
theory.
\citet{sterling-harper:2020} employ a different gluing
argument to establish a proof-relevant generalization of the Reynolds
Abstraction Theorem for a calculus of ML modules.
\citet{gratzer-sterling:2020} used gluing to establish the conservativity of higher-order
judgments for dependent type theories.

\subsubsection{What are the neutrals of cubical type theory?}\label{sec:what-are-the-neutrals}

Today's obstacles to proving cubical normalization are entirely different from
the obstacles faced in the first proofs of cubical canonicity
\citep{huber:2018,angiuli-favonia-harper:2018}. As we have already discussed,
coherence of evaluation is a non-issue for semantic computability;
moreover, as normalization already descends under binders, this feature of
coercion poses no additional difficulty.

However, cubical type theory includes a number of open judgmental equalities
that challenge the yoga of reflection and reification. Consider the rule that
applying any element, even a variable, of type $\Con{path}\prn{A,a_0,a_1}$ to
$0:\II$ (resp., $1:\II$) equals $a_0$ (resp., $a_1$). Whereas application of
neutrals (e.g., variables) to normal forms (e.g., constants) is typically
irreducible, here path application of a variable to a constant (but \emph{not}
to a variable) is a redex which may uncover further redexes:
\[
  x:\Con{path}\prn{\lambda\_.\mathbb{N}\to\mathbb{N},\Con{fib},\Con{fib}} \vdash
  x\prn{0}\prn{7} = 13 : \mathbb{N}
\]

One might imagine defining the normal form of path application by a case split
on the elements of $\II$ (sending $0,1:\II$ to the normal form of
$\Con{fib}\prn{7}$, and $i:\II$ to a neutral application), but such a case split
requires us to model $\II$ as a coproduct, which will not be tiny, preventing us
from obtaining a universe of Kan types following
\citet{lops:2018}.

Similar issues arise with a number of equations in both Cartesian and De Morgan
cubical type theory; in fact, the Cartesian variant is \emph{a priori} more
challenging in this regard because contraction of interval variables $i,j:\II$
may also induce computation (e.g., in $\Con{coe}\prn{A,i,j,a}$).

In this paper, we index neutrals $\Ne{\phi}{A}$ by a proposition $\phi$
representing their \emph{locus of instability}, or where they cease to be
neutral. For example, path application sends a $\phi$-unstable neutral of type
$\Con{path}\prn{A,a_0,a_1}$ and an element $r:\II$ to a
$\prn{\phi\lor\prn{r=0}\lor\prn{r=1}}$-unstable neutral of type $A\prn{r}$.
Reflection $\Reflect{A}{\phi}$ operates on \emph{stabilized neutrals}, pairs of
a neutral $\Ne{\phi}{A}$ with a proof that the term is computable under $\phi$.
In the case of path application $x\prn{r}$, one must provide computability
proofs for $a_0,a_1$ under the assumption $r=0\lor r=1$.

Terms in the ordinary fragment are never unstable (hence $\phi=\bot$), in which
case a stabilized neutral is a neutral in the ordinary sense; ``neutrals'' with
cubical redexes (such as $x\prn{0}$) have $\phi=\top$, in which case their
stabilized neutral is just a computability proof (and $\Reflect{A}{\top}$ is the
identity).  To our knowledge, this is the first time that computability data
appears in the domain of the reflection operation.

\subsection{Contributions}

We establish the normalization theorem (\cref{thm:soundness}) for Cartesian
cubical type theory closed under $\Pi$, $\Sigma$, path, glue, and a higher
inductive circle type, using a cubical extension of synthetic Tait
computability \citep{sterling-harper:2020}; the new idea on which our argument
hinges is the concept of \emph{stabilized neutrals} described above. As
corollaries to our main result, we obtain the admissible injectivity of type
constructors (\cref{thm:injectivity}) as well as an algorithm to decide
judgmental equality (\cref{thm:dec-eq}).

The present paper does not describe universes or the modifications necessary to
prove normalization for De Morgan cubical type theory; but note that univalence
can be stated without universes, as we have done here. The novel aspects of
cumulative, univalent universes are already supported because of the tininess
of the interval and the account of glue types; the difference is that the
operator projecting a normal type from a normalization structure of size
$\alpha$ must be generalized over $\beta\geq\alpha$. Our argument carries over
\emph{mutatis mutandis} to a normalization proof for De Morgan cubical type
theory.

In \cref{sec:syntax} we discuss the syntax of Cartesian cubical type theory and
its situation within a dependently sorted logical framework. In \cref{sec:stc},
we axiomatize a cubical version of synthetic Tait computability
(STC)~\citep{sterling-harper:2020}, a modal type theory of synthetic logical
relations suitable for proving syntactic metatheorems; we construct in cubical
STC a ``normalization model'' of cubical type theory displayed over the generic
model. In \cref{sec:computability-topos} we construct a topos model of cubical
STC, which takes us the remaining distance to the main results of this paper,
which are described in \cref{sec:normalization-result}.
\ifarXiv
\else
  For space reasons, some details are available in the appendix
  \citep{sterling-angiuli:2021:extended}.
\fi

\section{Cartesian cubical type theory}\label{sec:syntax}

We define the subject of our normalization theorem, intensional Cartesian
cubical type theory, as a locally Cartesian closed category of judgments
$\ThCat$ generated by the signature in
\cref{fig:cubical-tt:judgments,fig:cubical-tt:connectives}. Readers may consult
\citep{abcfhl:2019} and \citep[Appendix B]{angiuli:2019} for further exposition,
including rule-based presentations.

\begin{figure*}
  \begin{gather*}
    \begin{aligned}
      \II,\FF,\Tp &: \Jdg\\
      \brk{-} &: \IHom{\FF}{\Jdg}\\
      \Tm &: \IHom{\Tp}{\Jdg}\\
      0,1 &: \II\\
      \prn{=} &: \IHom{\prn{\II\times\II}}{\FF}\\
      \prn{\land_\FF},\prn{\lor_\FF} &: \IHom{\prn{\FF\times\FF}}{\FF}\\
      \prn{\forall_\II} &: \IHom{\prn{\IHom{\II}{\FF}}}{\FF}
    \end{aligned}
    \qquad\qquad
    \begin{aligned}
      \_ &: \Params{\phi} {\Prod{p,q:\brk{\phi}}{p =\Sub{\brk{\phi}} q}}\\
      \_ &: \Params{\phi,\psi}{\prn{\brk{\phi}\cong\brk{\psi}}\cong\prn{\phi =_{\FF} \psi}}\\
      \_ &: \Params{\phi}\prn{\Prod{i:\II}{\brk{\phi\prn{i}}}} \cong \brk{\forall_\II \phi}\\
      \_ &: \Params{r,s}\prn{r=_\II s}\cong \brk{r=s} \\
      \_ &: \Params{\phi,\psi}\prn{\brk{\phi}\times\brk{\psi}} \cong \brk{\phi \land_\FF \psi} \\
      \_ &: \Params{\phi_0,\phi_1}\IHom{\brk{\phi_i}}{\brk{\phi_0\lor_\FF\phi_1}}
    \end{aligned}
    \\[8pt]
    \begin{aligned}
      \multispan{2}{\underline{\textbf{For each judgment $\Kwd{J}\in\brc{\II,\FF,\Tp,\brk{\phi},\Tm\prn{A}}$:}\hfill}}\\
      \Con{abort}\Sub{\Kwd{J}} &: \IHom{\brk{0=1}}{\ObjTerm{}\cong\Kwd{J}}\\
      \Con{split}\Sub{\Kwd{J}} &:
      \Params{\phi,\psi}
      \Prod{x_\phi : \IHom{\brk{\phi}}{\Kwd{J}}}
      \Prod{x_\psi : \IHom{\brk{\psi}}{\Ext{\Kwd{J}}{\phi}{x_\phi}}}
      \IHom{\brk{\phi\lor_\FF\psi}}{\Kwd{J}}\\
      \_ &: \Params{\phi,\psi}\Prod{\_:\brk{\phi}}{\Con{split}\Sub{\Kwd{J}}\prn{x_\phi,x_\psi} =\Sub{\Kwd{J}} x_\phi}\\
      \_ &: \Params{\phi,\psi}\Prod{\_:\brk{\psi}}{\Con{split}\Sub{\Kwd{J}}\prn{x_\phi,x_\psi} =\Sub{\Kwd{J}} x_\psi}\\
      \_ &: \Params{\phi,\psi,x} x =\Sub{\Kwd{J}} \Con{split}\Sub{\Kwd{J}}\prn{x,x}
    \end{aligned}
    \qquad
    \begin{aligned}
      &\\
      \Ext{\Kwd{J}}{\phi}{x_\phi} &:= \Sum{x:\Kwd{J}}{\Prod{p:\brk{\phi}}{x=\Sub{\Kwd{J}}x_\phi\prn{p}}}\\
      \FakeFalse &:= \prn{0=1}\\
      \partial{i} &:= \prn{i=0}\lor_\FF\prn{i=1}\\
      \brk{}\Sub{\Kwd{J}} &:= \Con{abort}\Sub{\Kwd{J}}\\
      \brk{\phi\hookrightarrow x_\phi, \psi\hookrightarrow x_\psi}\Sub{\Kwd{J}} &:= \Con{split}\Sub{\Kwd{J}}\prn{x_\phi,x_\psi}
    \end{aligned}
  \end{gather*}
  \caption{Basic judgmental structure of Cartesian cubical type theory.}
  \label{fig:cubical-tt:judgments}
\end{figure*}

\begin{figure*}
  \begin{align*}
    \Con{path} &: \IHom{\prn{\Sum{A:\IHom{\II}{\Tp}}{\Tm\prn{A\prn{0}}\times\Tm\prn{A\prn{1}}}}}{\Tp}\\
    \Pi,\Sigma &: \IHom{\prn{\Sum{A:\Tp}{\IHom*{\Tm\prn{A}}{\Tp}}}}{\Tp}\\
    \Con{glue} &:
    \Prod{\phi:\FF}
    \Ext{
      \IHom{
        \prn{
          \Sum{B:\Tp}
          \Sum{A : \IHom{\brk{\phi}}{\Tp}}
          \Prod{\_:\brk{\phi}}{
            \Tm\prn{\Con{Equiv}\prn{A,B}}
          }
        }
      }{
        \Tp
      }
    }{\phi}{
      \lambda \prn{B,A,f}. A
    }
    \\
    \Con{S1} &: \Tp\\
    \Con{path}/\Con{tm} &:
      \Params{A,a_0,a_1}
      \prn{
        \Prod{i:\II}{
          \Ext{\Tm\prn{A\prn{i}}}{\partial{i}}{
            \brk{\overline{i=\epsilon\hookrightarrow a_\epsilon}}\Sub{\Tm\prn{A\prn{i}}}
          }
        }
      }
      \cong
      \Tm\prn{\Con{path}\prn{A,a_0,a_1}}
    \\
    \Pi/\Con{tm} &:
      \Params{A,B}
      \prn{\Prod{x:\Tm\prn{A}}{\Tm{\prn{B\prn{x}}}}} \cong
      \Tm\prn{\Pi\prn{A,B}}
    \\
    \Sigma/\Con{tm} &:
      \Params{A,B}
      \prn{\Sum{x:\Tm\prn{A}}{\Tm{\prn{B\prn{x}}}}} \cong
      \Tm\prn{\Sigma\prn{A,B}}
    \\
    \Con{glue}/\Con{tm} &:
      \Params{\phi,B,A,f}
      \Ext{
        \prn{
          \Sum{a : \Prod{\_:\brk{\phi}}{\Tm\prn{A}}}{
            \Ext{\Tm\prn{B}}{\phi}{f\prn{a}}
          }
        }
        \cong
        \Tm\prn{\Con{glue}\prn{\phi,B,A,f}}
      }{\phi}{
        \lambda\prn{a,b}.a
      }
    \\
    \Con{base} &: \Tm\prn{\Con{S1}}\\
    \Con{loop} &: \Prod{i:\II}{\Ext{\Tm\prn{\Con{S1}}}{\partial{i}}{\Con{base}}}\\
    \Con{ind}\Sub{\Con{S1}} &:
      \Prod{C : \IHom{\Tm\prn{\Con{S1}}}{\Tp}}
      \Prod{b : \Tm\prn{C\prn{\Con{base}}}}
      \Prod{
        l : \Prod{i:\II}{
          \Ext{
            \Tm\prn{C\prn{\Con{loop}\prn{i}}}
          }{\partial{i}}{b}
        }
      }
      \Prod{x : \Tm\prn{\Con{S1}}}
      \Tm\prn{C\prn{x}} \\
    \_ &:
      \Params{C,b,l}
      \Con{ind}\Sub{\Con{S1}}\prn{C,b,l,\Con{base}}
      =\Sub{\Tm\prn{C\prn{\Con{base}}}} b\\
    \_ &:
      \Params{C,b,l,i}
      \Con{ind}\Sub{\Con{S1}}\prn{C,b,l,\Con{loop}\prn{i}}
      =\Sub{\Tm\prn{C\prn{\Con{loop}\prn{i}}}} l\prn{i} \\
    \Con{hcom} &:
      \Prod{A:\Tp}
      \Prod{r,s:\II}
      \Prod{\phi:\FF}
      \Prod{
        a :
        \Prod{i:\II}{
          \Prod{\_:\brk{i=r\lor_\FF\phi}}{\Tm\prn{A}}
        }
      }
      \Ext{\Tm\prn{A}}{r=s\lor_\FF\phi}{a\prn{s}}\\
    \Con{coe} &:
      \Prod{A:\IHom{\II}{\Tp}}
      \Prod{r,s:\II}
      \Prod{a:\Tm\prn{A\prn{r}}}
      \Ext{\Tm\prn{A\prn{s}}}{r=s}{a}
  \end{align*}

  \begin{align*}
    \Con{isContr} &: \IHom{\Tp}{\Tp} \\
    \Con{isContr} &:=
      \lambda A.\Sigma\prn{A,\lambda x.\Pi\prn{A,\lambda y.\Con{path}\prn{\lambda\_.A,x,y}}}
    \\[6pt]
    \Con{Equiv} &: \IHom{\Tp}{\IHom{\Tp}{\Tp}} \\
    \Con{Equiv} &:=
      \lambda A.\lambda B.
        \Sigma\prn{\Pi\prn{A,\lambda\_.B},\lambda f.\Pi\prn{B,\lambda b.\Con{isContr}
          \prn{\Sigma\prn{A,\lambda
          a.\Con{path}\prn{\lambda\_.B,f\prn{a},b}}}}}
    \\[6pt]
    \Con{unglue} &: \Params{\phi,B,A,f} \IHom{\Tm\prn{\Con{glue}\prn{\phi,B,A,f}}}{\Tm\prn{B}}\\
    \Con{unglue} &:= \lambda g.\pi_2\prn{{\Con{glue}/\Con{tm}}\Sup{-1}\prn{g}}
  \end{align*}

  \caption{Generating clauses in the signature for Cartesian cubical type theory pertaining to connectives. For space reasons, we omit the computation rules of the Kan operations for each connective, which can be found in \cite{abcfhl:2019}.}
  \label{fig:cubical-tt:connectives}
\end{figure*}

\subsection{LCCCs as a logical framework}

The primary aspects of a type theory are its judgments $A\ \mathsf{type}$ and $a
: A$ and their derivations, many of which require hypothetical judgments (e.g.,
$\lambda x.b:A\to B$ when $x:A\vdash b:B$). One typically restricts which
judgments may be hypothesized, allowing $\prn{x:A}$ but not $\prn{X\
\mathsf{type}}$, judgmental equalities $\prn{a=b:A}$, or hypothetical judgments.
These restrictions, realized by the notion of \emph{context}, are crucial to the
syntactic metatheorems on which implementations rely; for example, decidability
of equality requires that intensional type theory lacks a context former
(isomorphic to) $\gls{\Gamma, a = b : A}$.

Both the rule-based presentations and the common categorical semantics of type
theory---including categories with families~\citep{dybjer:1996}, natural
models~\citep{awodey:2018:natural-models}, and Uemura's recent generalization
thereof known as representable map categories~\citep{uemura:2019}---include a
notion of context as part of the definition of a type theory, and require these
contexts to be preserved by models and their homomorphisms.

In contrast, higher-order logical frameworks for defining type theories, such as
Martin-L\"of's logical framework~\citep{nordstrom-peterson-smith:1990} and the
Edinburgh Logical Framework~\citep{harper-honsell-plotkin:1993}, elevate the
\emph{judgmental} structure of a type theory; then, as explicated by
\citet{harper-honsell-plotkin:1993}, one may impose after the fact a collection
of LF contexts (or \emph{worlds}) relative to which adequacy and other
metatheorems hold \citep{harper-licata:2007}. These worlds, which can be seen to
correspond roughly to the \emph{arities} of \citet{jung-tiuryn:1993}, were
subsequently implemented in the Twelf proof
assistant~\citep{pfenning-schuermann:1999} as ``\texttt{\%worlds} declarations.''

In light of that perspective, we regard a notion of context as a structure
placed on a locally Cartesian closed \emph{category of judgments} $\CatIdent{T}$
of a type theory, whose objects and morphisms are (equivalence classes of)
judgments and deductions. The dependent products of $\CatIdent{T}$ encode
hypothetical judgments, and the finite limits both encode substitution and
judgmental equality; a notion of context is often a full subcategory
$\CatIdent{C}\subseteq\CatIdent{T}$ spanned by objects distinguished as contexts.

\begin{example}\label{eg:bare-mltt}
  The category of judgments $\CatIdent{T}$ of Martin-L\"of type theory without
  any types is the free LCCC generated by a single map $\Mor{\Tm}{\Tp}$. The
  category of contexts $\CatIdent{C}\subseteq\CatIdent{T}$ is inductively
  defined as the full subcategory spanned by the terminal object and any fiber
  of $\Mor{\Tm}{\Tp}$ over a context.

  Equality is undecidable in $\CatIdent{T}$ (as it has all finite limits), but
  \emph{is} decidable for \emph{terms and types in context}, objectified by the
  restricted Yoneda embedding $\Mor{\CatIdent{T}}{\Psh{\CatIdent{C}}}$ taking the
  judgment $\Tp : \CatIdent{T}$ to its ``functor of context-valued points''
  $\Hom{\CatIdent{C}}{\Tp} : \Mor{\OpCat{\CatIdent{C}}}{\SET}$. Proofs of
  decidability proceed by further restricting to a category of (contexts and)
  renamings $\CatIdent{R}$ along the forgetful map
  $\Mor{\CatIdent{R}}{\CatIdent{C}}$
  \citep{altenkirch-hofmann-streicher:1995,fiore:2002,coquand:2019}.
  \qed
\end{example}

A second distinction between LCCCs and (for instance)
Uemura's framework is that the latter is stratified to ensure that the use of
hypothetical judgment in a theory is strictly positive, whereas both LCCCs and
syntactic logical frameworks place no restriction on hypothetical judgments and
even allow binders of higher-level (e.g., in Martin-L\"of's $\Con{funsplit}$
operator~\citep{nordstrom-peterson-smith:1990}). However, using an Artin
gluing argument due to Paul Taylor~\citep{taylor:1999}, Gratzer and
Sterling~\citep{gratzer-sterling:2020} have observed that the extension of a
representable map category \`a la Uemura to a LCCC is conservative (in fact,
fully faithful), ensuring the adequacy of LCCC encodings of type theories.

We therefore define the category of judgments of Cartesian cubical type theory
as the LCCC $\ThCat$ generated by the signature in
\cref{fig:cubical-tt:judgments,fig:cubical-tt:connectives}; then, locally
Cartesian closed functors $\SynAlg : \Mor{\ThCat}{\CatIdent{E}}$ determine
algebras for that signature valued in $\CatIdent{E}$. Unlike homomorphisms of
models of type theory, such functors preserve higher-order judgments; note
however that we are proving a single theorem about the syntactic category
$\ThCat$, not studying the model theory of cubical type theory.

\subsection{The signature of Cartesian cubical type theory}

In \cref{fig:cubical-tt:judgments} we present the judgmental structure of
cubical type theory; we inherit from Martin-L\"of type theory the basic forms of judgment
$\Tp:\ThCat$ and $\Tm:\Sl*{\ThCat}{\Tp}$ classifying types and terms respectively, and add
three additional forms of judgment for cubical phenomena: $\II:\ThCat$ for elements of the interval,
$\FF:\ThCat$ for \emph{cofibrations} (or \emph{cofibrant propositions}), and
$\brk{-}:\Sl*{\ThCat}{\FF}$ for proofs of cofibrations. The standard notion of
context is generated by $\ObjTerm{}$ and context extension by $\prn{x:A}$,
$\prn{i:\II}$, and $\prn{\_:\brk{\phi}}$.%
  \footnote{Syntactic presentations typically write $\gls{\Gamma,\phi}$ for
  $\gls{\Gamma,\_:\brk{\phi}}$.}
In this paper, we will only consider a more restricted notion of \emph{atomic
context} (\cref{sec:atomic}) that plays a role analogous to the renamings of
\cref{eg:bare-mltt}.

Cofibrations are strict propositions; the cofibration classifier $\FF$ is
strictly univalent and closed under $\land_\FF$, $\lor_\FF$, $\forall_\II$, and
$=_\II$. We define $\land_\FF$, $\forall_\II$, and $=_\II$ in terms of the same
notions in $\ThCat$, but $\ThCat$ has no disjunction or empty type by which to
define $\lor_\FF$ or stipulate $\IHom{\prn{0=1}}{\bot}$. Instead, we axiomatize
these by eliminators $\Con{abort}\Sub{\Kwd{J}},\Con{split}\Sub{\Kwd{J}}$ for
each generating judgment $\Kwd{J}$.

The following notations in \cref{fig:cubical-tt:judgments} are used pervasively
throughout this paper. (At present, ``propositions'' are elements of $\FF$;
later they will be elements of a subobject classifier $\Omega$.)

\begin{notation}[Extent types \citep{riehl-shulman:2017}]
  Given a proposition $\phi$ and a \emph{partial element} $a_\phi :
  \IHom{\brk{\phi}}{A}$, we write $\Ext{A}{\phi}{a_\phi}$ for the collection of
  elements $a : A$ that restrict to $a_\phi$ under the assumption of
  $z:{\brk{\phi}}$. In other words:
  \[
    \Ext{A}{\phi}{a_\phi} := \Compr{a:A}{\forall z : \brk{\phi}. a = a_\phi\prn{z}}
  \]
  We will implicitly coerce elements of $\Ext{A}{\phi}{a_\phi}$ to $A$.
\end{notation}

\begin{notation}[Systems \citep{cchm:2017}]
  Let $\phi,\psi$ be propositions. Under $\_:\brk{\phi\lor\psi}$, given a pair
  of partial elements $a_\phi : \IHom{\brk{\phi}}{A}$ and $a_\psi :
  \IHom{\brk{\psi}}{A}$ that agree when $\_:\brk{\phi\land\psi}$, we write
  \[
    \brk{\phi\hookrightarrow a_\phi, \psi\hookrightarrow a_\psi} : A
  \]
  for the ``case split'' that extends $a_\phi,a_\psi$. Likewise, under the
  assumption $\_:\brk{\bot}$, we write $\brk{}:A$ for the unique element of $A$.
  We will leave abstraction and application over $z:\phi$ implicit; where it
  improves clarity, we may write the unary system $\brk{\phi \hookrightarrow a}$
  for $\Mute{\lambda z:\phi.}a$.
\end{notation}

In \cref{fig:cubical-tt:connectives} we define the connectives of cubical type
theory. We specify the elements of $\Pi$, $\Sigma$, $\Con{path}$, and
$\Con{glue}$ by isomorphisms whose underlying functions encode introduction and
elimination rules, and whose equations encode $\beta$ and $\eta$ rules; we will
leave the first three of these isomorphisms (and the projection from
equivalences to functions) implicit. The higher inductive circle $\Con{S1}$ has
constructors $\Con{base}$ and $\Con{loop}$ and an eliminator
$\Con{ind}\Sub{\Con{S1}}$ with computation rules. Finally, we specify the Kan
operations $\Con{hcom}$ and $\Con{coe}$; for space reasons we do not reproduce
the computation rules for $\Con{hcom}$ and $\Con{coe}$ in each type, which can
be found in \citet{abcfhl:2019}.

The signature for De Morgan cubical type theory
\citep{cchm:2017,coquand-huber-mortberg:2018} differs only in the structure
imposed on $\II$ and the types and computation rules of $\Con{hcom}$ and
$\Con{coe}$.

\section{Synthetic Tait computability}\label{sec:stc}

\begin{figure}
  \begin{center}
    \begin{tabular}{ll}
      \toprule
      \textbf{Axiom} & \textbf{Substantiation}\\
      \midrule
      \cref{ax:universes} & \cref{lem:gluing-psh}\\
      \cref{ax:interval} & \cref{con:glued-interval,cor:interval-tiny}\\
      \cref{ax:syn} & \cref{cmp:syntactic-open}\\
      \cref{ax:syntactic-algebra} & \cref{rem:syntactic-algebra,con:glued-interval}\\
      \cref{ax:locality} & \cref{lem:0=1-implies-syn}\\
      \cref{ax:var}&\cref{con:var,rem:glued-variables}\\
      \bottomrule
    \end{tabular}
  \end{center}

  \caption{A dictionary between the axioms of \cref{sec:stc} and their substantiations in the category of computability structures.}
  \label{fig:stc:dictionary}
\end{figure}

In this section, we axiomatize a category $\CatSTC$ whose internal language
provides a ``type theory of proof-relevant logical relations'' \`{a} la Sterling
and Harper's synthetic Tait computability \citep{sterling-harper:2020}. Inside
that type theory, we then define a reflection--reification computability model
of cubical type theory from which we will derive normalization. We defer to
\cref{sec:computability-topos} an explicit construction of $\CatSTC$ as a
category of computability structures; \cref{fig:stc:dictionary} provides forward
references to our justifications of the axioms.

We begin by assuming that $\CatSTC$ satisfies Giraud's axioms \citep{sga:4}; all
such categories interpret extensional Martin-L\"of type theory extended by a
strictly univalent universe $\Omega$ of all proof-irrelevant propositions.
Next, we assume that $\CatSTC$ contains a cumulative hierarchy of universes
whose elements satisfy a strictification axiom introduced by \citet{bbcgsv:2016,orton-pitts:2016}.

\begin{definition}\label{def:op-universe}
  An \emph{strong universe} is a type theoretic universe $\UU$ strictly closed
  under dependent products, dependent sums, inductive types, quotients, and the subobject
  classifier, such that the following additional \emph{strict gluing} axiom
  holds:
  \begin{quote}
    Given $A:\UU$, write $\Con{Iso}\prn{A} := \Sum{B:\UU}\prn{A\cong B}$
    for the type of $\UU$-isomorphs of $A$. For any proposition $\phi:\Omega$,
    there is a section to the weakening map
    $\Mor{\Con{Iso}\prn{A}}{\IHom*{{\phi}}{\Con{Iso}\prn{A}}}$.
  \end{quote}
\end{definition}

\begin{axiom}\label{ax:universes}
  There exists a cumulative hierarchy of strong universes
  $\UU_0\subseteq\UU_1\ldots$ in $\CatSTC$ such that every map in $\CatSTC$ is
  classified by some $\UU_i$.
\end{axiom}

We schematically write $\UU,\VV$ for arbitrary universes in the hierarchy
specified by \cref{ax:universes}.

\begin{axiom}\label{ax:interval}
  There exists a tiny~\citep{lawvere:1980,yetter:1987} interval object $\II:\UU$
  with two endpoints $0,1:\II$.
\end{axiom}

What it means for the interval to be \emph{tiny} is that the exponential functor
$\Mor[\IHom*{\II}{-}]{\CatSTC}{\CatSTC}$ has a right adjoint $\Root*{-}$.
Equivalently, the exponential functor preserves colimits.

\subsection{Modalities for syntax and semantics}\label{sec:stc:modalities}

The central assumption of synthetic Tait computability is the existence of an
uninterpreted proposition $\Syn$ from which we will generate the modal
syntax--semantics duality.

\begin{axiom}\label{ax:syn}
  There exists a proposition $\Syn : \Omega$.
\end{axiom}

The proposition $\Syn$ generates complementary \emph{open} and \emph{closed} lex
idempotent modalities $\Op,\Cl$ that we interpret as respectively projecting the
syntactic and semantic aspects of a given computability structure. Because lex
modalities descend to the slices in a fibered way, we can use $\Op,\Cl$
na\"ively in the internal language of $\CatSTC$
\citep{bmss:2011,rijke-shulman-spitters:2017}; however, we find it most
convenient to begin by considering the universes $\UUSyn,\UUSem$ of
``syntactic'' and ``semantic'' types.

\subsubsection{Universe of syntactic types}

Given a universe $\UU:\VV$, we define the universe $\UUSyn:\VV$ of syntactic
types together with its (dependent) modality; the following definitions are
justified by strict gluing (\cref{ax:universes}), setting
$\phi:=\Syn$.
\begin{gather*}
  \begin{aligned}
    \UUSyn &: \Ext{\VV}{\Syn}{\UU}\\
    \UUSyn &\cong \IHom{\Syn}{\UU}
  \end{aligned}
  \quad
  \begin{aligned}
    \Op &: \Ext{\IHom{\UU}{\UUSyn}}{\Syn}{\lambda A.A}\quad\\
    \Op &\cong \lambda A.\lambda\_:\Syn.A
  \end{aligned}
  \\
  \begin{aligned}
    \Con{el}\Sub{\Op} &: \Ext{\IHom{\UUSyn}{\UU}}{\Syn}{\lambda A.A}\\
    \Con{el}\Sub{\Op} &\cong \lambda A. \Prod{z:\Syn}{A\prn{z}}
  \end{aligned}
\end{gather*}

To see how strictification applies, observe that $\IHom{\Syn}{\UU}$ is an
isomorph of $\UU$ under the assumption $\_:\Syn$; we may therefore choose
$\UUSyn$ to be (totally) isomorphic to $\IHom*{\Syn}{\UU}$ and under $\_:\Syn$
strictly equal to $\UU$. The remaining definitions go through directly given
that $\UUSyn$.

Because it causes no ambiguity, we will leave the decoding $\Con{el}\Sub{\Op}$
implicit in our notations; furthermore, we will leave both abstraction and
application over $\Syn$ implicit.

Our use of strict gluing above can be summed up in the following
syntactic realignment lemma, the workhorse of synthetic Tait computability.

\begin{restatable}[Syntactic realignment~\citep{sterling-harper:2020,sterling-gratzer:2020:stc,sterling-angiuli-gratzer:2020}]{corollary}{CorRealignment}\label{cor:realignment}
  Given $A:\UU$, $A_\circ : \UUSyn$, and an isomorphism $f :
  \Op\prn{A_\circ\cong A}$, we may define a strictly aligned type $f^*A :
  \Ext{\UU}{\Syn}{A_\circ}$ and a strictly aligned isomorphism $f^\dagger :
  \Ext{f^*A\cong A}{\Syn}{f}$.
\end{restatable}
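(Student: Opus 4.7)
The plan is to reduce syntactic realignment to a single application of the strict gluing clause of \cref{def:op-universe} (guaranteed by \cref{ax:universes}) taken at the proposition $\phi := \Syn$. The setup of $\UUSyn$ has been arranged precisely so that under $\_:\Syn$ the universe $\UUSyn$ strictly coincides with $\UU$ and $\Op$ acts as the identity; the hypotheses $A_\circ : \UUSyn$ and $f : \Op\prn{A_\circ\cong A}$ therefore give us, under $\_:\Syn$, a $\UU$-type $A_\circ$ equipped with a $\UU$-isomorphism $f : A_\circ \cong A$.

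First I would repackage this data as a partial $\UU$-isomorph of $A$. Recalling that $\Con{Iso}\prn{A} := \Sum{B:\UU}\prn{A\cong B}$ classifies isomorphisms \emph{out of} $A$, the pair $\prn{A_\circ, f^{-1}}$ defined under $\_:\Syn$ constitutes a partial element of $\IHom*{\Syn}{\Con{Iso}\prn{A}}$; the inversion of $f$ is used solely to match the orientation built into $\Con{Iso}$.

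Next I would invoke the strict gluing axiom with $\phi := \Syn$ and the given $A$ to obtain a section of the weakening map $\Mor{\Con{Iso}\prn{A}}{\IHom*{\Syn}{\Con{Iso}\prn{A}}}$, and evaluate it on the partial element above. The result is a total pair $\prn{f^*A, g} : \Con{Iso}\prn{A}$ whose components strictly extend $\prn{A_\circ, f^{-1}}$ over $\Syn$. In particular $f^*A : \Ext{\UU}{\Syn}{A_\circ}$, as required; setting $f^\dagger := g^{-1}$ then yields a strictly aligned isomorphism $\Ext{f^*A\cong A}{\Syn}{f}$, using that the inversion of $\UU$-isomorphisms commutes definitionally with restriction along $\Syn$.

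I do not anticipate any substantial obstacle: the entire content of the corollary is what strict gluing directly provides, and the only bookkeeping consists in unfolding $\UUSyn$ via its isomorphism with $\IHom{\Syn}{\UU}$ (leaving $\Con{el}\Sub{\Op}$ implicit) and tracking the orientation of $\Con{Iso}$. The result is stated as a reusable lemma because it is invoked pervasively throughout the normalization argument as the standard mechanism for manufacturing strictly aligned semantic types from partial syntactic data.
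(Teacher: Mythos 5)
Your proposal is correct and matches the paper's intended argument: the corollary is exactly a packaging of the strict gluing clause of \cref{def:op-universe} instantiated at $\phi:=\Syn$, which is why the paper introduces it with ``our use of strict gluing above can be summed up in the following syntactic realignment lemma'' and offers no further proof. Your handling of the orientation of $\Con{Iso}\prn{A}$ (gluing along $f^{-1}$ and inverting the resulting total isomorphism) is the right bookkeeping.
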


\begin{remark}
  The syntactic modality commutes with
  dependent products, dependent sums, equality, etc.
\end{remark}

Then we axiomatize the existence of an algebra for the signature of Cartesian
cubical type theory in $\CatSTC$. One can internalize as a dependent record the
collection $\ThMod{\VV}$ of $\ThCat$-algebras/models valued in types classified
by any universe $\VV$, writing $\SynAlg.\Tp$, $\SynAlg.\Tm$, etc.\ for each
component. In \cref{ax:syntactic-algebra} below, we require a $\ThCat$-model $\SynAlg$ valued in $\UUSyn$, such that
$\SynAlg.\II$ is the syntactic part of the interval of \cref{ax:interval}.

\begin{axiom}\label{ax:syntactic-algebra}
  There exists a $\ThCat$-model $\SynAlg : \ThMod{\UUSyn}$
  such that $\Op\prn{\SynAlg.\II = \II}$.
\end{axiom}

\subsubsection{Universe of semantic types}

A type $A:\UU$ is called \emph{semantic} (or \emph{$\Op$-connected})
when it has no syntactic component, i.e., we have an isomorphism
$\ObjTerm{}\cong \Op{A}$. Using this idea as a prototype, we define a dual
universe of semantic types:
\begin{gather*}\small
  \begin{aligned}
    \UUSem &: \Ext{\VV}{\Syn}{\ObjTerm{}}\\
    \UUSem &\cong \Ext{\UU}{\Syn}{\ObjTerm{}}
  \end{aligned}
  \quad
  \begin{aligned}
    \Cl &: \IHom{\UU}{\UUSem}\\
    \Cl &\cong \lambda A. A\sqcup_{A\times\Syn}\Syn
  \end{aligned}
  \quad
  \begin{aligned}
    \Con{el}\Sub{\Cl} &: \IHom{\UUSem}{\UU}\\
    \Con{el}\Sub{\Cl} &\cong \lambda A. A
  \end{aligned}
\end{gather*}

Above we are writing $A\sqcup_{A\times\Syn}\Syn$ for the pushout of the two product projections from $A\times\Syn$.

The definitions of $\UUSem, \Cl, \Con{el}\Sub{\Cl}$ are likewise justified by
syntactic realignment (\cref{cor:realignment}): fixing $\_:\Syn$ we note that
each of the types above becomes a singleton, so it can be aligned to restrict
to $\ObjTerm{}$ strictly. As with the syntactic modality, we leave the decoding
$\Con{el}\Sub{\Cl}$ implicit.

\begin{warn}
  The semantic modality commutes with dependent sums and equality, but not much else.
\end{warn}

\subsection{Cofibrations and locality}\label{sec:cof}

We construct the universe of cofibrations in two steps: first we define a
universe of propositions $\ExtFF : \Ext{\UU}{\Syn}{\SynAlg.\FF}$, and then we
constrain it to a subclass $\FF\subseteq\ExtFF$ generated by equality of the
interval, disjunction, conjunction, and universal quantification over the
interval. Constraining $\FF$ in this way will allow us to define an external
algorithm to decide equality underneath a cofibration (\cref{thm:dec-eq}).
\begin{align*}
  \ExtFF &: \Ext{\UU}{\Syn}{\SynAlg.\FF}\\
  \ExtFF
    &\cong \Sum{
      \phi:\SynAlg.\FF
    }{
      \Ext{\Omega}{\Syn}{\SynAlg.\brk{\phi}}
    }
\end{align*}

It is trivial to close $\ExtFF$ under conjunction, equality of the interval, and
universal quantification over the interval.

\begin{notation}
  The canonical map $\prn{- = \top_\ExtFF} : \IHom{\ExtFF}{\Omega}$ is a
  suitable decoding function; we treat it as an implicit coercion.
\end{notation}

As in \cref{fig:cubical-tt:judgments}, the difficult part is to close $\ExtFF$
under disjunction and to enforce $0\neq 1$; because
$\SynAlg.\Con{split}\Sub{\Kwd{J}}$ only eliminates into components of $\SynAlg$,
the ``disjunction'' $\SynAlg.\prn{\lor_\ExtFF}$ is not even a disjunction relative
to types in $\UUSyn$, much less in all of $\UU$ (and similarly for
$\SynAlg.\Con{abort}\Sub{\Kwd{J}}$ and $\SynAlg.\FakeFalse$).

\begin{construction}[Disjunction]\label{con:disj}
  We explicitly glue together the syntactic disjunction with the semantic
  disjunction; to ensure that the resulting proposition is aligned over the
  (weaker) syntactic disjunction, we place the semantic disjunction underneath
  the modality $\Cl$ to force it to become $\Op$-connected.
  \begin{align*}
    \prn{\lor_\ExtFF} &: \Ext{\IHom{\ExtFF\times\ExtFF}{\ExtFF}}{\Syn}{\SynAlg.\prn{\lor_\FF}}\\
    \phi\lor_\ExtFF\psi &=
    \prn{
      \phi\mathbin{\SynAlg.\lor_\FF}\psi,
      \SynAlg.\brk{\phi\mathbin{\SynAlg.\lor_\FF}\psi}
      \land
      \Cl\prn{
        \brk{\phi}\lor\brk{\psi}
      }
    }
  \end{align*}
  No realignment is required, because $\Omega$ is strictly univalent.
\end{construction}

We may then define the universe of cofibrations $\FF\subseteq\ExtFF$ to be the
smallest subobject of $\ExtFF$ closed under the following rules:
\begin{mathpar}
  \inferrule{
    z:\Syn\\
    \phi : \ExtFF
  }{
    \phi\in\FF
  }
  \and
  \inferrule{
    \phi\in\FF
    \\
    \psi\in\FF
  }{
    \phi\land_\ExtFF \psi\in\FF
    \qquad
    \phi\lor_\ExtFF \psi\in\FF
  }
  \and
  \inferrule{
    \forall i:\II. \prn{\phi\prn{i}\in\FF}
  }{
    \prn{\forall i.\phi\prn{i}}\in \FF
  }
  \and
  \inferrule{
    r,s:\II
  }{
    \prn{r = s}\in \FF
  }
\end{mathpar}

To avoid confusion, we will write $\land_\FF,\lor_\FF : \FF\times\FF\to\FF$ for
the maps induced by the closure of $\FF$ under $\land_\ExtFF,\lor_\ExtFF$. Likewise we
define $\top_\FF = (1=1)$ and $\FakeFalse = (0=1)$ in $\FF$. We observe that
the universe of cofibrations can be aligned over $\SynAlg.\FF$, i.e., we have
$\FF : \Ext{\UU}{\Syn}{\SynAlg.\FF}$; this follows from the fact that
$\prn{\phi\in\FF} =_{\Omega}\top$ for any $\phi:\ExtFF$ assuming $z:\Syn$.

We must define semantic conditions for types that are \emph{local} with respect to
$\lor_\FF$ and $\FakeFalse$, in the sense that they behave as though the
positive cofibrations satisfy universal properties.

\begin{definition}\label{def:01-connected}
  A type $A:\UU$ is called \emph{$\FakeFalse$-connected} when it behaves as if
  $0\not=1$, i.e., we have $\IHom*{\brk{\FakeFalse}}{A} \cong \ObjTerm$.
\end{definition}

\begin{definition}\label{def:local}
  A type $A:\UU$ is called \emph{$\FF$-local} when it is $\FakeFalse$-connected and,
  for any two cofibrations $\phi,\psi : \FF$, the type $A$ is right-orthogonal
  to the canonical map $\Mor{\brk{\phi}\lor\brk{\psi}}{\brk{\phi\lor_\FF\psi}}$
  in the sense that the dotted map below always exists uniquely:
  \[
    \begin{tikzpicture}[diagram]
      \SpliceDiagramSquare{
        nw = \brk{\phi}\lor\brk{\psi},
        sw = \brk{\phi\lor_\FF\psi},
        se = \ObjTerm,
        ne = A,
        north = \text{\scriptsize{$\brk{\phi\hookrightarrow a_\phi, \psi\hookrightarrow a_\psi}$}},
        width = 4cm,
      }
      \path[->,exists] (sw) edge node [desc] {\scriptsize{$\brk{\phi\hookrightarrow a_\phi, \psi\hookrightarrow a_\psi}_A$}} (ne);
    \end{tikzpicture}
  \]
\end{definition}

A necessary condition for a type $A:\UU$ being $\FF$-local is that its syntactic
part $\Op{A}$ is $\FF$-local. \cref{ax:locality} below implies that this
condition is sufficient, unfolding \cref{con:disj}.

\begin{axiom}\label{ax:locality}
  We have $\bot_\FF\leq\Syn$ in the internal logic.
\end{axiom}

\vspace{-0.6em} %
\begin{restatable}{corollary}{LemLocality}\label{lem:locality}
  A type $A:\UU$ in $\CatSTC$ is $\FF$-local if and only if $\Op{A}$ is $\FF$-local.
\end{restatable}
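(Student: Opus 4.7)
The plan is to prove both directions separately; the forward implication is essentially formal, whereas the reverse relies on the explicit description of cofibration disjunction from \cref{con:disj}.

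For the forward direction, I would observe that $\Op = \IHom{\Syn}{-}$ is right adjoint to the functor $-\times\Syn$ and hence preserves limits. Both clauses of $\FF$-locality (\cref{def:01-connected,def:local}) are right-orthogonality conditions, so they transfer automatically from $A$ to $\Op A$.

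The reverse direction splits into two parts. First, $\FakeFalse$-connectedness of $A$ follows directly from \cref{ax:locality}: since $\bot_\FF \leq \Syn$ internally, we have $\brk{\FakeFalse}\times\Syn\cong\brk{\FakeFalse}$, and therefore $\IHom{\brk{\FakeFalse}}{A} \cong \IHom{\brk{\FakeFalse}\times\Syn}{A} \cong \IHom{\brk{\FakeFalse}}{\Op A} \cong \ObjTerm$ by $\FakeFalse$-connectedness of $\Op A$. Second, for orthogonality against $\brk{\phi}\lor\brk{\psi}\to\brk{\phi\lor_\FF\psi}$, I would unfold \cref{con:disj}, which presents $\brk{\phi\lor_\FF\psi}$ as the conjunction $\SynAlg.\brk{\phi\mathbin{\SynAlg.\lor_\FF}\psi}\land\Cl\prn{\brk{\phi}\lor\brk{\psi}}$. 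Given compatible partial elements $a_\phi, a_\psi$, I would construct the required extension $\brk{\phi\lor_\FF\psi}\to A$ by combining two ingredients: (i) using $\FF$-locality of $\Op A$ to extend the images of $a_\phi, a_\psi$ under the unit $A\to\Op A$ along the syntactic disjunction $\SynAlg.\brk{\phi\mathbin{\SynAlg.\lor_\FF}\psi}$; (ii) using the pushout description $\Cl X = X\sqcup_{X\times\Syn}\Syn$ to glue the actual case-split of $a_\phi, a_\psi$ on $\brk{\phi}\lor\brk{\psi}$ with the restriction of (i) to $\Syn$, producing a map out of $\Cl\prn{\brk{\phi}\lor\brk{\psi}}$.

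The main obstacle is verifying the compatibility needed for the pushout gluing in (ii), on the intersection $\Syn\times\prn{\brk{\phi}\lor\brk{\psi}}$. This should reduce to the observation that the unit $A\to\Op A$ is an isomorphism under $\Syn$, so the extension from (i) restricts back to $a_\phi$ or $a_\psi$ on each branch of the real disjunction as required. Uniqueness of the extension then follows symmetrically, since any candidate must agree with (i) under $\Syn$ (by uniqueness of the $\Op A$-extension) and with (ii) along $\brk{\phi}\lor\brk{\psi}$ (by extending $a_\phi, a_\psi$), and is thus determined by the universal property of the pushout.
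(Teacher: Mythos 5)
Your proposal is correct and follows essentially the same route as the paper: the $\FakeFalse$-connectedness clause is handled exactly as in the paper via \cref{ax:locality}, and your gluing along the pushout presentation of $\Cl\prn{\brk{\phi}\lor\brk{\psi}}$ is just the paper's observation that this closed-modal proposition equals $\brk{\phi}\lor\brk{\psi}\lor\Syn$, followed by the same three-way system $\brk{\phi\hookrightarrow a_\phi,\psi\hookrightarrow a_\psi,\Syn\hookrightarrow\brk{\phi\hookrightarrow a_\phi,\psi\hookrightarrow a_\psi}\Sub{\Op{A}}}$. Your compatibility and uniqueness checks are the right ones (and are in fact slightly more explicit than the paper's).
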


The interpretation of every syntactic sort of Cartesian cubical type theory in
$\SynAlg$ can be seen to be $\FF$-local. Therefore, by \cref{ax:locality}, any
type $A$ whose syntactic part $\Op{A}$ is isomorphic to one of those sorts is
automatically $\FF$-local.

\subsection{Kan computability structures}

\begin{definition}
  We define $\UUTp$ to be the object of \emph{computability structures}, which
  pair a syntactic type $A:\SynAlg.\Tp$ with a total type
  aligned over its elements:
  \begin{align*}
    \UUTp &: \Ext{\VV}{\Syn}{\SynAlg.\Tp}\\
    \UUTp &\cong \Sum{
      A : \SynAlg.\Tp
    }{
      \Ext{\UU}{\Syn}{\SynAlg.\Tm\prn{A}}
    }
  \end{align*}
  We leave the projection $\Mor{\UUTp}{\UU}$ implicit in our notation.
\end{definition}

\begin{figure*}
  \begin{align*}
    \Con{HCom}&: \Mor{\UUTp}{\UUSem}\\
    \Con{Coe}&: \Mor{\IHom*{\II}{\UUTp}}{\UUSem}\\
    \Con{HCom}\prn{A} &\cong
    \Ext{
      \Prod{r,s:\II}{
        \Prod{\phi:\mathbb{F}}{
          \Prod{
            a:
            \Prod{i:\II}{
              \Prod{\_:\brk{i=r\lor_\FF\phi}}{
                A
              }
            }
          }{
            \Ext{A}{r=s\lor_\FF\phi}{
              a\prn{s}
            }
          }
        }
      }
    }{\Syn}{
      \SynAlg.\Con{hcom}_{A}
    }
    \\
    \Con{Coe}\prn{A} &\cong
    \Ext{
      \Prod{r,s:\II}{
        \Prod{a:A\prn{r}}{
          \Ext{A\prn{s}}{r=s}{a}
        }
      }
    }{\Syn}{\SynAlg.\Con{coe}\Sub{A}}
  \end{align*}
  \caption{Definitions of homogeneous composition structures for a
  computability structure and coercion structures for a line of computability
  structures; the use of $\UUSem$ indicates that these structures are
  \emph{$\Op$-connected}, i.e., equivalently classified by the
  subuniverse $\Ext{\UU}{\Syn}{\ObjTerm}$.}
  \label{fig:kan-structures}
\end{figure*}

\begin{definition}
  A \emph{homogeneous composition structure} on $A : \UUTp$ is an element of the
  type $\Con{HCom}\prn{A}$ defined in \cref{fig:kan-structures}; such a
  structure asserts the existence of an operation $\Con{hcom}_A$ that is aligned
  over the existing syntactic homogeneous composition operation. We define by
  realignment a weak classifying object $\UUTpHCom$ for computability structures
  equipped with a homogeneous composition structure:
  \begin{align*}
    \UUTpHCom &: \Ext{\VV}{\Syn}{\SynAlg.\Tp}\\
    \UUTpHCom &\cong \Sum{A : \UUTp}{\Con{HCom}\prn{A}}
  \end{align*}
  We leave the projection $\Mor{\UUTpHCom}{\UUTp}$ implicit.

  Likewise, a \emph{coercion structure} on a line of computability structures $A
  : \IHom{\II}{\UUTp}$ is an element of the type $\Con{Coe}\prn{A}$ also defined
  in \cref{fig:kan-structures}; such a structure provides a $\Con{coe}\Sub{A}$
  operation aligned over the existing syntactic coercion operation.
\end{definition}

Constructing a weak classifying object for coercion structures is more
challenging; we use the method of \citet{lops:2018}, which relies crucially on
the tininess of the interval.

\begin{construction}
  Using the right adjoint $\Root*{-}$ to $\IHom*{\II}{-}$ given by
  \cref{ax:interval}, we transpose the map
  $\Mor[\Con{Coe}]{\IHom*{\II}{\UUTpHCom}}{\UUSem}$ to obtain
  $\Mor[\Con{Coe}^\sharp]{\UUTpHCom}{\Root*{\UUSem}}$. Pulling back the ``root''
  of the generic family $\Mor{{\dot\UU}_{\Cl}}{\UUSem}$ along this map, we
  obtain a weak classifying object $\UUTpKan$ for computability structures with
  both homogeneous composition and coercion, which we call \emph{Kan}.
  \[
    \DiagramSquare{
      nw/style = pullback,
      nw = \UUTpKan,
      ne = \Root*{{\dot\UU}_{\Cl}},
      se = \Root*{\UUSem},
      sw = \UUTpHCom,
      south = \Con{Coe}^\sharp,
      width = 2.5cm,
      height = 1.5cm,
    }
  \]

  Because all the structures we are adding to $\UUTp$ remain $\Op$-connected, we
  may align this pullback as a large type $\UUTpKan :
  \Ext{\VV}{\Syn}{\SynAlg.\Tp}$. The left map $\Mor{\UUTpKan}{\UUTpHCom}$
  projects a homogeneous composition operation $\Con{hcom}_A$ for every Kan
  computability structure $A:\UUTpKan$; transposing the upstairs map, we see
  that each line of Kan computability structures $A:\IHom{\II}{\UUTpKan}$ also
  carries a coercion structure $\Con{coe}\Sub{A}$. We leave the composite
  projection $\Mor{\UUTpKan}{\UUTp}$ implicit.
\end{construction}

\subsection{Neutral and normal forms}

In order to axiomatize the neutral and normal forms, we will need a
computability structure of \emph{term variables}.

\begin{axiom}\label{ax:var}
  We assume a family of types $\Con{var} : \Prod{A:\SynAlg.\Tp}{\Ext{\UU}{\Syn}{\SynAlg.\Tm\prn{A}}}$.
\end{axiom}

\begin{figure}
  \begin{align*}
    \NfTp &: \Ext{\UU}{\Syn}{\SynAlg.\Tp}\\
    \Ne{\prn{-}} &: \Prod{\phi:\FF}{\Prod{A:\SynAlg.\Tp}{\Ext{\UU}{\phi\lor\Syn}{\SynAlg.\Tm\prn{A}}}}\\
    \Nf &: \Prod{A:\SynAlg.\Tp}{\Ext{\UU}{\Syn}{\SynAlg.\Tm\prn{A}}}\\
    \frk{var} &: \Params{A} \Ext{\IHom{\Con{var}\prn{A}}{\Ne{\bot_\FF}{A}}}{\Syn}{\lambda x.x}
  \end{align*}
  \caption{Axiomatization of the structure of normal and neutral forms.}
  \label{fig:ne-nf}
\end{figure}

\begin{figure*}[ht]
  \small
  \begin{nf-box}
    \begin{align*}
      \frk{path} &: \Ext{
        \IHom{
          \prn{
            \Sum{A : \IHom{\II}{\NfTp}}{
              \Nf{A\prn{0}}\times\Nf{A\prn{1}}
            }
          }
        }{\NfTp}
      }{\Syn}{
        \SynAlg.\Con{path}
      }
      \\
      \frk{plam} &:
      \Params{A}
      \Ext{
        \Prod{
          p :
          \Prod{i:\II}{
            \Nf{A\prn{i}}
          }
        }{
          \Nf{\SynAlg.\Con{path}\prn{A,p(0),p(1)}}
        }
      }{\Syn}{\lambda p.\lambda i.p\prn{i}}
      \\
      \frk{papp} &:
      \Params{\phi,A,a_0,a_1}
      \Ext{
        \IHom{
          \Ne{\phi}{\SynAlg.\Con{path}\prn{A,a_0,a_1}}
        }{
          \Prod{i:\II}{
            \Ne{\phi\lor_\FF\partial{i}}{A\prn{i}}
          }
        }
      }{\Syn}{\lambda p.\lambda i.p\prn{i}}
    \end{align*}
  \end{nf-box}

  \begin{align*}
    \Con{path} &: \Ext{
      \IHom{\prn{\Sum{A : \IHom{\II}{\Tp}} A\prn{0}\times A\prn{1}}}{\Tp}
    }{\Syn}{\SynAlg.\Con{path}}
    \\
    \brk{\Con{path}\prn{A,a_0,a_1}} &\cong
    \Prod{i:\II}{
      \Ext{A\prn{i}}{\partial{i}}{
        \brk{\overline{i=\epsilon\hookrightarrow a_\epsilon}}\Sub{\SynAlg.\Tm\prn{A\prn{i}}}
      }
    }
    \\
    \Con{hcom}\Sub{\Con{path}\prn{A,a_0,a_1}}\Sup{r\to s;\phi}p &=
    \lambda i.
    \Con{hcom}\Sub{A\prn{i}}\Sup{r\to s; \phi\lor_\FF\partial{i}}
    \lambda k.
    \brk{
      k=r\lor_\FF\phi\hookrightarrow p\prn{k},
      \overline{i=\epsilon\hookrightarrow a_\epsilon}
    }\Sub{\SynAlg.\Tm\prn{A\prn{i}}}
    \\
    \Con{coe}\Sub{\lambda j.\Con{path}\prn{\lambda i.A\prn{i,j},a_0\prn{j},a_1\prn{j}}}\Sup{r\to s}p &=
    \lambda i.
    \Con{com}\Sub{\lambda j.A\prn{i,j}}\Sup{r\to s; \partial{i}}
    \lambda j.\brk{
      j=r\to p\prn{i},
      \overline{i=\epsilon\hookrightarrow a_\epsilon\prn{j}}
    }\Sub{\SynAlg.\Tm\prn{A\prn{i,j}}}
    \\
    \NormTp{\Con{path}\prn{A,a_0,a_1}} &=
    \frk{path}\prn{\lambda i. \NormTp{A\prn{i}}, \Reify{A\prn{0}}a_0, \Reify{A\prn{1}}a_1}
    \\
    \Reflect{\Con{path}\prn{A,a_0,a_1}}{\phi}\NeGlue{p_0}{\phi}{p}
    &=
    \lambda i. \Reflect{A\prn{i}}{\phi\lor_\FF\partial{i}}{
      \NeGlue{\frk{papp}\prn{p_0,i}}{\phi\lor_\FF\partial{i}}{
        \brk{
          \phi\hookrightarrow p\prn{i},
          \overline{i=\epsilon\hookrightarrow a_\epsilon}
        }\Sub{\SynAlg.\Tm\prn{A\prn{i}}}
      }
    }
    \\
    \Reify{\Con{path}\prn{A,a_0,a_1}}p &=
    \frk{plam}\prn{
      \lambda i.
      \Reify{A\prn{i}}p\prn{i}
    }
  \end{align*}

  \caption{The cubical normalization structure for dependent path types.}
  \label{fig:norm-path}
\end{figure*}

\begin{figure*}
  \small
  \begin{nf-box}
    \scriptsize
    \begin{align*}
      \frk{glue} &:
      \Params{\phi}
      \Ext{
        \IHom{
          \prn{
            \Sum{B:\NfTp}
            \Sum{A : \IHom{\brk{\phi}}{\NfTp}}
            \Prod{\_:\brk{\phi}}{\Nf{\Con{Equiv}\prn{A,B}}}
          }
        }{
          \Con{nftp}
        }
      }{\Syn\lor\phi}{
        \brk{
          \Syn\hookrightarrow\SynAlg.\Con{glue}\prn{\phi,-},
          \phi\hookrightarrow \lambda\prn{B,A,f}.A
        }
      }
      \\
      \frk{englue} &:
      \Params{\phi,B,A,f}
      \Ext{
        \IHom{
          \prn{
            \Sum{a : \Prod{\_:\brk{\phi}}{\Nf\prn{A}}}{
              \Ext{\Nf\prn{B}}{\Syn\land\phi}{f\prn{a}}
            }
          }
        }{
          \Nf{\SynAlg.\Con{glue}\prn{\phi,B,A,f}}
        }
      }{\Syn\lor\phi}{
        \lambda\prn{a,b}.
        \brk{
          \Syn\hookrightarrow \SynAlg.\Con{glue/tm}\prn{a,b},
          \phi\hookrightarrow a
        }
      }
      \\
      \frk{unglue} &:
      \Params{\phi,B,A,f,\psi}
      \Ext{
        \IHom{\Ne{\psi}{\SynAlg.\Con{glue}\prn{\phi,B,A,f}}}{\Ne{\psi\lor_\FF\phi}{B}}
      }{\Syn}{
        \SynAlg.\Con{unglue}\prn{\phi,-}
      }
    \end{align*}
  \end{nf-box}

  \begin{align*}
    \Con{glue} &:
    \Ext{
      \Prod{\phi:\FF}
      \Ext{
        \IHom{
          \prn{
            \Sum{B:\Tp}
            \Sum{A : \IHom{\brk{\phi}}{\Tp}}
            \Prod{\_:\brk{\phi}}{
              \Con{Equiv}\prn{A,B}
            }
          }
        }{
          \Tp
        }
      }{\phi}{
        \lambda \prn{B,A,f}. A
      }
    }{\Syn}{\SynAlg.\Con{glue}}
    \\
    \brk{\Con{glue}\prn{\phi,B,A,f}} &\cong
    \Sum{a : \Prod{\_:\brk{\phi}}{A}}{
      \Ext{B}{\phi}{f\prn{a}}
    }
    \\
    \Con{hcom}\Sub{\Con{glue}\prn{\phi,B,A,f}}\Sup{r\to s; \psi}p &=
    \Mute{
      \gls{\text{omitted for brevity, see \cite{abcfhl:2019}}}
    }
    \\
    \Con{coe}\Sub{
      \lambda i.\Con{glue}\prn{\phi\prn{i},B\prn{i},A\prn{i},f\prn{i}}
    }\Sup{r\to s}f &=
    \Mute{
      \gls{\text{omitted for brevity, see \cite{abcfhl:2019}}}
    }
    \\
    \NormTp{\Con{glue}\prn{\phi,B,A,f}} &=
    \frk{glue}\prn{\phi,\NormTp{B},\NormTp{A},\Reify{\Con{Equiv}\prn{A,B}}{f}}
    \\
    \Reflect{\Con{glue}\prn{\phi,B,A,f}}{\psi} \NeGlue{g_0}{\psi}{g} &=
    \prn{
      \brk{
        \phi\hookrightarrow
        \Reflect{A}{\psi}{
          \NeGlue{g_0}{\psi}{g}
        }
      }
      ,
      \Reflect{B}{\psi}{
        \NeGlue{\frk{unglue}\prn{g_0}}{\psi\lor_\FF\phi}{
          \brk{
            \psi\hookrightarrow \pi_2\prn{g}
            ,
            \phi \hookrightarrow
            f\prn{
              \Reflect{A}{\psi}{\NeGlue{g_0}{\psi}{g}}
            }
          }\Sub{B}
        }
      }
    }
    \\
    \Reify{\Con{glue}\prn{\phi,B,A,f}} \prn{a,b} &=
    \frk{englue}\prn{
      \brk{\phi\hookrightarrow \Reify{A}{a}},
      \Reify{B}{b}
    }
  \end{align*}

  \caption{The cubical normalization structure for glue types.}
  \label{fig:norm-glue}
\end{figure*}
 
In \cref{fig:ne-nf}, we axiomatize the judgmental structure of the normal and
neutral forms of cubical type theory in the language of synthetic Tait
computability; \cref{fig:norm-path,fig:norm-glue} contain the normal and neutral
forms of path and glue types (within dashed boxes); $\Sigma$-types, $\Pi$-types,
and the circle are located in
\ifarXiv
  the appendix.
\else
  the appendix~\citep{sterling-angiuli:2021:extended}.
\fi
The main part of the normalization argument only needs constants of the kind
listed to exist, but we substantiate these constants with an external
inductive definition in \ifarXiv\cref{sec:explicit-nf}\else{the appendix}~\citep{sterling-angiuli:2021:extended}\fi.

As discussed in \cref{sec:what-are-the-neutrals}, the main difficulty in
adapting Coquand's semantic normalization argument~\citep{coquand:2019} to
cubical type theory is that neutral terms do \emph{not} evince a
cubically-stable aspect of the syntax of cubical type theory. The simplest
example of this behavior is the neutral form of a path application
$\frk{papp}\prn{p,i}$, which is ``neutral'' in the traditional sense only so
long as $i$ is neither $0$ nor $1$.

\begin{warn}
  It is reasonable but ultimately futile to try and restrict the second
  argument of $\frk{papp}$ to be a ``variable'' of some kind---in doing so, one
  refutes either the tininess of the interval or the existence of a Tait
  \emph{reflection} operation for paths.
\end{warn}

The failure of all previous attempts to isolate the neutral forms of cubical
type theory stems ultimately from an insistence on characterizing
\emph{positively} the conditions under which a term is neutral. We have taken
the opposite perspective, by indexing the neutrals in a ``locus of instability''
$\phi:\FF$ under which they \emph{cease} to be neutral; as soon as $\phi$
becomes true, the semantic information carried by $a :\Ne{\phi}{A}$ collapses to
a point. Our negative perspective suggests a way to ``stabilize'' a neutral form
by gluing computability data onto it along its locus of instability.

\begin{definition}[Stabilized neutrals]\label{def:sne}
  Let $A:\UUTp$ be a computability structure; a \emph{stabilized neutral} is a
  pair of a neutral $a_0:\Ne{\phi}{A}$ together with a computability datum
  $a:\Prod{\_:\brk{\phi}}{\Ext{A}{\Syn}{a_0}}$ defined on its locus of
  instability. We will write $\NeGlue{a_0}{\phi}{a}$ for such pairs, and obtain
  by realignment a type family of stabilized neutrals:
  \[\small
  \begin{aligned}
    \SNe{\phi} &: \Ext{\UUTp\to\UU}{\Syn\lor\phi}{
      \brk{\Syn\hookrightarrow\SynAlg.\Tm,\phi\hookrightarrow\lambda A.A}}\\
    \SNe{\phi}\prn{A} &\cong \Sum{a_0:\Ne{\phi}{A}}{\Prod{\_:\brk{\phi}}{\Ext{A}{\Syn}{a_0}}}
  \end{aligned}
  \]
\end{definition}

\subsection{Cubical normalization structures}

We now reach the central definition of this paper, that of a \emph{cubical
normalization structure}, a notion inspired by the Tait closure
condition~\citep{tait:1967} under which neutrals can be reflected to computable
elements and computable elements can be reified to normals, as presented for
instance by \citet{coquand:2019}. Our version of the Tait reflection operation
takes \emph{stabilized} neutrals to computable elements.

\begin{definition}
  A \emph{cubical normalization structure} $A:\Tp$ consists of the following data:
  \begin{align*}
    \brk{A} &: \UUTpKan\\
    \NormTp{A} &: \Ext{\NfTp}{\Syn}{\brk{A}}\tag{normal form}\\
    \Reflect{A} &: \Prod{\phi:\FF}
    \Ext{
      \IHom{\SNe{\phi}{\brk{A}}}{\brk{A}}
    }{\Syn\lor\phi}{\lambda a.a}
    \tag{reflect}\\
    \Reify{A} &:
    \Ext{
      \IHom{\brk{A}}{\Nf{\brk{A}}}
    }{\Syn}{\lambda a.a}
    \tag{reify}
  \end{align*}

  We define by realignment the large type $\Tp$ of cubical normalization
  structures, noting that the latter three components of a cubical normalization
  structure are $\Op$-connected:
  \[
    \Tp : \Ext{\VV}{\Syn}{\SynAlg.\Tp}
  \]
\end{definition}

\begin{remark}[Vertical maps]\label{rem:vertical}
  We refer to the reflection and reification maps as \emph{vertical}, in the
  sense that they are constrained to lie over the syntactic identity function.

  The role of verticality is to ensure that reification takes computability data
  for a given term to a normal form \emph{of the same term}, etc. Likewise, our
  presentation of the neutral and normal forms use extent types to express their
  relationship to the syntactic $\ThCat$-model without escaping the internal
  language of $\CatSTC$. In this way, extent types and vertical maps play a very
  important role in synthetic Tait computability.
\end{remark}

The main result of this section is the construction of a computability algebra
for Cartesian cubical type theory; this equips each syntactic type with a Kan
computability structure, normal form, and reflection and reification maps.

\begin{theorem}\label{thm:computability-algebra}
  We have a \emph{computability} $\ThCat$-model $\CmpAlg :
  \Ext{\ThMod{\UU}}{\Syn}{\SynAlg}$ aligned over the syntactic algebra.
\end{theorem}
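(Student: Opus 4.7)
The plan is to instantiate $\CmpAlg$ component by component, using the universal property of $\ThCat$ as the free LCCC on the signature of \cref{fig:cubical-tt:judgments,fig:cubical-tt:connectives}. For the basic judgmental sorts I would set $\CmpAlg.\II := \II$ (aligned with $\SynAlg.\II$ by \cref{ax:syntactic-algebra}), $\CmpAlg.\FF := \FF$ (the cofibration universe of \cref{sec:cof}), $\CmpAlg.\brk{-} := \brk{-}$, $\CmpAlg.\Tp := \Tp$ (cubical normalization structures), and $\CmpAlg.\Tm\prn{A} := \brk{A}$ for each $A:\Tp$. The remaining judgmental structure ($\Con{abort}$, $\Con{split}$, extent types) transfers from the ambient extensional type theory of $\CatSTC$ via the cofibration apparatus of \cref{sec:cof}; \cref{ax:locality} together with \cref{lem:locality} ensures that each interpreted sort is $\FF$-local, and is therefore closed under the eliminators specified by $\SynAlg$.

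For each type connective I would equip the carrier with a cubical normalization structure by structural recursion on the signature, following the templates in \cref{fig:norm-path,fig:norm-glue} (with $\Pi$, $\Sigma$, and $\Con{S1}$ treated analogously in the appendix). In each case the procedure is fourfold: (i) specify the carrier $\brk{-}$ using the carriers of the subordinate structures; (ii) define $\Con{hcom}$ and $\Con{coe}$ by the formulas of \cite{abcfhl:2019}, packaging them into a section of $\UUTpHCom$ and then of $\UUTpKan$ via the $\Con{Coe}^\sharp$ transposition through $\Root{-}$ (using the tininess of $\II$ from \cref{ax:interval}); (iii) define $\NormTp{-}$ by applying the corresponding $\frk{-}$ constant to recursive normalizations of arguments; (iv) define $\Reflect{-}{\phi}$ on stabilized neutrals (\cref{def:sne}) and $\Reify{-}$ on computable elements, observing that eliminators enlarge the locus of instability in a predictable way (e.g.\ $\phi\lor_\FF\partial{i}$ for path application and $\psi\lor_\FF\phi$ for unglue). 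Throughout, \cref{cor:realignment} is used to coerce each defined object to lie strictly over the corresponding component of $\SynAlg$, so the verticality condition of \cref{rem:vertical} is automatic.

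The hardest case will be glue types. Their Kan operations are notoriously intricate, and the reflection rule in \cref{fig:norm-glue} must simultaneously produce a partial element in $A$ under $\phi$ and a total element in $B$ by applying $\frk{unglue}$; the latter's locus of instability grows from $\psi$ to $\psi\lor_\FF\phi$, which is exactly the phenomenon that motivates stabilized neutrals. A secondary subtlety is verifying the omitted computation laws between $\Con{hcom}$ and $\Con{coe}$ at each type former, and checking that these survive the composition of $\Op$-connected structures through which $\UUTpKan$ is presented. Path types exhibit a simpler instance of the same pattern: \cref{fig:norm-path} shows how $\frk{papp}$ must widen instability by $\partial{i}$ so that the reflected element agrees definitionally with $a_0,a_1$ at the endpoints.

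Finally, once every generator in \cref{fig:cubical-tt:judgments,fig:cubical-tt:connectives} has been interpreted in $\CatSTC$ with all equations verified and strict alignment to $\SynAlg$ secured by \cref{cor:realignment}, the universal property of $\ThCat$ yields a locally Cartesian closed functor interpreting the entire theory in $\UU$, i.e.\ a model $\CmpAlg : \ThMod{\UU}$; restriction along $\Syn$ recovers $\SynAlg$ by construction, delivering the required typing $\CmpAlg : \Ext{\ThMod{\UU}}{\Syn}{\SynAlg}$.
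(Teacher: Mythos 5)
Your proposal matches the paper's proof: the paper likewise sets $\CmpAlg.\Tp = \Tp$, $\CmpAlg.\Tm\prn{A} = \brk{A}$, $\CmpAlg.\II = \II$, $\CmpAlg.\FF = \FF$, closes $\Tp$ under the connectives exactly as in \cref{fig:norm-path,fig:norm-glue} (with $\Pi$, $\Sigma$, $\Con{S1}$ in the appendix), and obtains alignment over $\SynAlg$ from the componentwise alignment secured by realignment. Your elaboration of the judgmental structure via $\FF$-locality and of the Kan structure via $\Con{Coe}^\sharp$ and tininess is consistent with the surrounding development, so this is the same argument spelled out in more detail.
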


\begin{proof}
  We define $\CmpAlg.\Tp = \Tp$, $\CmpAlg.\Tm\prn{A} = \brk{A}$, $\CmpAlg.\II =
  \II$, and $\CmpAlg.\FF = \FF$.
  \ifarXiv
  In Figs.~\ref{fig:norm-path} to
  \ref{fig:norm-s1:ind}, we show how to close the universe of cubical
  normalization structures $\Tp$ under the connectives $\Con{path}$, $\Pi$,
  $\Sigma$, $\Con{glue}$, and $\Con{S1}$.
  \else
  In \cref{fig:norm-path,fig:norm-glue}, we show how to close the universe of cubical normalization structures $\Tp$ under the $\Con{path}$ and $\Con{glue}$ connectives; other connectives, including $\Pi$, $\Sigma$, and $\Con{S1}$ are dealt with in the appendix~\citep{sterling-angiuli:2021:extended}.
  \fi
  The fact that the resulting model is
  aligned over $\SynAlg$ follows from each of these components being aligned
  over their syntactic counterparts; in particular, each connective is aligned
  over the syntactic connective in $\SynAlg$.
\end{proof}

\section{The computability topos}\label{sec:computability-topos}

We now define the category $\CatSTC$ in which \cref{sec:stc}
takes place, as a category of sheaves on a generalized space $\GlTop$ which
combines syntax and semantics. Mirroring the modal syntax--semantics duality
introduced in \cref{sec:stc:modalities}, sheaves on $\GlTop$ function as
computability structures because they have syntactic and semantic aspects
obtained by restriction to the corresponding regions of the space.

We construct $\GlTop$ in turn by starting with a syntactic topos $\ThTop$ that
contains the generic model of Cartesian cubical type theory, and gluing it onto
a semantic topos $\RxTop$ over which the notions of variable, neutral, and
normal form are definable.

\subsection{The language of topoi}

Topoi are sometimes thought of as generalized topological spaces, and sometimes
as special kinds of categories. These perspectives are complementary, but one
avoids many notational and conceptual quagmires by distinguishing them
formally~\citep{anel-joyal:2021,vickers:2007,bunge-funk:2006}: in the tradition
of Grothendieck, the 2-category of topoi is \emph{opposite} to the 2-category
of cocomplete and finitely complete categories satisfying Giraud's exactness
axioms~\citep{sga:4}.\footnote{The 1-cells are reversed, but the 2-cells remain
the same.} This explains why the product of two topoi is given
by a \emph{tensor product} of categories; the situation is analogous to
the other dualities between geometry and algebra in mathematics, such as
locales/frames, affine schemes/commutative rings, Stone spaces/Boolean
algebras, etc.

\begin{notation}
  Given a topos $\XTop$, we will write $\Sh{\XTop}$ for its \emph{category of
  sheaves}, which is the formal avatar of $\XTop$ in the opposite category. In
  traditional parlance, morphisms of topoi go in the ``direct image'' direction,
  and morphisms of categories of sheaves go in the ``inverse image'' direction.
\end{notation}

\begin{example}[Presheaves]\label{ex:presheaves}
  Given a small category $\CCat$, the category of presheaves $\Psh{\CCat}$ is
  the category of functors $\Mor{\OpCat{\CCat}}{\SET}$. Because $\Psh{\CCat}$
  satisfies Giraud's axioms, there is a topos $\PrTop{\CCat}$ satisfying
  $\Sh{\PrTop{\CCat}} = \Psh{\CCat}$.
\end{example}

\subsection{The syntactic topos}

Recall from \cref{sec:syntax} that a model of $\ThCat$ in $\Sh{\XTop}$ is a
locally Cartesian closed functor $\Mor{\ThCat}{\Sh{\XTop}}$; when this functor
only preserves finite limits but not dependent products, we refer to it as a
\emph{pre-model}. Pre-models play an important role in the metatheory of
higher-order logic (as in the ``general models'' of \citet{henkin:1950}, later
studied in the language of topoi by \citet{awodey-butz:2000}), as well as the
metatheory of dependent type theory (as in pseudo-morphisms of
cwfs~\citep{newstead:2018,kaposi-huber-sattler:2019}).

We define the syntactic topos $\ThTop$ as the presheaf topos $\ThTop*$. In light
of Diaconescu's theorem~\citep{diaconescu:1975}, $\ThTop$ is the classifying topos of pre-models of
$\ThCat$, in the sense that morphisms of topoi $\Mor{\XTop}{\PrTop{\ThCat}}$
correspond to pre-models of $\ThCat$ in $\Sh{\XTop}$.

\subsection{The topos of cubical atomic terms}\label{sec:atomic}

Next, we define the semantic topos $\RxTop$ and an essential morphism of topoi
$\Mor[\StrMap]{\RxTop}{\ThTop}$ along which we will glue in \cref{sec:gluing};
that $\StrMap$ is essential means there is an additional left adjoint
$\StrMap_!\dashv\StrMap^*\dashv\StrMap_*$, a technical condition that will play
an important role in \cref{sec:axioms}. Intuitively, $\RxTop$ is the topos of
\emph{cubically atomic terms}, i.e., term variables and elements of the
interval; concretely, we equip $\RxTop$ with a tiny interval object $\II \cong
\StrMap^*\II: \Sh{\RxTop}$ and a fiberwise-tiny family of term variables
$\Con{var}:\Sl{\Sh{\RxTop}}{\StrMap^*\Tm}$ indexed in syntactic types.

We define $\RxTop := \RxTop*$, where $\RxCat$ is a category of \emph{cubical
atomic contexts and substitutions} whose objects $\Gamma : \RxCat$ we define
simultaneously with their decodings $\StrMap{\Gamma} : \ThCat$:
\begin{mathpar}
  \inferrule{
  }{
    \cdot : \RxCat\\
    \StrMap{\cdot} = \ObjTerm{\ThCat}
  }
  \and
  \inferrule{
    \Gamma : \RxCat\\
    \Mor[A]{\StrMap{\Gamma}}{\Tp}
  }{
    \Gamma.A : \RxCat\\
    \StrMap{\Gamma.A} = \StrMap{\Gamma}.A
  }
  \and
  \inferrule{
    \Gamma:\RxCat
  }{
    \Gamma.\II : \RxCat\\
    \StrMap{\Gamma.\II} = \StrMap{\Gamma}\times\II
  }
\end{mathpar}

Before defining the morphisms of $\RxCat$, we first characterize the term
variables:
\begin{mathpar}
  \inferrule[top variable]{
    \Mor[A]{\StrMap{\Gamma}}{\Tp}
  }{
    \IsVar{\Gamma.A}{\frk{z}_A}{A}
    \\
    \StrMap{\frk{z}_A} = q_A
  }
  \and
  \inferrule[pop variable]{
    \IsVar{\Gamma}{\alpha}{A}\\
    Q\in \brc{\II}\cup\brc{\Mor[B]{\StrMap{\Gamma}}{\Tp}}
  }{
    \IsVar{\Gamma.Q}{\frk{s}_Q\prn{\alpha}}{A\circ p_Q}
    \\
    \StrMap{\frk{s}_Q\prn{\alpha}} = \StrMap{\alpha}\circ p_Q
  }
\end{mathpar}

Then we define the morphisms $\Mor[\gamma]{\Delta}{\Gamma}$ simultaneously with
their decodings $\StrMap{\gamma}$ in terms of the cubical atomic terms:
\begin{mathpar}
  \inferrule[empty]{
  }{
    \Mor[\cdot]{\Delta}{\cdot}\\
    \StrMap{\cdot} = {!_{\StrMap{\Delta}}}
  }
  \and
  \inferrule[variable]{
    \Mor[\gamma]{\Delta}{\Gamma}\\
    \IsVar{\Delta}{\alpha}{A\circ\StrMap{\gamma}}
  }{
    \Mor[\gamma.\alpha]{\Delta}{\Gamma.A}\\
    \StrMap{\gamma.\alpha} = \gls{\StrMap{\gamma},\StrMap{\alpha}}
  }
  \and
  \inferrule[dimension]{
    \Mor[\gamma]{\Delta}{\Gamma}\\
    \Mor[r]{\StrMap{\Delta}}{\II}
  }{
    \Mor[\gamma.r]{\Delta}{\Gamma.\II}
    \\
    \StrMap{\gamma.r} = \gls{\StrMap{\gamma},r}
  }
  \and
\end{mathpar}

The above decodings assemble into a functor $\Mor[\StrMap]{\RxCat}{\ThCat}$,
which automatically induces an essential morphism of topoi that we will also
write $\Mor[\StrMap]{\RxTop}{\ThTop}$.

\begin{restatable}{lemma}{LemIntervalPreserved}\label{lem:ii-preserved}
  The chosen interval structure is preserved by restriction along
  $\Mor[\StrMap]{\RxTop}{\ThTop}$; that is, we have an isomorphism
  $\Yo[\RxCat]{\cdot.\II} \cong \StrMap^*\Yo[\ThCat]{\II}$.
\end{restatable}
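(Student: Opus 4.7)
The plan is to compute both sides of the asserted isomorphism pointwise on $\RxCat$ and observe that they agree naturally. Since both $\RxTop$ and $\ThTop$ are presheaf topoi with the essential morphism $\StrMap$ induced from the functor $\Mor[\StrMap]{\RxCat}{\ThCat}$ of the same name, the inverse image $\StrMap^*$ is given by restriction along $\StrMap$. Hence for any $\Gamma:\RxCat$, we have
\[
  \prn{\StrMap^*\Yo[\ThCat]{\II}}\prn{\Gamma}
  = \Yo[\ThCat]{\II}\prn{\StrMap{\Gamma}},
\]
which is the set of $\ThCat$-morphisms $\Mor{\StrMap{\Gamma}}{\II}$, while $\Yo[\RxCat]{\cdot.\II}\prn{\Gamma}$ is the set of $\RxCat$-morphisms $\Mor{\Gamma}{\cdot.\II}$.

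The key step is an inductive analysis of the morphisms into $\cdot.\II$ in $\RxCat$. Inspecting the generating clauses from \cref{sec:atomic}, the only rule producing a morphism $\Mor{\Gamma}{\cdot.\II}$ is the dimension rule, which assembles such a morphism $\gamma.r$ out of a pair of a morphism $\Mor[\gamma]{\Gamma}{\cdot}$ and a dimension $\Mor[r]{\StrMap{\Gamma}}{\II}$. By the empty rule there is a unique $\Mor[\cdot]{\Gamma}{\cdot}$, so every morphism $\Mor{\Gamma}{\cdot.\II}$ arises uniquely as $\cdot.r$ for a single $r:\Mor{\StrMap{\Gamma}}{\II}$. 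The map $\cdot.r \mapsto r$ therefore exhibits a bijection between $\Yo[\RxCat]{\cdot.\II}\prn{\Gamma}$ and $\prn{\StrMap^*\Yo[\ThCat]{\II}}\prn{\Gamma}$.

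To conclude, I would verify naturality in $\Gamma$: for $\Mor[\delta]{\Gamma'}{\Gamma}$, both presheaves act by precomposition. On the left, functoriality of $\StrMap$ together with the compositional structure of $\RxCat$ yields $\prn{\cdot.r}\circ\delta = \cdot.\prn{r\circ\StrMap{\delta}}$; on the right, the action is directly precomposition by $\StrMap{\delta}$. These agree under the bijection $\cdot.r\mapsto r$. I do not anticipate any substantive obstacle here; the preservation of the interval is engineered directly into the dimension clause of $\RxCat$, so the lemma really amounts to unfolding definitions and verifying this bookkeeping.
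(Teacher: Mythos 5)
Your proposal is correct and follows essentially the same route as the paper's proof: the paper computes $\Hom[\Psh{\RxCat}]{\Yo[\RxCat]{\Gamma}}{\StrMap^*\Yo[\ThCat]{\II}} \cong \Hom[\ThCat]{\StrMap{\Gamma}}{\II}$ via the adjunction $\StrMap_!\dashv\StrMap^*$ and Yoneda, whereas you obtain the same set directly from the precomposition formula for $\StrMap^*$, and both arguments conclude by the same induction on the morphisms of $\RxCat$ showing that maps $\Mor{\Gamma}{\cdot.\II}$ are exactly those of the form $\cdot.r$.
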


\begin{construction}[The presheaf of variables]\label{con:var}
  We have a family $\Con{var}:\Sl{\Sh{\RxTop}}{\StrMap^*\Tm}$, whose fiber at
  each $\Mor[a]{\StrMap{\Gamma}}{\Tm\prn{A}}$ is the set of variables
  $\IsVar{\Gamma}{\frk{a}}{A}$ with $\StrMap{\frk{a}} = a$.
\end{construction}

\subsection{The glued topos}\label{sec:gluing}

Let $\SP$ be the \emph{Sierpi\'nski} topos satisfying $\Sh{\SP} =
{\SET}^{\to}$.  Writing $\PtTop$ for the punctual topos satisfying $\Sh{\PtTop*}
= \SET$, we have open and closed points $\Mor|open
immersion|[\circ]{\PtTop}{\SP}$ and $\Mor|closed
immersion|[\bullet]{\PtTop}{\SP}$ corresponding under inverse image to the
codomain and domain functors on $\SET$ respectively.
In geometrical terms, these endpoints render $\SP$ a \emph{directed interval}
$\brc{\bullet\to\circ}$ that can be used to form cylinders by cartesian
product.

We define the glued topos $\GlTop$ by gluing the open end of the cylinder
$\RxTop\times\SP$ onto $\ThTop$ along $\Mor[\StrMap]{\RxTop}{\ThTop}$, obtaining
a diagram of open and closed immersions $
  \tikz[diagram,baseline=(syn.base)]{
    \node (syn) [inner xsep=0pt] {$\ThTop\ $};
    \node (cmp) [right = 1.75cm of syn] {$\GlTop$};
    \node (sem) [right = 1.75cm of cmp,inner xsep=0pt] {$\ \RxTop$};
    \path[open immersion] (syn) edge node [desc] {$\OpImm$} (cmp);
    \path[closed immersion*] (sem) edge node [desc] {$\ClImm$} (cmp);
  }
$ as follows:
\[
  \begin{tikzpicture}[diagram]
    \SpliceDiagramSquare{
      se/style = pushout,
      west/style = open immersion,
      east/style = open immersion,
      nw = \RxTop,
      sw = \RxTop\times\SP,
      west = \RxTop\times{\circ},
      ne = \ThTop,
      north = \StrMap,
      se = \GlTop,
      east = \OpImm,
      west/node/style = upright desc,
    }
    \node (R') [left = 2.5cm of sw] {$\RxTop$};
    \path[closed immersion] (R') edge node [above] {$\RxTop\times\bullet$} (sw);
    \path[closed immersion, bend right=30] (R') edge node [sloped,below] {$\ClImm$} (se);
  \end{tikzpicture}
\]

Finally, we define the category $\CatSTC$ axiomatized in \cref{sec:stc} to be
$\CatSTC := \Sh{\GlTop}$.

\begin{remark}\label{rem:comma-cat}
  The gluing construction above has the advantage of being expressed totally in
  the category of Grothendieck topoi; for familiarity, we note that the
  category $\CatSTC = \Sh{\GlTop}$ is the traditional Artin gluing of the inverse image
  functor $\Mor[\StrMap^*]{\Psh{\ThCat}}{\Psh{\RxCat}}$ and can hence be
  explicitly computed as the comma category $\Psh{\RxCat}\downarrow \StrMap^*$.
  Therefore, an object of $\CatSTC$ is a pair $\prn{E, \Mor{A}{\StrMap^*E}}$
  of a presheaf $E : \Psh{\ThCat}$ and a presheaf $A : \Psh{\RxCat}$ equipped
  with a structure map to $\StrMap^*E$.
\end{remark}

\subsection{Verifying the axioms}\label{sec:axioms}

Because $\CatSTC = \Sh{\GlTop}$ is a category of sheaves, it satisfies Giraud's
axioms and thus interprets extensional type theory extended with a universe
$\Omega$ of propositions. To see that $\CatSTC$ moreover admits a hierarchy of
type-theoretic universes satisfying the strictification axiom
(\cref{ax:universes}), we observe that $\CatSTC$ also admits a presentation as a
presheaf category.

\begin{lemma}[Strong universes]\label{lem:gluing-psh}
  In $\CatSTC$ we have a cumulative and transfinite hierarchy of strong
  universes (\cref{def:op-universe}) $\UU_0\subseteq\UU_1\subseteq\ldots$
  corresponding to the sequence of strongly inaccessible cardinals in the
  background set theory.
\end{lemma}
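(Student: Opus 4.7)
The plan is to present $\CatSTC$ itself as a presheaf topos on an explicit small site, and then invoke the standard Hofmann--Streicher universe hierarchy, with the strictification clause supplied by the realignment theorem of \citet{orton-pitts:2016,lops:2018}.

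First, I would use \cref{rem:comma-cat} to identify $\CatSTC$ with the Artin gluing $\Psh{\RxCat}\downarrow\StrMap^*$ of the restriction functor $\StrMap^*$. Since $\StrMap^*$ arises from a functor of small categories $\Mor[\StrMap]{\RxCat}{\ThCat}$, a classical construction presents this gluing as $\Psh{\CatIdent{C}}$, where $\CatIdent{C}$ is the \emph{collage} of $\StrMap$: its objects are $\mathrm{Ob}\prn{\ThCat}\sqcup\mathrm{Ob}\prn{\RxCat}$; within-component hom-sets are inherited; a bridge hom-set $\CatIdent{C}\prn{\Xi,\Gamma} := \ThCat\prn{\Xi,\StrMap\Gamma}$ is adjoined for $\Xi\in\ThCat$ and $\Gamma\in\RxCat$; and there are no morphisms in the reverse direction, with composition against a $\RxCat$-morphism $\Gamma\to\Gamma'$ using the functorial action of $\StrMap$. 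Under the resulting equivalence, a triple $\prn{E,A,A\to\StrMap^*E}$ becomes the presheaf restricting to $E$ on $\ThCat$ and $A$ on $\RxCat$, with bridge action determined by the structure map.

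Second, once $\CatSTC$ is presented as $\Psh{\CatIdent{C}}$, I would instantiate the standard Hofmann--Streicher universe hierarchy: for each strongly inaccessible cardinal $\alpha$ in the background set theory, define $\UU_\alpha : \Psh{\CatIdent{C}}$ by taking $\UU_\alpha\prn{X}$ to be the set of $\alpha$-small presheaves on the slice $\CatIdent{C}/X$. This yields the required cumulative, transfinite hierarchy $\UU_0\subseteq\UU_1\subseteq\ldots$ that strictly classifies all $\alpha$-small maps of $\CatSTC$, and is strictly closed under dependent products, dependent sums, inductive types, quotients, and the subobject classifier; these closure conditions are proved by pointwise constructions familiar from the presheaf case.

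Third, I would verify the strictification clause of \cref{def:op-universe}: given $A : \UU$ and $\phi : \Omega$, the weakening map $\Mor{\Con{Iso}\prn{A}}{\IHom*{\phi}{\Con{Iso}\prn{A}}}$ admits a section. This is precisely the \emph{realignment} property of Hofmann--Streicher universes, established for arbitrary presheaf topoi by \citet{orton-pitts:2016,lops:2018}, with an internal-language presentation in \citet{sterling-angiuli-gratzer:2020}. Concretely, one patches a strictly aligned isomorph pointwise: where $\phi$ holds one uses the supplied family, and otherwise one falls back to $A$, with coherence enforced by the subobject classifier. The essential responsibility of the proof is exhibiting $\CatSTC$ as a presheaf topos on a small site; once achieved, the universe hierarchy and the strict gluing axiom follow from independently well-known constructions. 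The main subtle point is strictification, but since $\CatSTC$ is an ordinary presheaf topos, the existing realignment theorem applies directly and no new ideas are required.
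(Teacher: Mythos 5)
Your proposal is correct and follows essentially the same route as the paper: present $\CatSTC$ as a presheaf topos on a small category (you build the collage of $\StrMap$ explicitly, where the paper instead cites the general result that the Artin gluing of a continuous, accessible functor between presheaf categories is again a presheaf category), then take Hofmann--Streicher universes and derive the strict gluing axiom from the Orton--Pitts realignment theorem. The only elision worth flagging is that your case split ``where $\phi$ holds use the supplied family, otherwise fall back to $A$'' silently requires the background set theory to be boolean, a hypothesis the paper invokes explicitly; the cumulativity of the hierarchy is likewise not entirely free (the paper attributes it to an argument of Coquand).
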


\begin{proof}
  First, we observe that $\GlTop$ can be presented as $\PrTop{\CatIdent{G}}$ for
  some small category $\CatIdent{G}$, following a standard result of topos theory
  that the Artin gluing of a \emph{continuous} and accessible functor between
  categories of presheaves is again a category of
  presheaves~\cite{sga:4,carboni-johnstone:1995}. Note that $\StrMap^*$ is
  continuous, as the inverse image part of an essential morphism of topoi.

  We therefore obtain a hierarchy of Hofmann--Streicher universes
  \citep{hofmann-streicher:1997} in $\CatSTC\simeq\Psh{\CatIdent{G}}$, which
  Coquand has additionally shown to be cumulative. Finally, using the argument of
  Orton and Pitts~\citep[Theorem 6.3]{orton-pitts:2016} and our assumption
  that the background set theory is boolean, we conclude that the Hofmann--Streicher universes
  satisfy the strictification axiom.
\end{proof}

Using our computation of $\CatSTC$ as a comma category (\cref{rem:comma-cat}),
we can explicitly compute the proposition $\Syn$ stipulated by \cref{ax:syn}.
Then, we construct the syntactic $\ThCat$-model, family of term variables, and
interval object stipulated by Axioms~\ref{ax:syntactic-algebra}, \ref{ax:var},
and \ref{ax:interval} respectively.

\begin{computation}[The syntactic open]\label{cmp:syntactic-open}
  We define $\Syn : \CatSTC$ to be the subterminal object $\OpImm_!\ObjTerm =
  \prn{\ObjTerm,\Mor{\ObjInit}{\StrMap^*\ObjTerm}}$, for which we have an
  equivalence of categories $\Psh{\ThCat}\simeq\Sl{\CatSTC}{\Syn}$. The inverse
  image part of the open immersion can be viewed as the pullback map
  $\Mor[\Syn^*]{\CatSTC}{\Sl{\CatSTC}{\Syn}}$, and the direct image is the
  dependent product map $\Mor[\Syn_*]{\Sl{\CatSTC}{\Syn}}{\CatSTC}$.
\end{computation}

\begin{remark}[The syntactic $\ThCat$-model]\label{rem:syntactic-algebra}
  As the classifier of pre-models of $\ThCat$, the topos $\ThTop$ contains the
  \emph{generic pre-model} of $\ThCat$, namely the Yoneda embedding
  $\EmbMor{\ThCat}{\Sh{\ThTop} = \Psh{\ThCat}}$. Because the Yoneda embedding is
  locally Cartesian closed, the generic pre-model is in fact a genuine
  $\ThCat$-model; because $\ThCat$ is small, the size of this model is bounded
  by the smallest $\Op$-modal Hofmann--Streicher universe, allowing us to
  internalize it as an element of $\ThMod{\UUSyn}$.
\end{remark}

\begin{remark}[The family of term variables]\label{rem:glued-variables}
  In light of \cref{rem:comma-cat}, the presheaf
  $\Con{var}:\Sl{\Psh{\RxCat}}{\StrMap^*\Tm}$ from \cref{con:var} can be
  internalized as a $\UU$-valued family of types in $\CatSTC$ that restricts to
  $\Tm:\Psh{\ThCat}$ under the open immersion.
\end{remark}

\begin{construction}[The glued interval]\label{con:glued-interval}
  We take the interval $\II:\CatSTC$ to be the direct image $\II :=
  \OpImm_*\Yo[\ThCat]{\II}$ of the syntactic interval under the open immersion,
  so $\OpImm^*\II = \Yo[\ThCat]{\II}$. Because of the isomorphism from
  \cref{lem:ii-preserved}, it is easy enough to see that $\ClImm^*\II \cong
  \Yo[\RxCat]{\cdot.\II}$, completing our justification of
  \cref{ax:syntactic-algebra}.
\end{construction}

To see that the glued interval is tiny, we use a general fact about Artin
gluings along \emph{inverse image} functors.

\begin{restatable}{lemma}{LemGluingTinyObjects}\label{lem:gluing-tiny-objects}
  Let $\Mor[f]{\YTop}{\UTop}$ be a morphism of topoi. Write $\XTop$ for the
  Artin gluing of the inverse image functor $f^*$, and write $\Mor|open
  immersion|[j]{\UTop}{\XTop}$ and $\Mor|closed immersion|[i]{\YTop}{\XTop}$
  for the respective open and closed immersions of topoi.
  Suppose that $X:\Sh{\XTop}$ is a sheaf such that $j^*X$ is a tiny object in
  $\Sh{\UTop}$ and $i^*X$ is a tiny object in $\Sh{\YTop}$; then $X$ is tiny.
\end{restatable}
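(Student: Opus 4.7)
My plan is to construct a right adjoint $R_X$ to $\IHom*{X}{-}:\Sh{\XTop}\to\Sh{\XTop}$ directly; such an adjoint exists precisely when $X$ is tiny. Throughout I would work in the comma-category presentation $\Sh{\XTop}\simeq\Sh{\UTop}\downarrow f^*$ of \cref{rem:comma-cat}, viewing sheaves on $\XTop$ as triples $(U,Y,\alpha:Y\to f^*U)$.

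First I would compute $\IHom*{X}{B}$ for $B=(U_B,Y_B,\beta)$ explicitly: its $j^*$-component is $U_B^{U_X}$, and its $i^*$-component is the pullback in $\Sh{\YTop}$ of the two maps $Y_B^{Y_X}\to(f^*U_B)^{Y_X}$ (post-composition with $\beta$) and $f^*(U_B^{U_X})\to(f^*U_B)^{Y_X}$ (the canonical comparison $f^*(U_B^{U_X})\to(f^*U_B)^{f^*U_X}$ followed by precomposition along $\alpha_X:Y_X\to f^*U_X$). The cleanest derivation is to recognize $B$ as the comma-category pullback $j_*j^*B\times_{i_*f^*j^*B}i_*i^*B$, compute $(j_*U)^X\cong j_*(U^{U_X})$, $(i_*Y)^X\cong i_*(Y^{Y_X})$, and $(i_*f^*U)^X\cong i_*((f^*U)^{Y_X})$ by unfolding the comma-triple definitions, and invoke that $\IHom*{X}{-}$, being a right adjoint, preserves this pullback.

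Dually I would define $R_X Z$ for $Z=(U_Z,Y_Z,\zeta)$ via the analogous decomposition $Z\cong j_*U_Z\times_{i_*f^*U_Z}i_*Y_Z$: since $R_X$ must itself be a right adjoint it preserves this limit, so it suffices to specify $R_X$ on the objects $j_*U$, $i_*Y$, and $i_*f^*U$. On $j_*U$ one reads off $R_X(j_*U)\cong j_*V$, where $V$ is the $U_X$-root of $U$ given by tininess of $j^*X$, by transposing through $j^*\dashv j_*$. The formulas for $R_X(i_*W)$ are forced by the universal property $\mathrm{Hom}_{\Sh{\XTop}}(A,R_X(i_*W))\cong\mathrm{Hom}_{\Sh{\YTop}}(i^*(\IHom*{X}{A}),W)$, and combine the $Y_X$-root (from tininess of $i^*X$) with the direct image $f_*$ via a pullback built from the cospan appearing in the exponential formula. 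Finally I would verify the adjunction $\mathrm{Hom}_{\Sh{\XTop}}(A,R_X Z)\cong\mathrm{Hom}_{\Sh{\XTop}}(\IHom*{X}{A},Z)$ for general $A$ by unwinding both sides into pairs of morphisms on $U$- and $Y$-components subject to a compatibility square, matched by the two tininess adjunctions.

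The hard part will be constructing $R_X(i_*W)$ coherently: unlike in the $j_*U$ case, the pullback defining the $i^*$-component of $\IHom*{X}{A}$ does not collapse, so one cannot transpose purely through $i^*\dashv i_*$ and the $Y_X$-root but must mediate between the two tininess adjunctions using $\alpha_X$ and the interaction of $f^*$ with exponentials. Resolving this hinges on $f^*$ being an inverse-image functor preserving finite limits, so that the pullbacks in the exponential formula commute appropriately with the root structure.
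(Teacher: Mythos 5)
Your computation of $\IHom*{X}{-}$ in the comma-category presentation is correct and matches the formula the paper uses: the closed component of $\IHom{X}{B}$ is the pullback of $\IHom{i^*X}{i^*B}\to\IHom{i^*X}{f^*j^*B}\leftarrow f^*\IHom*{j^*X}{j^*B}$. But your overall strategy diverges from the paper's, and the divergence is where the gap lies. The paper does \emph{not} construct the right adjoint; it uses the equivalence (stated right after Axiom~2, valid in any locally presentable category by the adjoint functor theorem) between tininess and cocontinuity of $\IHom*{X}{-}$, and then checks that $\IHom*{X}{-}$ preserves colimits componentwise: push the colimit through $i^*,j^*,f^*$ and through the tiny exponentials $\IHom*{i^*X}{-}$, $\IHom*{j^*X}{-}$, and finally commute the colimit past the pullback using universality (pullback-stability) of colimits in a topos.

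The step you flag as ``the hard part'' is a genuine, unresolved gap, not a technicality. To define $R_X(i_*W)$ you must represent the functor $A\mapsto\mathrm{Hom}_{\Sh{\YTop}}\bigl(i^*\IHom{X}{A},W\bigr)$, and $i^*\IHom{X}{A}$ is a pullback; morphisms \emph{out of} a pullback are not determined by morphisms out of the legs of the cospan, so ``a pullback built from the cospan'' combining the $i^*X$-root with $f_*$ does not represent this functor in any evident way. Left-exactness of $f^*$ does not help here: the obstruction is not about $f^*$ commuting with limits but about the contravariant behaviour of $\mathrm{Hom}(-,W)$ on pullbacks. Representability of $A\mapsto i^*\IHom{X}{A}$ composed with $\mathrm{Hom}(-,W)$ is essentially equivalent to the closed-part half of the lemma itself, so positing a formula ``forced by the universal property'' begs the question unless you exhibit the representing object and verify the bijection --- and the same difficulty resurfaces in your final verification step, since the $Y$-component of a map out of $\IHom{X}{A}$ is a map out of that pullback. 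The repair is to abandon the direct construction and argue cocontinuity as the paper does; your exponential formula is exactly the input that argument needs.
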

\begin{proof}
  It suffices to check that the exponential functor $\IHom*{X}{-}$ preserves
  colimits; see
  \ifarXiv
    the appendix for details.
  \else
    the appendix \citep{sterling-angiuli:2021:extended}.
  \fi
\end{proof}

\begin{corollary}\label{cor:interval-tiny}
  The glued interval is tiny in $\CatSTC$.
\end{corollary}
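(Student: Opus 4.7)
The plan is to apply \cref{lem:gluing-tiny-objects} to the essential morphism of topoi $\Mor[\StrMap]{\RxTop}{\ThTop}$, noting that by construction $\GlTop$ is precisely the Artin gluing of $\StrMap^*$ with open immersion $\OpImm$ and closed immersion $\ClImm$. The lemma reduces the problem to checking that both $\OpImm^*\II$ and $\ClImm^*\II$ are tiny in the respective presheaf categories $\Sh{\ThTop} = \Psh{\ThCat}$ and $\Sh{\RxTop} = \Psh{\RxCat}$.

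First, since $\II := \OpImm_*\Yo[\ThCat]{\II}$ (\cref{con:glued-interval}) and the direct image along an open immersion is fully faithful, we have $\OpImm^*\II \cong \Yo[\ThCat]{\II}$. Second, using the standard Artin-gluing identity $\ClImm^*\OpImm_* \cong \StrMap^*$ together with \cref{lem:ii-preserved}, we obtain $\ClImm^*\II \cong \StrMap^*\Yo[\ThCat]{\II} \cong \Yo[\RxCat]{\cdot.\II}$. So both restrictions are representable presheaves.

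It then suffices to observe that every representable in a presheaf topos is tiny. Indeed, for any small category $\CCat$ and $c \in \CCat$, the functor $\Yo[\CCat]{c}\times(-):\Psh{\CCat}\to\Psh{\CCat}$ is computed pointwise as $d \mapsto \Hom(d,c)\times X(d)$; since $A\times(-)$ preserves colimits in $\SET$ (having right adjoint $(-)^A$) and colimits in $\Psh{\CCat}$ are pointwise, $\Yo[\CCat]{c}\times(-)$ preserves all colimits. By the adjoint functor theorem the exponential $(-)^{\Yo[\CCat]{c}}$ admits a right adjoint, so $\Yo[\CCat]{c}$ is tiny. Applying this to $\Yo[\ThCat]{\II}$ and $\Yo[\RxCat]{\cdot.\II}$ verifies the hypotheses of \cref{lem:gluing-tiny-objects}, from which we conclude that $\II$ is tiny in $\CatSTC$.

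The real work has already been absorbed into \cref{lem:ii-preserved} and \cref{lem:gluing-tiny-objects}; the only potential obstacle here is bookkeeping around the identity $\ClImm^*\OpImm_* \cong \StrMap^*$, which is a formal consequence of the Artin-gluing presentation of $\GlTop$ recorded in \cref{rem:comma-cat}.
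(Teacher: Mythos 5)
Your overall decomposition is exactly the intended one: reduce via \cref{lem:gluing-tiny-objects} to tininess of $\OpImm^*\II \cong \Yo[\ThCat]{\II}$ in $\Psh{\ThCat}$ and of $\ClImm^*\II \cong \Yo[\RxCat]{\cdot.\II}$ in $\Psh{\RxCat}$, both of which you correctly compute. The gap is in the final step, where you claim that \emph{every} representable in a presheaf topos is tiny. That claim is false, and the argument you give for it is a non sequitur: showing that $\Yo[\CCat]{c}\times\prn{-}$ preserves colimits establishes (via the adjoint functor theorem) that $\Yo[\CCat]{c}\times\prn{-}$ has a right adjoint --- namely the exponential $\IHom*{\Yo[\CCat]{c}}{-}$ itself, i.e.\ cartesian closure, which holds in any presheaf topos. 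Tininess requires the \emph{exponential} functor to preserve colimits, which is a strictly stronger condition and is not implied by what you proved. For a counterexample to the general claim, take $\CCat$ to be the discrete category on two objects $a,b$, so that $\Psh{\CCat}\cong\SET\times\SET$ and $\Yo[\CCat]{a}=\prn{1,\ObjInit}$; then $\IHom*{\Yo[\CCat]{a}}{\prn{X,Y}}=\prn{X,1}$, which preserves neither the initial object nor binary coproducts. (You may be conflating internal tininess with the fact that representables are \emph{atomic}, i.e.\ that the covariant hom $\Hom{\Yo[\CCat]{c}}{-}\cong\prn{-}\prn{c}$ preserves colimits; that is true for all representables but is a different property.)

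The correct argument requires using specific structure of $\ThCat$ and $\RxCat$: a representable $\Yo[\CCat]{c}$ is tiny whenever $c$ admits binary products with every object of $\CCat$, for then $\Yo[\CCat]{d}\times\Yo[\CCat]{c}\cong\Yo[\CCat]{d\times c}$, the Yoneda lemma gives $\IHom*{\Yo[\CCat]{c}}{X}\prn{d}\cong X\prn{d\times c}$, and the exponential functor is precomposition with $\Mor[-\times c]{\CCat}{\CCat}$, which has a right adjoint given by right Kan extension. For $\Yo[\ThCat]{\II}$ this applies because $\ThCat$ is locally Cartesian closed and in particular has finite products. For $\Yo[\RxCat]{\cdot.\II}$ one checks from the \textsc{dimension} rule defining morphisms of $\RxCat$ that $\Hom[\RxCat]{\Delta}{\Gamma.\II}\cong\Hom[\RxCat]{\Delta}{\Gamma}\times\Hom[\RxCat]{\Delta}{\cdot.\II}$, so $\Gamma.\II$ is the product of $\Gamma$ with $\cdot.\II$ in $\RxCat$ and the same argument goes through. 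With this repair the proof is complete and agrees with the paper's.
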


Finally, we verify \cref{ax:locality} and construct the normal and neutral forms
as indexed quotient inductive types \citep{altenkirch-kaposi:2016} valued in
$\UUSem$; the full definition appears in the
appendix~\ifarXiv{(\cref{fig:nf-ne-qit})}\else\citep{sterling-angiuli:2021:extended}\fi.

\begin{restatable}{lemma}{BotImpliesSyn}\label{lem:0=1-implies-syn}
  In the lattice of opens of $\GlTop$, we have $\FakeFalse\leq\Syn$.
\end{restatable}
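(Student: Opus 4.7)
The plan is to exploit the duality between open and closed subtopoi: since $\Syn$ corresponds to the open immersion $\OpImm$ and $\RxTop$ to the closed complement, the inequality $\FakeFalse \leq \Syn$ is equivalent to the vanishing $\ClImm^*\brk{\FakeFalse} = \ObjInit$ in $\Sh{\RxTop} = \Psh{\RxCat}$. I will compute this restriction explicitly as an equalizer in $\Psh{\RxCat}$ and show it is pointwise empty.

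First, I identify the restricted interval. Using the comma-category presentation of $\GlTop$ (\cref{rem:comma-cat}) and the standard identity $\ClImm^*\OpImm_* \cong \StrMap^*$ for complementary open and closed immersions in Artin gluing, we obtain $\ClImm^*\II \cong \StrMap^*\Yo[\ThCat]{\II} \cong \Yo[\RxCat]{\cdot.\II}$ via \cref{lem:ii-preserved}. Under this identification, the restricted endpoints $\ClImm^*(0)$ and $\ClImm^*(1)$ correspond to the Yoneda images of the atomic substitutions $\cdot.0, \cdot.1 : \cdot \to \cdot.\II$ in $\RxCat$, so $\ClImm^*\brk{\FakeFalse}$ is their equalizer in $\Psh{\RxCat}$.

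Second, I compute this equalizer pointwise. At each $\Gamma \in \RxCat$, its fiber is the subsingleton consisting of the unique empty substitution $\cdot_\Gamma : \Gamma \to \cdot$ whenever $\cdot.0 \circ \cdot_\Gamma = \cdot.1 \circ \cdot_\Gamma$ holds in $\RxCat$. Unpacking the dimension rule that constructs these two composites---and relying on the faithfulness of $\StrMap$, which follows from the inductive, syntactically-generated nature of morphisms in $\RxCat$---this coincidence reduces to the equation $0 \circ !_{\StrMap\Gamma} = 1 \circ !_{\StrMap\Gamma}$ in $\ThCat$, equivalently to the existence of a morphism $\StrMap\Gamma \to \brk{0=1}$ into the subterminal $\brk{0=1}$ of $\ThCat$.

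The remaining step, which is where the main obstacle lies, is to rule out any such morphism for every $\Gamma \in \RxCat$. I plan to invoke a consistent semantic model of $\ThCat$---concretely the standard cartesian cubical set model of \cite{abcfhl:2019}---which furnishes an LCC functor $M$ into a topos of cartesian cubical sets where $\II$ has genuinely distinct endpoints and $\brk{0=1}$ is therefore sent to $\ObjInit$. Any morphism $\StrMap\Gamma \to \brk{0=1}$ in $\ThCat$ would, by functoriality, collapse $M(\StrMap\Gamma)$ to $\ObjInit$, so it suffices to verify that $M(\StrMap\Gamma)$ is non-initial for every atomic $\Gamma$. The base context and dimension extensions are straightforward to handle by induction on the construction of $\Gamma$, but the delicate case is type extensions $\Gamma.A$ in which $A$ could in principle denote a pointwise-empty type family; this is addressed by a careful choice of semantic target so that total spaces of interpreted type families remain non-initial, ensuring $0_{\StrMap\Gamma} \neq 1_{\StrMap\Gamma}$ uniformly in $\Gamma$ and hence that the equalizer vanishes at every stage.
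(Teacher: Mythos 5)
Your reduction follows the same route as the paper's: both arguments identify $\ClImm^*\FakeFalse$ with the presheaf $\Gamma\mapsto\Hom[\ThCat]{\StrMap{\Gamma}}{\brk{0=1}}$ on $\RxCat$ and reduce the inequality to showing this presheaf is initial (this is exactly \cref{lem:formal-cxs-not-false}), and both propose to rule out morphisms $\StrMap{\Gamma}\to\brk{0=1}$ by interpreting $\ThCat$ into a model where $\brk{0=1}$ is initial but $\StrMap{\Gamma}$ is not. Your pointwise equalizer computation is fine, though you do not need faithfulness of $\StrMap$: it suffices that $\cdot.r=\cdot.s$ forces $r=s$, which holds by the inductive generation of the morphisms of $\RxCat$.

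The gap is in the last step, which is where the entire content of the lemma lives. You first commit to the \emph{standard} Cartesian cubical set model of \citep{abcfhl:2019}, but that model does not do the job: already in the signature at hand one can write down closed types whose standard interpretation is the empty cubical set (for instance, a path type asserting that $\Con{loop}$ is equal to the constant path at $\Con{base}$ in the loop space of $\Con{S1}$), so for such an $A$ the context $\Gamma.A$ is sent to $\ObjInit$ and no contradiction can be extracted from a morphism $\StrMap{\prn{\Gamma.A}}\to\brk{0=1}$. You correctly flag this as the delicate case but then defer it to ``a careful choice of semantic target,'' which is precisely the missing construction. The paper's resolution is a \emph{non-standard} model in which cofibrations are interpreted as usual (so $\brk{0=1}$ remains initial) but every type is interpreted as an \emph{inhabited} object, so that every atomic context is non-initial; the apparent obstruction---that a uniformly inhabited interpretation must still support eliminators such as $\Con{abort}$ on weakly empty types---is discharged by letting $\Con{abort}$ return the basepoint of its motive. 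Without supplying some such model (or another argument that every $\StrMap{\Gamma}$ is non-initial in a model refuting $0=1$), the proof is incomplete.
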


\section{Normalization for cubical type theory}\label{sec:normalization-result}

Finally, we show that our computability model (\cref{thm:computability-algebra})
lets us compute the normal form of every syntactic type, implying the (external
to $\CatSTC$) decidability of type equality in cubical type theory, and the
injectivity of type constructors.

\begin{remark}
  Because $\Mor[\StrMap]{\RxTop}{\ThTop}$ is an essential morphism of topoi with
  additional left adjoint $\StrMap_!\dashv\StrMap^*$, so is the closed immersion
  $\Mor|closed immersion|[\ClImm]{\RxTop}{\GlTop}$; the additional left adjoint
  $\ClImm_!$ takes $E:\Psh{\RxCat}$ to the computability structure
  $\prn{\StrMap_!E,\Mor{E}{\StrMap^*\StrMap_!E}}$ determined by the unit of
  $\StrMap^*\StrMap_!$.
\end{remark}

\begin{construction}
  Let $\Gamma:\RxCat$ be an atomic context; we write $\Atom{\Gamma} : \CatSTC$
  for $\ClImm_!\Yo[\RxCat]{\Gamma}$, the computability structure of vectors of
  ``atoms of type $\Gamma$'' tracked by honest substitutions/terms.
\end{construction}

\begin{construction}
  The computability model evinces a locally Cartesian closed functor
  $\Mor[\CmpAlg]{\ThCat}{\CatSTC}$; restricting along
  $\Mor[\StrMap]{\RxCat}{\ThCat}$, we have an interpretation functor
  $\Mor[\bbrk{-}]{\RxCat}{\CatSTC}$ taking each atomic context to its
  computability structure $\CmpAlg\prn{\StrMap{\Gamma}}$. We observe that
  $\OpImm^*\bbrk{\Gamma} = \Yo[\ThCat]{\StrMap{\Gamma}} =
  \StrMap_!\Yo[\RxCat]{\Gamma}$.
\end{construction}

\begin{construction}
  For any $X:\CatSTC$ there is a canonical natural transformation
  $\Mor{\Hom*{\bbrk{-}}{X}}{\StrMap^*\OpImm^*X} : \Psh{\RxCat}$ which restricts
  $\Mor{\CmpAlg\prn{\StrMap{\Gamma}}}{X}:\CatSTC$ to its syntactic part, noting
  that $\Hom{\StrMap_!\Yo[\RxCat]{\Gamma}}{\OpImm^*X} \cong
  \StrMap^*\OpImm^*X\prn{\Gamma}$. Viewed as a sheaf on $\GlTop$, we write
  $X_{\CmpAlg} : \CatSTC$ for the pair $\prn{\OpImm^*X,
  \Mor{\Hom*{\bbrk{-}}{X}}{\StrMap^*\OpImm^*X}}$.
\end{construction}

\begin{construction}\label{con:atm-to-cx}
  We define a pointwise vertical (\cref{rem:vertical}) natural transformation
  $\Mor[\AtmToCx]{\Atom}{\bbrk{-}} : \Hom{\RxCat}{\CatSTC}$ that reflects
  each atomic substitution as a computable substitution. The definition follows
  by recursion on the index $\Gamma:\RxCat$, and uses the fact that the locus
  of instability of a variable is empty:
  \begin{align*}
    \AtmToCx[\prn{\cdot}]{\cdot} &= \cdot\\
    \AtmToCx[\Gamma.\II]{\gamma.r} &= \prn{\AtmToCx[\Gamma]{\gamma}, r}\\
    \AtmToCx[\Gamma.A]{\gamma.x} &= \prn{\AtmToCx[\Gamma]{\gamma}, \Reflect{\CmpAlg\prn{A}}{\bot_\FF}{\NeGlue{\frk{var}\prn{x}}{\bot_\FF}{\brk{}\Sub{\CmpAlg\prn{A}}}}}
  \end{align*}
\end{construction}

\begin{restatable}{lemma}{ReixAtom}\label{lem:reix-atom}
  The pointwise vertical natural transformation
  $\Mor[\AtmToCx]{\Atom}{\bbrk{-}}$ induces by precomposition a vertical map
  $\Mor[\AtmToCx^*_X]{X\Sub{\CmpAlg}}{X}$ for any sheaf $X:\CatSTC$.
\end{restatable}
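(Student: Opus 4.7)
The plan is to exploit the comma-category presentation of $\CatSTC$ recalled in \cref{rem:comma-cat}: every sheaf $Y:\CatSTC$ decomposes as a pair $\prn{\OpImm^*Y,\ \ClImm^*Y\to\StrMap^*\OpImm^*Y}$, and a \emph{vertical} morphism $Y\to X$ amounts to a map of presheaves $\ClImm^*Y\to\ClImm^*X$ in $\Psh{\RxCat}$ making the evident square with the two structure maps to $\StrMap^*\OpImm^*X$ commute. For $Y=X\Sub{\CmpAlg}$ the closed-part presheaf is $\Hom*{\bbrk{-}}{X}$, so it suffices to exhibit a natural transformation $\Mor{\Hom*{\bbrk{-}}{X}}{\ClImm^*X}$ and verify this commutativity.

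I would define the $\Gamma$-component by precomposition with $\AtmToCx_\Gamma$. Using the adjunction $\ClImm_!\dashv\ClImm^*$ together with Yoneda,
\[
\ClImm^*X\prn{\Gamma}\ \cong\ \Hom{\ClImm_!\Yo[\RxCat]{\Gamma}}{X}\ =\ \Hom{\Atom\prn{\Gamma}}{X},
\]
so the $\Gamma$-component sends each $f:\bbrk{\Gamma}\to X$ to (the adjunct of) $f\circ\AtmToCx_\Gamma:\Atom\prn{\Gamma}\to X$. Naturality in $\Gamma$ is immediate from the fact that $\AtmToCx:\Atom\Rightarrow\bbrk{-}$ is itself a natural transformation in $\Hom{\RxCat}{\CatSTC}$.

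It remains to verify verticality. Unwinding the adjunctions, the structure map $\ClImm^*X\prn{\Gamma}\to\StrMap^*\OpImm^*X\prn{\Gamma}$ of $X$ sends any $g:\Atom\prn{\Gamma}\to X$ to the adjunct of $\OpImm^*g:\StrMap_!\Yo[\RxCat]{\Gamma}\to\OpImm^*X$; applied to $g=f\circ\AtmToCx_\Gamma$ this yields
\[
\OpImm^*\prn{f\circ\AtmToCx_\Gamma}=\OpImm^*f\circ\OpImm^*\AtmToCx_\Gamma=\OpImm^*f,
\]
where the final equality uses that $\AtmToCx_\Gamma$ is vertical. The structure map of $X\Sub{\CmpAlg}$ sends $f$ directly to the adjunct of $\OpImm^*f$, so the two routes agree.

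I expect the main bookkeeping step to be the identification $\Hom{\Atom\prn{\Gamma}}{X}\cong\ClImm^*X\prn{\Gamma}$, which is precisely where the choice $\Atom\prn{\Gamma}=\ClImm_!\Yo[\RxCat]{\Gamma}$ via the extra left adjoint $\ClImm_!$ pays off; once this is in hand, verticality of the induced map reduces to pointwise verticality of $\AtmToCx$, and I anticipate no further substantive obstacle.
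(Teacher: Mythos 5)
Your proof is correct and follows essentially the same route as the paper's: both present the map in the comma category as precomposition with $\AtmToCx$, both rely on the identification $\Hom*{\Atom}{X}\cong\ClImm^*X$ obtained from $\ClImm_!\dashv\ClImm^*$ and Yoneda, and both establish verticality by chasing an element and invoking the pointwise verticality of $\AtmToCx_\Gamma$. Your unwinding of the adjuncts in the verticality check is just a more explicit rendering of the paper's diagram chase.
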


\begin{restatable}[The normalization function]{theorem}{NormFunction}\label{thm:norm-fn}
  The functor $\Mor[\CmpAlg]{\ThCat}{\CatSTC}$ induces a vertical map
  $\Mor{\SynAlg.\Tp}{\prn{\CmpAlg.\Tp}\Sub{\CmpAlg}}$. Composing this with the
  vertical maps $\AtmToCx^*\Sub{\CmpAlg.\Tp}$ and $\NormTp$, we obtain a
  vertical \emph{normalization} map $\Con{nbe}$ sending a syntactic type to the
  normal form chosen by its normalization structure in $\CmpAlg$:
  \[
    \begin{tikzpicture}[diagram,baseline=(0.base)]
      \node (0) {$\SynAlg.\Tp$};
      \node (1) [right = of 0] {$\prn{\CmpAlg.\Tp}\Sub{\CmpAlg}$};
      \node (2) [right = 3.2cm of 1] {$\CmpAlg.\Tp$};
      \node (3) [right = of 2] {$\NfTp$};
      \path[->,exists] (0) edge (1);
      \path[->] (1) edge node [above] {$\AtmToCx^*\Sub{\CmpAlg.\Tp}$} (2);
      \path[->] (2) edge node [above] {$\NormTp$} (3);
    \end{tikzpicture}
  \]
\end{restatable}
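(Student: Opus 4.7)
The plan is to construct the first vertical map $\SynAlg.\Tp \to \prn{\CmpAlg.\Tp}\Sub{\CmpAlg}$ directly from the functoriality of $\CmpAlg$, and then read off the normalization function as the specified three-step composite. Under the comma-category presentation $\CatSTC \simeq \Psh{\RxCat}\downarrow\StrMap^*$ of Remark~\ref{rem:comma-cat}, both the source and the target have the same syntactic part $\Yo[\ThCat]{\Tp}$: for $\SynAlg.\Tp$ because it is the image under the open direct image of the generic syntactic sort of types (whose structure map is the identity $\id : \StrMap^*\Yo[\ThCat]{\Tp} \to \StrMap^*\Yo[\ThCat]{\Tp}$), and for $\prn{\CmpAlg.\Tp}\Sub{\CmpAlg}$ by definition, using the alignment $\OpImm^*\prn{\CmpAlg.\Tp} = \SynAlg.\Tp$ supplied by Theorem~\ref{thm:computability-algebra}.

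A vertical morphism between these two sheaves is therefore equivalent to a natural transformation of presheaves on $\RxCat$, $\StrMap^*\Yo[\ThCat]{\Tp} \to \Hom_{\CatSTC}\prn{\bbrk{-},\CmpAlg.\Tp}$, commuting with the structure maps to $\StrMap^*\Yo[\ThCat]{\Tp}$. The only plausible candidate is pointwise application of $\CmpAlg$: a syntactic type $A : \StrMap\Gamma \to \Tp$ in $\ThCat$ is sent to its computability interpretation $\CmpAlg(A) : \CmpAlg(\StrMap\Gamma) \to \CmpAlg.\Tp$, which by the definition $\bbrk{\Gamma} = \CmpAlg(\StrMap\Gamma)$ is precisely an element of $\Hom_{\CatSTC}\prn{\bbrk{\Gamma},\CmpAlg.\Tp}$. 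Naturality in $\Gamma : \RxCat$ is immediate from the functoriality of the composite $\CmpAlg\circ\StrMap$, and compatibility with the structure maps amounts to the observation that restricting $\CmpAlg(A)$ to its syntactic aspect recovers $\SynAlg(A) = A$; this is exactly what the alignment clause of Theorem~\ref{thm:computability-algebra} provides.

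With this vertical map at hand, the normalization function is obtained by the two post-compositions indicated in the statement: $\AtmToCx^*\Sub{\CmpAlg.\Tp}$ of Lemma~\ref{lem:reix-atom} takes a "generic" computable substitution $\bbrk{\Gamma} \to \CmpAlg.\Tp$ and evaluates it on the reflected atomic substitution $\AtmToCx\Sub{\Gamma} : \Atom{\Gamma} \to \bbrk{\Gamma}$ from Construction~\ref{con:atm-to-cx}, yielding a computability structure indexed by atomic contexts; then $\NormTp$ reads off the chosen normal form. Both are vertical by construction, so the total composite $\Con{nbe} : \SynAlg.\Tp \to \NfTp$ is vertical as required.

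The main obstacle is purely bookkeeping: one must keep careful track of which objects live in the syntactic versus semantic regions of $\GlTop$ and verify that the candidate map respects the structure maps on each side. There is no conceptually new difficulty, as the substantive work has been done in Theorem~\ref{thm:computability-algebra} (construction of the aligned computability algebra), Construction~\ref{con:atm-to-cx} (reflection of atoms), and Lemma~\ref{lem:reix-atom} (reindexing along $\AtmToCx$); the present theorem simply bundles these ingredients by applying $\CmpAlg$ to the generic type and packaging the result as a vertical morphism of sheaves on $\GlTop$.
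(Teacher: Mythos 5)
Your proposal is correct and matches the paper's own proof essentially verbatim: both unfold the vertical map as a square in the comma category whose downstairs component is the identity on $\StrMap^*\Yo[\ThCat]{\Tp}$ and whose upstairs component sends $\Mor[A]{\StrMap{\Gamma}}{\Tp}$ to $\Mor[\CmpAlg\prn{A}]{\bbrk{\Gamma}}{\CmpAlg.\Tp}$ by functoriality of the computability interpretation, with compatibility of structure maps supplied by the alignment of \cref{thm:computability-algebra}. The subsequent composition with $\AtmToCx^*\Sub{\CmpAlg.\Tp}$ and $\NormTp$ is handled identically in both.
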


We can similarly exhibit a pointwise vertical normalization function for
syntactic \emph{terms}:
\[
  \Prod{A:\SynAlg.\Tp}{\Ext{\IHom{\SynAlg.\Tm\prn{A}}{\Nf{A}}}{\Syn}{\ArrId{}}}
\]

The standard correctness conditions (soundness and completeness) for
normalization follow immediately.

\begin{theorem}[Correctness of normalization]\label{thm:soundness}
  The normalization function is sound and complete for cubical type theory.
  \begin{enumerate}
    \item \emph{Completeness} --- if two (types, terms) are equal, then they are taken to equal normal forms.
    \item \emph{Soundness} --- if two (types, terms) are taken to the same normal form, then they are equal.
  \end{enumerate}
\end{theorem}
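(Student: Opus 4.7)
Both directions of the statement follow essentially formally from the verticality of $\Con{nbe}$ established in \cref{thm:norm-fn}, combined with the correspondence between statements in the internal language of $\CatSTC$ under $\Syn$ and external equalities in $\ThCat$. The plan is therefore to unwind the relevant definitions rather than to do any further substantive construction.

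For completeness, I would observe that $\Con{nbe}$ is by construction a morphism in $\CatSTC$, hence a natural transformation of presheaves on $\GlTop$. Evaluated at any atomic context $\Gamma : \RxCat$ via the closed immersion $\ClImm$, its action on types (resp.\ terms) at stage $\StrMap{\Gamma}$ is a genuine function, not merely a relation on representatives; in particular, two judgmentally equal syntactic types already interpret as the same element of $\SynAlg.\Tp$ and so land at the same normal form. The term-level argument is identical, using the term-level normalization map whose existence is sketched after \cref{thm:norm-fn}.

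For soundness, I would exploit verticality. Because $\NfTp$ is defined as $\Ext{\UU}{\Syn}{\SynAlg.\Tp}$, there is an implicit projection $\Mor[\pi]{\NfTp}{\SynAlg.\Tp}$ available under $\_:\Syn$ that restricts to the identity on the aligned part. That $\Con{nbe}$ is vertical says precisely that $\pi \circ \Con{nbe} = \ArrId{\SynAlg.\Tp}$ under $\Syn$, and similarly for terms. Hence, if two syntactic types $A,B$ are sent to the same global element of $\NfTp$, then already under $\Syn$ we have $A = B$ as elements of $\SynAlg.\Tp$; applying $\pi$ to a common normal form converts the hypothesis on normal forms into a syntactic equation.

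It then remains to translate this internal equality under $\Syn$ into an external equality of morphisms in $\ThCat$. This is a formal consequence of \cref{cmp:syntactic-open}, which exhibits an equivalence $\Sl{\CatSTC}{\Syn} \simeq \Psh{\ThCat}$: global elements of $\SynAlg.\Tp$ at stage $\Gamma : \RxCat$, once pulled back to $\Sl{\CatSTC}{\Syn}$, correspond to elements of $\Tp : \Psh{\ThCat}$ at stage $\StrMap{\Gamma}$ (using $\OpImm^*\bbrk{\Gamma} = \Yo[\ThCat]{\StrMap{\Gamma}}$), and by Yoneda to morphisms $\StrMap{\Gamma}\to\Tp$ in $\ThCat$. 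The only real subtlety is keeping track of naturality in $\Gamma$, i.e., checking that the statement holds uniformly in the atomic context; but this is automatic because $\Con{nbe}$ is a morphism of sheaves and every projection and coercion involved has been built as a vertical map. Consequently I do not anticipate a genuine obstacle: all of the work has already been absorbed into the construction of the computability algebra (\cref{thm:computability-algebra}) and the normalization function (\cref{thm:norm-fn}).
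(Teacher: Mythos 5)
Your proposal is correct and matches the paper's own argument: completeness is automatic because $\Con{nbe}$ is a function on judgmental equivalence classes, and soundness follows from verticality, which makes $\Con{nbe}$ a section of the projection $\Mor{\NfTp}{\SynAlg.\Tp}$ (the unit of the open modality) and hence a monomorphism. Your extra paragraph unwinding the internal-to-external translation via $\Sl{\CatSTC}{\Syn}\simeq\Psh{\ThCat}$ and Yoneda is a correct elaboration of a step the paper leaves implicit.
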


\begin{proof}
  Completeness is automatic because our entire development was carried out
  relative to judgmental equivalence classes of terms. Soundness follows from
  the fact that the normalization function is vertical, hence a section to the
  unit of the open modality, hence a monomorphism.
\end{proof}

We would not expect the $\Pi$ constructor to be injective in the syntactic
category (a derivability): because monomorphisms are preserved by left exact
functors, this would imply that any model of cubical type theory has injective
type constructors. However, there is a \emph{modal} version of injectivity
corresponding to the traditional admissibility statement that \emph{does} hold.

\begin{restatable}[Injectivity of type constructors]{theorem}{Injectivity}\label{thm:injectivity}
  The following formula holds in the internal logic of $\CatSTC = \Sh{\GlTop}$:
  \begin{multline*}
    \forall A,A',B,B'.\\
    \SynAlg.\Pi\prn{A,B} = \SynAlg.\Pi\prn{A',B'}
    \implies
    \Cl\prn{
      \prn{A,B} = \prn{A',B'}
    }
  \end{multline*}
\end{restatable}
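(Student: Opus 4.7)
The plan is to convert the syntactic equation into an equation of normal forms via the normalization function (\cref{thm:norm-fn}), exploit the free inductive character of the normal-form constructor $\frk{pi}$ for $\Pi$-types to conclude equality of the component normal forms, and then apply the soundness direction of \cref{thm:soundness} to pull the result back to the original components. Because this argument passes through semantic data that is trivial under the open modality $\Op$, the conclusion naturally lands inside the closed modality $\Cl$.

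First, I would work modally under $\Cl$: on the syntactic region the desired $\Cl$-equality holds automatically, so only the semantic region matters, and there the full computability model is available. In that regime, I lift each of $A, A', B, B'$ to cubical normalization structures using $\CmpAlg$ from \cref{thm:computability-algebra}, and use the alignment of $\CmpAlg.\Pi$ over $\SynAlg.\Pi$ to transport the assumption $\SynAlg.\Pi\prn{A,B} = \SynAlg.\Pi\prn{A',B'}$ to an equality of the corresponding $\CmpAlg.\Pi$-types in $\Tp$. Applying $\NormTp$ on both sides and unfolding according to the $\Pi$ clause of the normalization structure---which, in parallel with the $\Con{path}$ clause $\NormTp{\Con{path}\prn{A,a_0,a_1}} = \frk{path}\prn{\lambda i.\NormTp{A\prn{i}}, \Reify{A\prn{0}}a_0, \Reify{A\prn{1}}a_1}$ of \cref{fig:norm-path}, uses the constructor $\frk{pi}$ on $\NormTp{A}$ together with a reification of $B$---produces an equation of the form $\frk{pi}\prn{\NormTp{A},\ldots} = \frk{pi}\prn{\NormTp{A'},\ldots}$.

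Next, I would invoke injectivity of the free constructor $\frk{pi}$, which holds because the normal and neutral forms are built as an indexed quotient inductive type in $\UUSem$ (\cref{sec:axioms}); this yields $\NormTp{A} = \NormTp{A'}$ and the corresponding equality of the reifications of $B$ and $B'$. Finally, soundness of normalization (\cref{thm:soundness}) upgrades normal-form equality to equality of syntactic components: since $\Con{nbe}$ is vertical and therefore a monomorphism (being a section of the unit of the open modality), equal normal forms of $A$ and $A'$ force $A = A'$, and likewise for $B$ and $B'$, delivering the modal equality $\Cl\prn{\prn{A,B} = \prn{A',B'}}$.

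The main obstacle is the lifting step: promoting $A, A', B, B'$ from syntactic types to full normalization structures is not globally canonical, because the semantic part of a normalization structure is extra data not determined by the syntactic projection. This non-canonicity is exactly what forces the conclusion to live under $\Cl$; as the remark preceding the theorem observes, a strict equality would be preserved by every locally Cartesian closed functor and would thereby invalidate well-known non-injective models of cubical type theory. Carefully managing the bookkeeping of verticality and of the $\Op/\Cl$ duality so that each intermediate equation lands in the appropriate modal region is the delicate part of the argument.
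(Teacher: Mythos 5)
Your overall route is the same as the paper's: completeness of normalization turns the hypothesis into $\frk{pi}\prn{\Con{nbe}\prn{A},\lambda x.\Con{nbe}\prn{B\prn{x}}} = \frk{pi}\prn{\Con{nbe}\prn{A'},\lambda x.\Con{nbe}\prn{B'\prn{x}}}$, injectivity of $\frk{pi}$ in $\NfTp$ extracts the components, and soundness (verticality of $\Con{nbe}$, hence monomorphy) transports the result back to syntactic types. However, there is a genuine gap at the step you treat as automatic. You describe $\frk{pi}$ as a ``free constructor'' and assert that its injectivity ``holds because the normal and neutral forms are built as an indexed quotient inductive type''---but the quotient structure is precisely what threatens injectivity rather than granting it. $\NfTp$ carries the boundary equation $\frk{glue}\prn{\phi,\frk{B},\frk{A},\frk{f}} = \frk{A}\prn{z}$ for $z:\brk{\phi}$, so a normal form literally equal to $\frk{pi}\prn{\frk{A},\frk{B}}$ may also be presented as a $\frk{glue}$ whose partial type is a $\frk{pi}$; constructors of a QIT need not be injective. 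The paper's \cref{lem:nf-con-inj} closes this gap by inductively defining a predicate $\Con{isPi}$ and partial projections $\Con{dom},\Con{cod}$ valued under $\Cl$, checking in each case that they respect the $\frk{glue}$ boundary; the conclusion of that lemma is itself only modal, which is one of the two sources of the $\Cl$ in the statement. Without some such argument your appeal to ``injectivity of the free constructor'' does not go through.

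A secondary point: your diagnosis of where the closed modality comes from is off. The passage from a syntactic type to its normalization structure is not ``non-canonical''---it is exactly the vertical map of \cref{thm:norm-fn}, induced functorially by $\CmpAlg$ from \cref{thm:computability-algebra}, and it determines $\Con{nbe}$ uniquely. The $\Cl$ arises instead because (i) injectivity is an admissibility rather than a derivability, reflected in the fact that $\NfTp$ restricts to $\SynAlg.\Tp$ under $\Syn$ where no injectivity information is available, and (ii) the QIT quotient just discussed forces even the normal-form injectivity lemma to conclude only up to $\Cl$. Your opening move of ``working under $\Cl$ from the start'' is harmless but does not by itself dispose of the $\frk{glue}$ boundary, since that equation is indexed by cofibrations $\brk{\phi}$ rather than by $\Syn$ and so survives restriction to the closed subtopos.
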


\begin{lemma}[Idempotence of normalization]\label{lem:idemp}
  For any normal type $A$ we have $\Con{nbe}\prn{A} = A : \NfTp$; and likewise,
  for any normal term $a$ we have $\Con{nbe}\prn{a} = a : \Nf{A}$.
\end{lemma}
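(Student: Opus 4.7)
The plan is to prove both claims simultaneously by mutual induction on the inductive definitions of $\NfTp$, $\Nf{-}$, and the stabilized neutral type $\SNe{-}{-}$ (whose constructors are axiomatized in \cref{fig:ne-nf,fig:norm-path,fig:norm-glue} and the appendix). The crucial observation is that for each connective, the clauses defining the cubical normalization structure in $\CmpAlg$ compute $\NormTp$, $\Reify$, and $\Reflect$ by applying the corresponding $\frk{constructor}$ to recursive normalizations of the subcomponents; therefore, when fed a normal form, $\Con{nbe}$ reassembles that same normal form.

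For the type case, consider a normal type such as $\frk{path}\prn{A,a_0,a_1}$ where $A : \IHom{\II}{\NfTp}$ and $a_\epsilon : \Nf{A\prn{\epsilon}}$. The underlying syntactic type, fed through $\CmpAlg$, produces the normalization structure whose $\NormTp$ clause (from \cref{fig:norm-path}) evaluates to $\frk{path}\prn{\lambda i. \NormTp{A\prn{i}}, \Reify{A\prn{0}}{a_0}, \Reify{A\prn{1}}{a_1}}$. By the inductive hypothesis on each $A\prn{i}$ (a normal type) and on the $a_\epsilon$ (normal terms), each component is unchanged, so $\Con{nbe}$ reproduces the original normal type. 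A symmetric argument handles $\frk{glue}$, $\Pi$, $\Sigma$, and the circle, since each $\NormTp$ clause is defined by a $\frk{connective}$ applied to $\NormTp$ or $\Reify$ of the immediate subterms.

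The stabilized-neutral case is the inductive engine that makes the $\Nf$ case work under binders: for $\nu = \NeGlue{\nu_0}{\phi}{\nu_1}$ we must show $\Reify{A}\prn{\Reflect{A}{\phi}{\nu}} = \nu_0$ (as neutrals), so that when a normal form wraps a neutral, the round-trip is transparent. For each elimination form, reflection places the stabilized-neutral head inside a computability datum built from $\frk{eliminator}$ applied to the old neutral, while reification extracts exactly that datum via the same $\frk{eliminator}$; the base case is $\phi = \bot_\FF$ with $\nu_0$ a variable, where \cref{ax:var} supplies $\frk{var}$ such that reflection--reification is the identity on variables.

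The main obstacle is the path-application case, which exemplifies the difficulty of stabilized neutrals. Given $\frk{papp}\prn{p_0,i}$ as a $\prn{\phi\lor_\FF\partial{i}}$-unstable neutral inside a $\frk{plam}$, the reflection clause in \cref{fig:norm-path} rebuilds a stabilized neutral whose computability datum on the locus $\phi\lor_\FF\partial{i}$ is the system $\brk{\phi\hookrightarrow p\prn{i}, \overline{i=\epsilon\hookrightarrow a_\epsilon}}$. One must verify that reifying this back under $\lambda i$ yields exactly $\frk{plam}\prn{\lambda i.\frk{papp}\prn{p_0,i}}$, which requires compatibility between the endpoint coherences built into $\frk{plam}/\frk{papp}$ and the syntactic realignment conditions on $\SNe{}{}$ over $\Syn \lor \phi$. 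Analogous diagrammatic coherences are needed for $\Con{glue}$, where the interaction of $\frk{englue}$, $\frk{unglue}$, and the equivalence $f$ must be shown to preserve the normal form across the reflection--reification round-trip; in each case the relevant equations are already built into the signature of normal and neutral forms, so the argument reduces to unfolding definitions guided by the inductive hypotheses.
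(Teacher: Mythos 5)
Your overall strategy --- simultaneous induction on the (quotient\nobreakdash-)inductive definitions of $\NfTp$, $\Nf{-}$, and the neutrals, unfolding the clauses of the normalization structures --- is the same as the paper's (\cref{lem:norm-stab}), and your treatment of the type and normal-form constructor cases is fine. The gap is in the induction hypothesis you assign to neutrals, and the induction does not close with it. You propose to show $\Reify{A}\prn{\Reflect{A}{\phi}{\nu}} = \nu_0$ ``as neutrals.'' This equation is not even well-sorted: $\Reify{A}$ lands in normal forms, and a neutral is not a normal form except where the grammar explicitly embeds it (e.g.\ via $\frk{lift}$ at the circle). At $\Pi$- and path types, $\Reify{}$ after $\Reflect{}$ produces the $\eta$-expansion $\frk{lam}\prn{\ldots\frk{app}\prn{\nu_0,\ldots}}$ or $\frk{plam}\prn{\ldots\frk{papp}\prn{\nu_0,\ldots}}$, a different syntactic object from $\nu_0$; indeed the target you name for your path-application case, $\frk{plam}\prn{\lambda i.\frk{papp}\prn{p_0,i}}$, is not a well-formed normal form, because the body of $\frk{plam}$ must itself be a normal form rather than a neutral.

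The invariant actually needed for neutrals is the one the paper proves: for $x:\Ne{\phi}{A}$, the \emph{interpretation} satisfies $\bbrk{x} = \Reflect{\bbrk{A}}{\phi}{\NeGlue{x}{\phi}{\bbrk{x}}}$, i.e.\ the semantic value of a neutral is the reflection of that neutral stabilized by its own semantic value. This is the statement consumed where neutrals are embedded into normal forms (the $\frk{lift}$ clause at the circle) and propagated through eliminations ($\frk{app}$, $\frk{papp}$, $\frk{unglue}$, the projections), with variables as the base case, where the partial datum on $\bot_\FF$ is vacuous. With that corrected invariant the remainder of your outline goes through essentially as in the paper. One further point worth recording: since the normal and neutral forms are a quotient inductive type, the three mutually proved statements must be checked to respect the equational constructors (unstable collapse, the glue and loop boundaries); this holds because reflection restricted to the locus of instability is the identity on the attached computability datum.
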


\begin{corollary}\label{cor:tightness}
  The normal form presentation is \emph{tight}: a given term has only a single
  normal form. Hence the normalization function is an isomorphism.
\end{corollary}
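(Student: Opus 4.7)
The plan is to obtain the corollary as a direct consequence of soundness (\cref{thm:soundness}) and idempotence (\cref{lem:idemp}). First I would identify the implicit coercion $\iota : \NfTp \to \SynAlg.\Tp$ arising from the extent-type structure $\NfTp : \Ext{\UU}{\Syn}{\SynAlg.\Tp}$ axiomatized in \cref{fig:ne-nf}, and reinterpret \cref{lem:idemp} as the equation $\Con{nbe} \circ \iota = \mathrm{id}_{\NfTp}$, which exhibits $\iota$ as a section of $\Con{nbe}$.

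Next I would invoke the soundness half of \cref{thm:soundness}: because $\Con{nbe}$ is vertical---a section to the unit of the open modality---it is a monomorphism. A monomorphism that also admits a section is an isomorphism; equivalently, the equation $\iota \circ \Con{nbe} = \mathrm{id}_{\SynAlg.\Tp}$ follows from mono-cancellation applied to
\[
  \Con{nbe} \circ \prn{\iota \circ \Con{nbe}} = \prn{\Con{nbe} \circ \iota} \circ \Con{nbe} = \Con{nbe} = \Con{nbe} \circ \mathrm{id}_{\SynAlg.\Tp}.
\]
Tightness is then immediate: each syntactic type corresponds to exactly one normal type, namely $\Con{nbe}(A)$, and $\Con{nbe}$ is therefore a bijection on equivalence classes. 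The term-level case (where $\Con{nbe} : \SynAlg.\Tm(A) \to \Nf{A}$) goes through by the same argument applied fiberwise over $A : \SynAlg.\Tp$, yielding a single normal form per judgmental equivalence class of terms.

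I do not expect any real obstacle here: the content of tightness has already been deposited by \cref{lem:idemp} and \cref{thm:soundness}, so the corollary is essentially bookkeeping that combines the two. The only subtlety worth flagging is that $\iota$ must be understood as the \emph{internal} morphism of $\CatSTC$ supplied by the extent-type coercion, rather than as a metatheoretic operation on raw normal forms; once that identification is made, the manipulation reduces to routine categorical algebra in $\CatSTC$.
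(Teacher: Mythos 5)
Your proof is correct and uses exactly the same two ingredients as the paper's: verticality of $\Con{nbe}$ (hence monicity, as established in the soundness proof of \cref{thm:soundness}) together with idempotence (\cref{lem:idemp}). The paper phrases the argument as a contradiction---two distinct normal forms of one term would force at least one of them outside the image of $\Con{nbe}$, contradicting idempotence---whereas you package the same facts as ``split epimorphism plus monomorphism equals isomorphism,'' an equivalent and arguably cleaner formulation.
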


\begin{proof}
  Suppose we have distinct normal forms $\frk{a} \not=\frk{a}'$ for a term $a$;
  the normalization function is vertical, so at least one of $\frk{a},\frk{a}'$
  lies outside its image; but this contradicts \cref{lem:idemp}.
\end{proof}

Both the syntax and the normal forms of cubical type theory are recursively
enumerable, because they can be presented by finitely many rules in a
conventional deductive system. Hence we have the following corollaries of
\cref{thm:norm-fn,thm:soundness,cor:tightness}.

\begin{corollary}[Normalization algorithm]\label{thm:norm-alg}
  The normalization function (\cref{thm:norm-fn}) is tracked by a recursive
  function sending well-typed raw (types, terms) to normal forms.
\end{corollary}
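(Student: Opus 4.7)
The plan is to leverage the classical fact that any recursively enumerable and functional relation $R(x,y)$ is the graph of a total recursive function whenever $R(x,y)$ is guaranteed to hold for some $y$ given each $x$. Here I would instantiate $x$ as a raw well-typed term or type and $y$ as a raw normal form, with $R(x,y)$ expressing that $y$ is a normal form judgmentally equal to $x$.

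First, I would verify that the three ingredients are all r.e.\ in the standard sense. Raw well-typed terms and types are generated by the finitely many inference rules implicit in the signature of \cref{fig:cubical-tt:judgments,fig:cubical-tt:connectives}; normal and neutral forms are generated by the finitely many rules implicit in \cref{fig:ne-nf,fig:norm-path,fig:norm-glue} (together with the remaining cases in the appendix); and judgmental equality is r.e.\ because derivations are finite objects built from finitely many rule schemas. In particular, for fixed type $A$, the set of normal forms of type $A$ is r.e., and the set of pairs $(x,n)$ with $x$ provably equal to $n$ at $A$ is r.e.

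Second, I would define the candidate algorithm: given a raw well-typed $x : A$, dovetail an enumeration of pairs $(n,\pi)$ with $n$ a raw normal form and $\pi$ a derivation of $x = n : A$, and return the first $n$ found. To see this terminates, note that \cref{thm:norm-fn} provides a normal form $\Con{nbe}(x)$ together with the equation $x = \Con{nbe}(x) : A$ (by verticality / soundness in \cref{thm:soundness}), so at least one such pair exists and the search succeeds. For correctness, \cref{cor:tightness} (tightness) tells us that any normal form equal to $x$ is literally $\Con{nbe}(x)$; hence the algorithm's output is independent of the enumeration strategy and agrees with the abstract normalization function on equivalence classes.

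The only subtle point is checking that the three r.e.\ witnesses above are genuinely enumerable over raw syntax, which reduces to exhibiting a conventional rule-based presentation of the finitely generated LCCC $\ThCat$ and of the judgments $\NfTp, \Nf{-}, \Ne{-}{-}$; this is routine once the inductive presentation of normals and neutrals is made explicit (as in the appendix). There is no deep obstacle here, and the resulting procedure is of course grossly infeasible; efficient normalization in practice proceeds directly via the NbE interpretation of $\CmpAlg$ rather than by blind search, but such an optimization lies outside the scope of the corollary.
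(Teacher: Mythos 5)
Your proposal is correct and takes essentially the same route as the paper: the paper derives this corollary directly from the recursive enumerability of syntax, judgmental equality, and normal forms together with \cref{thm:norm-fn,thm:soundness,cor:tightness}, which is exactly the dovetailing search you describe (existence of a witness from the normalization function, uniqueness of the output from tightness). You have merely spelled out the details the paper leaves implicit.
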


\begin{corollary}[Decidability of equality]\label{thm:dec-eq}
  Judgmental equality of types and terms in atomic contexts is decidable.
\end{corollary}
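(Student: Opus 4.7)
The plan is to reduce decidability of judgmental equality to syntactic equality of normal forms, leveraging the tight normalization function established in \cref{thm:norm-fn,cor:tightness} together with its computability (\cref{thm:norm-alg}). Given two well-typed (types or terms) $a,b$ in an atomic context $\Gamma:\RxCat$, I would first run the normalization algorithm of \cref{thm:norm-alg} on both inputs to obtain normal forms $\frk{a} = \Con{nbe}\prn{a}$ and $\frk{b} = \Con{nbe}\prn{b}$. By \cref{cor:tightness}, the normalization function is a bijection between judgmental equivalence classes of terms and normal forms, so $a = b$ in $\ThCat$ if and only if $\frk{a}$ and $\frk{b}$ are equal as normal forms.

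The second step is to exhibit a decision procedure for equality of normal forms. Because the neutrals and normals of cubical type theory are presented (in the appendix) by a finitary indexed quotient inductive definition, two normal forms are equal in $\NfTp$ or $\Nf{A}$ precisely when they admit a structurally matching pair of derivations up to the generating equations. A standard mutual recursion on the grammar of $\Nf{A}$, $\NfTp$, $\Ne{\phi}{A}$, $\II$, and $\FF$ yields a decision procedure, with the only nontrivial case being equality of cofibrations $\phi,\psi:\FF$: here one relies on the fact that $\FF$ is freely generated from interval equalities under $\land_\FF$, $\lor_\FF$, and $\forall_\II$, so equality of cofibrations reduces to equivalence of finite propositional formulas over a decidable atomic theory of interval equalities, which is decidable by a normal form argument (e.g., disjunctive normal form).

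The final step is to transport decidability back across the bijection of \cref{cor:tightness}: if $\frk{a} = \frk{b}$ then by completeness (\cref{thm:soundness}) $a = b$; if $\frk{a}\neq \frk{b}$ then by soundness $a\neq b$. Combining the two effective procedures (normalization and normal-form equality) gives an algorithm deciding judgmental equality in any atomic context.

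The main obstacle I would expect is the decidability of equality on $\FF$, since cofibrations mix disjunction, conjunction, universal quantification over $\II$, and interval equalities. However, the restriction in \cref{sec:cof} that $\FF$ is generated by only these operations (rather than arbitrary propositions in $\Omega$) is precisely what makes this finite combinatorial problem tractable; the authors flagged this motivation when introducing $\FF\subseteq\ExtFF$. Everything else is routine structural recursion on the inductive presentation of neutrals and normals given in the appendix.
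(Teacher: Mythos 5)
Your proposal is correct and follows essentially the same route as the paper: normalize both sides via the algorithm of \cref{thm:norm-alg}, then appeal to soundness/completeness and an external decision procedure for equality of normal forms, whose only delicate point is the treatment of cofibrations. The paper isolates that last step as a separate lemma in the appendix, deciding normal-form equality by recursion on the atomic context and handling binders of type $\brk{\phi}$ by ``left inversion'' of cofibrations into equalizing atomic substitutions in the style of CCHM---which is the same idea you sketch for deciding equality over the restricted class $\FF$.
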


\begin{proof}
  First we obtain their unique normal forms using the algorithm described in
  \cref{thm:norm-alg}; because equality of normal forms is (externally) decidable, we are done.
\end{proof}

\begin{remark}
  The above results unfold to statements about judgments in atomic contexts
  (i.e., in the image of $\Mor[\StrMap]{\RxCat}{\ThCat}$), but standard
  presentations of cubical type theory also allow context extension by
  $\_:\brk{\phi}$. However, one can algorithmically eliminate such assumptions
  by left inversion as in the implementations of Cubical Agda and
  \redtt{}.
\end{remark}
 
\section*{Acknowledgments}

We thank Mathieu Anel, Steve Awodey, Rafa\"el Bocquet, Thierry Coquand,
Favonia, Marcelo Fiore, Daniel Gratzer, Robert Harper, Kenji Maillard, Anders
M\"ortberg, Michael Shulman, Thomas Streicher, and Andrea Vezzosi for useful
conversations on cubical computability and Artin gluing, and for pointing out
mistakes in previous drafts of this paper. We thank Tristan Nguyen at AFOSR for
his support.

This work was supported in part by AFOSR under MURI grants FA9550-15-1-0053,
FA9550-19-1-0216, and FA9550-21-0009. Any opinions, findings and conclusions or
recommendations expressed in this material are those of the authors and do not
necessarily reflect the views of the AFOSR.

\nocite{coquand:2019}
\nocite{sterling-gratzer:2020:stc,sterling:thesis-proposal,sterling-harper:2020}
\nocite{anel-joyal:2021,vickers:2007}
\nocite{rijke-shulman-spitters:2017,shulman:blog:internalizing}

\nocite{bmss:2011,kock:2006,kock:2009,bauer:2006,bunge-gago-san-luis:2018,hyland:1991,blechschmidt:2017,orton-pitts:2016,hottbook}

\bibliographystyle{IEEEtranSN}
\bibliography{refs-local,temp-refs}

\begin{thebibliography}{68}
\providecommand{\natexlab}[1]{#1}
\providecommand{\url}[1]{#1}
\csname url@samestyle\endcsname
\providecommand{\newblock}{\relax}
\providecommand{\bibinfo}[2]{#2}
\providecommand{\BIBentrySTDinterwordspacing}{\spaceskip=0pt\relax}
\providecommand{\BIBentryALTinterwordstretchfactor}{4}
\providecommand{\BIBentryALTinterwordspacing}{\spaceskip=\fontdimen2\font plus
\BIBentryALTinterwordstretchfactor\fontdimen3\font minus
  \fontdimen4\font\relax}
\providecommand{\BIBforeignlanguage}[2]{{%
\expandafter\ifx\csname l@#1\endcsname\relax
\typeout{** WARNING: IEEEtranSN.bst: No hyphenation pattern has been}%
\typeout{** loaded for the language `#1'. Using the pattern for}%
\typeout{** the default language instead.}%
\else
\language=\csname l@#1\endcsname
\fi
#2}}
\providecommand{\BIBdecl}{\relax}
\BIBdecl

\bibitem[Abel(2013)]{abel:2013}
A.~Abel, ``Normalization by evaluation: Dependent types and impredicativity,''
  Habilitation, Ludwig-Maximilians-Universit\"{a}t M\"{u}nchen, 2013.

\bibitem[Altenkirch et~al.(2001)Altenkirch, Dybjer, Hofmann, and
  Scott]{altenkirch-dybjer-hofmann-scott:2001}
T.~Altenkirch, P.~Dybjer, M.~Hofmann, and P.~Scott, ``Normalization by
  evaluation for typed lambda calculus with coproducts,'' in \emph{{Proceedings
  of the 16th Annual IEEE Symposium on Logic in Computer Science}}.\hskip 1em
  plus 0.5em minus 0.4em\relax Washington, DC, USA: IEEE Computer Society,
  2001, pp. 303--.

\bibitem[Altenkirch and Kaposi(2016)]{altenkirch-kaposi:2016}
T.~Altenkirch and A.~Kaposi, ``Type theory in type theory using quotient
  inductive types,'' in \emph{Proceedings of the 43rd Annual ACM SIGPLAN-SIGACT
  Symposium on Principles of Programming Languages}, ser. POPL '16.\hskip 1em
  plus 0.5em minus 0.4em\relax St. Petersburg, FL, USA: ACM, 2016, pp. 18--29.

\bibitem[Altenkirch et~al.(1995)Altenkirch, Hofmann, and
  Streicher]{altenkirch-hofmann-streicher:1995}
T.~Altenkirch, M.~Hofmann, and T.~Streicher, ``Categorical reconstruction of a
  reduction free normalization proof,'' in \emph{Category Theory and Computer
  Science}, D.~Pitt, D.~E. Rydeheard, and P.~Johnstone, Eds.\hskip 1em plus
  0.5em minus 0.4em\relax Berlin, Heidelberg: Springer Berlin Heidelberg, 1995,
  pp. 182--199.

\bibitem[Anel and Joyal(2021)]{anel-joyal:2021}
M.~Anel and A.~Joyal, ``Topo-logie,'' in \emph{New Spaces in Mathematics:
  Formal and Conceptual Reflections}, M.~Anel and G.~Catren, Eds.\hskip 1em
  plus 0.5em minus 0.4em\relax Cambridge University Press, 2021, vol.~1, ch.~4,
  pp. 155--257.

\bibitem[Angiuli(2019)]{angiuli:2019}
C.~Angiuli, ``Computational semantics of {Cartesian} cubical type theory,''
  Ph.D. dissertation, Carnegie Mellon University, 2019.

\bibitem[Angiuli et~al.(2017)Angiuli, Harper, and
  Wilson]{angiuli-harper-wilson:2017}
C.~Angiuli, R.~Harper, and T.~Wilson, ``Computational higher-dimensional type
  theory,'' in \emph{POPL 2017: Proceedings of the 44th ACM SIGPLAN Symposium
  on Principles of Programming Languages}.\hskip 1em plus 0.5em minus
  0.4em\relax Paris, France: ACM, 2017, pp. 680--693.

\bibitem[Angiuli et~al.(2018)Angiuli, {Hou (Favonia)}, and
  Harper]{angiuli-favonia-harper:2018}
\BIBentryALTinterwordspacing
C.~Angiuli, K.-B. {Hou (Favonia)}, and R.~Harper, ``{Cartesian Cubical
  Computational Type Theory: Constructive Reasoning with Paths and
  Equalities},'' in \emph{27th EACSL Annual Conference on Computer Science
  Logic (CSL 2018)}, ser. Leibniz International Proceedings in Informatics
  (LIPIcs), D.~Ghica and A.~Jung, Eds., vol. 119.\hskip 1em plus 0.5em minus
  0.4em\relax Dagstuhl, Germany: Schloss Dagstuhl--Leibniz-Zentrum fuer
  Informatik, 2018, pp. 6:1--6:17. [Online]. Available:
  \url{http://drops.dagstuhl.de/opus/volltexte/2018/9673}
\BIBentrySTDinterwordspacing

\bibitem[Angiuli et~al.(2019)Angiuli, Brunerie, Coquand, {Hou (Favonia)},
  Harper, and Licata]{abcfhl:2019}
\BIBentryALTinterwordspacing
C.~Angiuli, G.~Brunerie, T.~Coquand, K.-B. {Hou (Favonia)}, R.~Harper, and
  D.~R. Licata, ``Syntax and models of {Cartesian} cubical type theory,'' Feb.
  2019, preprint. [Online]. Available:
  \url{https://github.com/dlicata335/cart-cube}
\BIBentrySTDinterwordspacing

\bibitem[Artin et~al.(1972)Artin, Grothendieck, and Verdier]{sga:4}
M.~Artin, A.~Grothendieck, and J.-L. Verdier, \emph{Th\'{e}orie des topos et
  cohomologie \'{e}tale des sch\'{e}mas}.\hskip 1em plus 0.5em minus
  0.4em\relax Berlin: Springer-Verlag, 1972, s\'{e}minaire de G\'{e}om\'{e}trie
  Alg\'{e}brique du Bois-Marie 1963--1964 (SGA 4), Dirig\'{e} par M. Artin, A.
  Grothendieck, et J.-L. Verdier. Avec la collaboration de N. Bourbaki, P.
  Deligne et B. Saint-Donat, Lecture Notes in Mathematics, Vol. 269, 270, 305.

\bibitem[Awodey and Butz(2000)]{awodey-butz:2000}
\BIBentryALTinterwordspacing
S.~Awodey and C.~Butz, ``Topological completeness for higher-order logic,''
  \emph{The Journal of Symbolic Logic}, vol.~65, no.~3, pp. 1168--1182, 2000.
  [Online]. Available: \url{http://www.jstor.org/stable/2586693}
\BIBentrySTDinterwordspacing

\bibitem[Awodey(2018)]{awodey:2018:natural-models}
S.~Awodey, ``Natural models of homotopy type theory,'' \emph{Mathematical
  Structures in Computer Science}, vol.~28, no.~2, pp. 241--286, 2018.

\bibitem[Bauer(2006)]{bauer:2006}
A.~Bauer, ``First steps in synthetic computability theory,'' \emph{Electronic
  Notes in Theoretical Computer Science}, vol. 155, pp. 5--31, 2006,
  proceedings of the 21st Annual Conference on Mathematical Foundations of
  Programming Semantics (MFPS XXI).

\bibitem[Birkedal et~al.(2011)Birkedal, M{\o{}}gelberg, Schwinghammer, and
  Stovring]{bmss:2011}
L.~Birkedal, R.~E. M{\o{}}gelberg, J.~Schwinghammer, and K.~Stovring, ``First
  steps in synthetic guarded domain theory: Step-indexing in the topos of
  trees,'' in \emph{Proceedings of the 2011 IEEE 26th Annual Symposium on Logic
  in Computer Science}.\hskip 1em plus 0.5em minus 0.4em\relax Washington, DC,
  USA: IEEE Computer Society, 2011, pp. 55--64.

\bibitem[Birkedal et~al.(2016)Birkedal, Bizjak, Clouston, Grathwohl, Spitters,
  and Vezzosi]{bbcgsv:2016}
L.~Birkedal, A.~Bizjak, R.~Clouston, H.~B. Grathwohl, B.~Spitters, and
  A.~Vezzosi, ``{Guarded Cubical Type Theory: Path Equality for Guarded
  Recursion},'' in \emph{25th EACSL Annual Conference on Computer Science Logic
  (CSL 2016)}, ser. Leibniz International Proceedings in Informatics (LIPIcs),
  J.-M. Talbot and L.~Regnier, Eds., vol.~62.\hskip 1em plus 0.5em minus
  0.4em\relax Dagstuhl, Germany: Schloss Dagstuhl--Leibniz-Zentrum fuer
  Informatik, 2016, pp. 23:1--23:17.

\bibitem[Blechschmidt(2017)]{blechschmidt:2017}
I.~Blechschmidt, ``Using the internal language of toposes in algebraic
  geometry,'' Ph.D. dissertation, Universit\"{a}t Augsberg, 2017.

\bibitem[Bunge and Funk(2006)]{bunge-funk:2006}
M.~Bunge and J.~Funk, \emph{Singular coverings of toposes}, ser. Lecture notes
  in mathematics, 1890.\hskip 1em plus 0.5em minus 0.4em\relax Berlin:
  Springer, 2006.

\bibitem[Bunge et~al.(2018)Bunge, Gago, and San~Luis]{bunge-gago-san-luis:2018}
M.~Bunge, F.~Gago, and A.~M. San~Luis, \emph{Synthetic Differential Topology},
  ser. London Mathematical Society Lecture Note Series.\hskip 1em plus 0.5em
  minus 0.4em\relax Cambridge University Press, 2018.

\bibitem[Carboni and Johnstone(1995)]{carboni-johnstone:1995}
A.~Carboni and P.~Johnstone, ``Connected limits, familial representability and
  {Artin} glueing,'' \emph{Mathematical Structures in Computer Science},
  vol.~5, no.~4, pp. 441--459, 1995.

\bibitem[Cohen et~al.(2017)Cohen, Coquand, Huber, and M\"{o}rtberg]{cchm:2017}
\BIBentryALTinterwordspacing
C.~Cohen, T.~Coquand, S.~Huber, and A.~M\"{o}rtberg, ``{Cubical Type Theory: a
  constructive interpretation of the univalence axiom},'' \emph{IfCoLog Journal
  of Logics and their Applications}, vol.~4, no.~10, pp. 3127--3169, Nov. 2017.
  [Online]. Available:
  \url{http://www.collegepublications.co.uk/journals/ifcolog/?00019}
\BIBentrySTDinterwordspacing

\bibitem[Coquand(2019)]{coquand:2019}
T.~Coquand, ``Canonicity and normalization for dependent type theory,''
  \emph{Theoretical Computer Science}, vol. 777, pp. 184--191, 2019, in memory
  of Maurice Nivat, a founding father of Theoretical Computer Science - Part I.

\bibitem[Coquand et~al.(2018)Coquand, Huber, and
  M\"{o}rtberg]{coquand-huber-mortberg:2018}
T.~Coquand, S.~Huber, and A.~M\"{o}rtberg, ``On higher inductive types in
  cubical type theory,'' in \emph{Proceedings of the 33rd Annual ACM/IEEE
  Symposium on Logic in Computer Science}.\hskip 1em plus 0.5em minus
  0.4em\relax Oxford, United Kingdom: ACM, 2018, pp. 255--264.

\bibitem[Coquand et~al.(2019)Coquand, Huber, and
  Sattler]{coquand-huber-sattler:2019}
T.~Coquand, S.~Huber, and C.~Sattler, ``Homotopy canonicity for cubical type
  theory,'' in \emph{4th International Conference on Formal Structures for
  Computation and Deduction (FSCD 2019)}, ser. Leibniz International
  Proceedings in Informatics (LIPIcs), H.~Geuvers, Ed., vol. 131.\hskip 1em
  plus 0.5em minus 0.4em\relax Dagstuhl, Germany: Schloss
  Dagstuhl--Leibniz-Zentrum fuer Informatik, 2019.

\bibitem[Crole(1993)]{crole:1993}
R.~L. Crole, \emph{Categories for Types}, ser. Cambridge Mathematical
  Textbooks.\hskip 1em plus 0.5em minus 0.4em\relax New York: Cambridge
  University Press, 1993.

\bibitem[Diaconescu(1975)]{diaconescu:1975}
R.~Diaconescu, ``Change of base for toposes with generators,'' \emph{Journal of
  Pure and Applied Algebra}, vol.~6, no.~3, pp. 191--218, 1975.

\bibitem[Dybjer(1996)]{dybjer:1996}
P.~Dybjer, ``Internal type theory,'' in \emph{Types for Proofs and Programs:
  International Workshop, TYPES '95 Torino, Italy, June 5--8, 1995 Selected
  Papers}, S.~Berardi and M.~Coppo, Eds.\hskip 1em plus 0.5em minus 0.4em\relax
  Berlin, Heidelberg: Springer Berlin Heidelberg, 1996, pp. 120--134.

\bibitem[Fiore(2002)]{fiore:2002}
M.~Fiore, ``Semantic analysis of normalisation by evaluation for typed lambda
  calculus,'' in \emph{Proceedings of the 4th ACM SIGPLAN International
  Conference on Principles and Practice of Declarative Programming}, ser. PPDP
  '02.\hskip 1em plus 0.5em minus 0.4em\relax Pittsburgh, PA, USA: ACM, 2002,
  pp. 26--37.

\bibitem[Freyd(1978)]{freyd:1978}
P.~Freyd, ``On proving that $\mathbf{1}$ is an indecomposable projective in
  various free categories,'' 1978, unpublished manuscript.

\bibitem[Gratzer and Sterling(2020)]{gratzer-sterling:2020}
\BIBentryALTinterwordspacing
D.~Gratzer and J.~Sterling, ``Syntactic categories for dependent type theory:
  sketching and adequacy,'' 2020, preprint. [Online]. Available:
  \url{https://arxiv.org/abs/2012.10783}
\BIBentrySTDinterwordspacing

\bibitem[Gratzer et~al.(2020)Gratzer, Kavvos, Nuyts, and
  Birkedal]{gratzer-kavvos-nuyts-birkedal:2020}
D.~Gratzer, G.~A. Kavvos, A.~Nuyts, and L.~Birkedal, ``Multimodal dependent
  type theory,'' in \emph{Proceedings of the 35th Annual ACM/IEEE Symposium on
  Logic in Computer Science}.\hskip 1em plus 0.5em minus 0.4em\relax
  Saarbr\"{u}cken, Germany: Association for Computing Machinery, 2020, pp.
  492--506.

\bibitem[Harper and Licata(2007)]{harper-licata:2007}
R.~Harper and D.~R. Licata, ``Mechanizing metatheory in a logical framework,''
  \emph{Journal of Functional Programming}, vol.~17, no. 4-5, pp. 613--673,
  Jul. 2007.

\bibitem[Harper et~al.(1993)Harper, Honsell, and
  Plotkin]{harper-honsell-plotkin:1993}
R.~Harper, F.~Honsell, and G.~Plotkin, ``A framework for defining logics,''
  \emph{J. ACM}, vol.~40, no.~1, pp. 143--184, Jan. 1993.

\bibitem[Henkin(1950)]{henkin:1950}
\BIBentryALTinterwordspacing
L.~Henkin, ``Completeness in the theory of types,'' \emph{The Journal of
  Symbolic Logic}, vol.~15, no.~2, pp. 81--91, 1950. [Online]. Available:
  \url{http://www.jstor.org/stable/2266967}
\BIBentrySTDinterwordspacing

\bibitem[Hofmann and Streicher(1997)]{hofmann-streicher:1997}
\BIBentryALTinterwordspacing
M.~Hofmann and T.~Streicher, ``Lifting {G}rothendieck universes,'' 1997,
  unpublished note. [Online]. Available:
  \url{https://www2.mathematik.tu-darmstadt.de/~streicher/NOTES/lift.pdf}
\BIBentrySTDinterwordspacing

\bibitem[Huber(2018)]{huber:2018}
S.~Huber, ``Canonicity for cubical type theory,'' \emph{Journal of Automated
  Reasoning}, Jun. 2018.

\bibitem[Hyland(1991)]{hyland:1991}
J.~M.~E. Hyland, ``First steps in synthetic domain theory,'' in \emph{Category
  Theory}, A.~Carboni, M.~C. Pedicchio, and G.~Rosolini, Eds.\hskip 1em plus
  0.5em minus 0.4em\relax Berlin, Heidelberg: Springer Berlin Heidelberg, 1991,
  pp. 131--156.

\bibitem[Johnstone(2002)]{johnstone:2002}
P.~T. Johnstone, \emph{Sketches of an Elephant: A Topos Theory Compendium:
  Volumes 1 and 2}, ser. Oxford Logical Guides.\hskip 1em plus 0.5em minus
  0.4em\relax Oxford Science Publications, 2002, no.~43.

\bibitem[Jung and Tiuryn(1993)]{jung-tiuryn:1993}
A.~Jung and J.~Tiuryn, ``A new characterization of lambda definability,'' in
  \emph{Typed Lambda Calculi and Applications}, M.~Bezem and J.~F. Groote,
  Eds.\hskip 1em plus 0.5em minus 0.4em\relax Berlin, Heidelberg: Springer
  Berlin Heidelberg, 1993, pp. 245--257.

\bibitem[Kaposi(2017)]{kaposi:thesis}
A.~Kaposi, ``Type theory in a type theory with quotient inductive types,''
  Ph.D. dissertation, University of Nottingham, 2017.

\bibitem[Kaposi et~al.(2019)Kaposi, Huber, and
  Sattler]{kaposi-huber-sattler:2019}
A.~Kaposi, S.~Huber, and C.~Sattler, ``Gluing for type theory,'' in \emph{4th
  International Conference on Formal Structures for Computation and Deduction
  (FSCD 2019)}, ser. Leibniz International Proceedings in Informatics (LIPIcs),
  H.~Geuvers, Ed., vol. 131.\hskip 1em plus 0.5em minus 0.4em\relax Dagstuhl,
  Germany: Schloss Dagstuhl--Leibniz-Zentrum fuer Informatik, 2019.

\bibitem[Kapulkin and Sattler(2019)]{kapulkin-sattler:2019}
\BIBentryALTinterwordspacing
C.~Kapulkin and C.~Sattler, ``Homotopy canonicity of homotopy type theory,''
  Aug. 2019, slides from a talk given at the International Conference on
  Homotopy Type Theory (HoTT 2019). [Online]. Available:
  \url{https://hott.github.io/HoTT-2019/conf-slides/Sattler.pdf}
\BIBentrySTDinterwordspacing

\bibitem[Kock(2006)]{kock:2006}
A.~Kock, \emph{Synthetic Differential Geometry}, 2nd~ed., ser. London
  Mathematical Society Lecture Note Series.\hskip 1em plus 0.5em minus
  0.4em\relax Cambridge University Press, 2006.

\bibitem[Kock(2009)]{kock:2009}
------, \emph{Synthetic Geometry of Manifolds}, ser. Cambridge Tracts in
  Mathematics.\hskip 1em plus 0.5em minus 0.4em\relax Cambridge University
  Press, 2009.

\bibitem[Lawvere(1980)]{lawvere:1980}
F.~W. Lawvere, ``Toward the description in a smooth topos of the dynamically
  possible motions and deformations of a continuous body,'' \emph{Cahiers de
  Topologie et G\'{e}om\'{e}trie Diff\'{e}rentielle Cat\'{e}goriques}, vol.~21,
  no.~4, pp. 377--392, 1980.

\bibitem[Licata et~al.(2018)Licata, Orton, Pitts, and Spitters]{lops:2018}
D.~R. Licata, I.~Orton, A.~M. Pitts, and B.~Spitters, ``Internal universes in
  models of homotopy type theory,'' in \emph{3rd International Conference on
  Formal Structures for Computation and Deduction, {FSCD} 2018, July 9-12,
  2018, Oxford, {UK}}, 2018, pp. 22:1--22:17.

\bibitem[Newstead(2018)]{newstead:2018}
C.~Newstead, ``Algebraic models of dependent type theory,'' Ph.D. dissertation,
  Carnegie Mellon University, 2018.

\bibitem[Nordstr\"{o}m et~al.(1990)Nordstr\"{o}m, Peterson, and
  Smith]{nordstrom-peterson-smith:1990}
B.~Nordstr\"{o}m, K.~Peterson, and J.~M. Smith, \emph{Programming in
  Martin-L\"{o}f's Type Theory}, ser. International Series of Monographs on
  Computer Science.\hskip 1em plus 0.5em minus 0.4em\relax NY: Oxford
  University Press, 1990, vol.~7.

\bibitem[Orton and Pitts(2016)]{orton-pitts:2016}
I.~Orton and A.~M. Pitts, ``Axioms for modelling cubical type theory in a
  topos,'' in \emph{25th {EACSL} Annual Conference on Computer Science Logic,
  {CSL} 2016, August 29 - September 1, 2016, Marseille, France}, 2016, pp.
  24:1--24:19.

\bibitem[Pfenning and Sch\"{u}rmann(1999)]{pfenning-schuermann:1999}
F.~Pfenning and C.~Sch\"{u}rmann, ``System description: Twelf --- a
  meta-logical framework for deductive systems,'' in \emph{Automated Deduction
  --- CADE-16}.\hskip 1em plus 0.5em minus 0.4em\relax Berlin, Heidelberg:
  Springer Berlin Heidelberg, 1999, pp. 202--206.

\bibitem[{RedPRL Development Team}(2018)]{redtt:2018}
\BIBentryALTinterwordspacing
{RedPRL Development Team}, ``{\redtt},'' 2018. [Online]. Available:
  \url{https://www.github.com/RedPRL/redtt}
\BIBentrySTDinterwordspacing

\bibitem[Riehl and Shulman(2017)]{riehl-shulman:2017}
E.~Riehl and M.~Shulman, ``A type theory for synthetic $\infty$-categories,''
  \emph{Higher Structures}, vol.~1, 2017.

\bibitem[Rijke et~al.(2020)Rijke, Shulman, and
  Spitters]{rijke-shulman-spitters:2017}
\BIBentryALTinterwordspacing
E.~Rijke, M.~Shulman, and B.~Spitters, ``Modalities in homotopy type theory,''
  \emph{{Logical Methods in Computer Science}}, vol. {Volume 16, Issue 1}, Jan.
  2020. [Online]. Available: \url{https://lmcs.episciences.org/6015}
\BIBentrySTDinterwordspacing

\bibitem[Shulman(2015)]{shulman:2015}
M.~Shulman, ``Univalence for inverse diagrams and homotopy canonicity,''
  \emph{Mathematical Structures in Computer Science}, vol.~25, no.~5, pp.
  1203--1277, 2015.

\bibitem[Shulman(2011)]{shulman:blog:internalizing}
\BIBentryALTinterwordspacing
------, ``Internalizing the external, or the joys of codiscreteness,'' 2011.
  [Online]. Available:
  \url{https://golem.ph.utexas.edu/category/2011/11/internalizing_the_external_or.html}
\BIBentrySTDinterwordspacing

\bibitem[Sterling(2020)]{sterling:thesis-proposal}
\BIBentryALTinterwordspacing
J.~Sterling, ``{Objective Metatheory of (Cubical) Type Theories},'' 2020,
  {Thesis Proposal}. [Online]. Available:
  \url{http://www.jonmsterling.com/pdfs/proposal-slides.pdf}
\BIBentrySTDinterwordspacing

\bibitem[Sterling and Gratzer(2020)]{sterling-gratzer:2020:stc}
J.~Sterling and D.~Gratzer, ``Lectures on {Synthetic Tait Computability},''
  2020, notes on a lecture given by Sterling in Summer 2020.

\bibitem[Sterling and Harper(2020)]{sterling-harper:2020}
\BIBentryALTinterwordspacing
J.~Sterling and R.~Harper, ``{Logical Relations As Types: Proof-Relevant
  Parametricity for Program Modules},'' 2020, under review. [Online].
  Available: \url{https://arxiv.org/abs/2010.08599}
\BIBentrySTDinterwordspacing

\bibitem[Sterling et~al.(2019)Sterling, Angiuli, and
  Gratzer]{sterling-angiuli-gratzer:2019}
\BIBentryALTinterwordspacing
J.~Sterling, C.~Angiuli, and D.~Gratzer, ``Cubical syntax for reflection-free
  extensional equality,'' in \emph{4th International Conference on Formal
  Structures for Computation and Deduction (FSCD 2019)}, ser. Leibniz
  International Proceedings in Informatics (LIPIcs), H.~Geuvers, Ed., vol.
  131.\hskip 1em plus 0.5em minus 0.4em\relax Dagstuhl, Germany: Schloss
  Dagstuhl--Leibniz-Zentrum fuer Informatik, 2019, pp. 31:1--31:25. [Online].
  Available: \url{http://drops.dagstuhl.de/opus/volltexte/2019/10538}
\BIBentrySTDinterwordspacing

\bibitem[Sterling et~al.(2020)Sterling, Angiuli, and
  Gratzer]{sterling-angiuli-gratzer:2020}
\BIBentryALTinterwordspacing
------, ``A cubical language for {Bishop} sets,'' 2020, under review. [Online].
  Available: \url{https://arxiv.org/abs/2003.01491}
\BIBentrySTDinterwordspacing

\bibitem[Streicher(1998)]{streicher:1998}
T.~Streicher, ``Categorical intuitions underlying semantic normalisation
  proofs,'' in \emph{Preliminary Proceedings of the {APPSEM} Workshop on
  Normalisation by Evaluation}, O.~Danvy and P.~Dybjer, Eds.\hskip 1em plus
  0.5em minus 0.4em\relax Department of Computer Science, Aarhus University,
  1998.

\bibitem[Tait(1967)]{tait:1967}
\BIBentryALTinterwordspacing
W.~W. Tait, ``{Intensional Interpretations of Functionals of Finite Type I},''
  \emph{The Journal of Symbolic Logic}, vol.~32, no.~2, pp. 198--212, 1967.
  [Online]. Available: \url{http://www.jstor.org/stable/2271658}
\BIBentrySTDinterwordspacing

\bibitem[Taylor(1999)]{taylor:1999}
P.~Taylor, \emph{Practical Foundations of Mathematics}, ser. Cambridge studies
  in advanced mathematics.\hskip 1em plus 0.5em minus 0.4em\relax Cambridge,
  New York (N. Y.), Melbourne: Cambridge University Press, 1999.

\bibitem[Uemura(2017)]{uemura:2017}
\BIBentryALTinterwordspacing
T.~Uemura, ``Fibred fibration categories,'' in \emph{Proceedings of the 32nd
  Annual ACM/IEEE Symposium on Logic in Computer Science}.\hskip 1em plus 0.5em
  minus 0.4em\relax Reykjavik, Iceland: IEEE Press, 2017, pp. 24:1--24:12.
  [Online]. Available: \url{http://dl.acm.org/citation.cfm?id=3329995.3330019}
\BIBentrySTDinterwordspacing

\bibitem[Uemura(2019)]{uemura:2019}
\BIBentryALTinterwordspacing
------, ``A general framework for the semantics of type theory,'' 2019,
  preprint. [Online]. Available: \url{https://arxiv.org/abs/1904.04097}
\BIBentrySTDinterwordspacing

\bibitem[{Univalent Foundations Program}(2013)]{hottbook}
{Univalent Foundations Program}, \emph{Homotopy Type Theory: Univalent
  Foundations of Mathematics}.\hskip 1em plus 0.5em minus 0.4em\relax Institute
  for Advanced Study: \url{https://homotopytypetheory.org/book}, 2013.

\bibitem[Vezzosi et~al.(2019)Vezzosi, M\"{o}rtberg, and
  Abel]{vezzosi-mortberg-abel:2019}
\BIBentryALTinterwordspacing
A.~Vezzosi, A.~M\"{o}rtberg, and A.~Abel, ``{Cubical Agda: A Dependently Typed
  Programming Language with Univalence and Higher Inductive Types},'' in
  \emph{Proceedings of the 24th ACM SIGPLAN International Conference on
  Functional Programming}, ser. ICFP '19.\hskip 1em plus 0.5em minus
  0.4em\relax Boston, Massachusetts, USA: ACM, 2019. [Online]. Available:
  \url{http://www.cs.cmu.edu/~amoertbe/papers/cubicalagda.pdf}
\BIBentrySTDinterwordspacing

\bibitem[Vickers(2007)]{vickers:2007}
S.~Vickers, \emph{Locales and Toposes as Spaces}.\hskip 1em plus 0.5em minus
  0.4em\relax Dordrecht: Springer Netherlands, 2007, pp. 429--496.

\bibitem[Yetter(1987)]{yetter:1987}
\BIBentryALTinterwordspacing
D.~Yetter, ``On right adjoints to exponential functors,'' \emph{Journal of Pure
  and Applied Algebra}, vol.~45, no.~3, pp. 287--304, 1987. [Online].
  Available:
  \url{http://www.sciencedirect.com/science/article/pii/0022404987900776}
\BIBentrySTDinterwordspacing

\end{thebibliography}

\ifarXiv
\appendix

\subsection{Normalization structures}

In \cref{fig:norm-sg,fig:norm-pi,fig:norm-s1,fig:norm-s1:ind}, we present the remaining
normalization structures that could not fit in the main body of \cref{sec:stc}.

\begin{figure*}[p]
  \small
  \begin{nf-box}
    \begin{align*}
      \frk{sg} &: \Ext{\IHom{\prn{\Sum{A:\NfTp}{\Prod{x:\Con{var}\prn{A}}{\NfTp}}}}{\NfTp}}{\Syn}{\SynAlg.\Sigma}\\
      \frk{pair} &:
      \Params{A,B}
      \Ext{
        \IHom{\prn{\Sum{x:\Nf{A}}{\Nf{B\prn{x}}}}}{\Nf{\SynAlg.\Sigma\prn{A,B}}}
      }{\Syn}{
        \lambda\prn{a,b}.\prn{a,b}
      }
      \\
      \frk{split} &:
      \Params{\phi,A,B}
      \Ext{
        \IHom{
          \Ne{\phi}{\SynAlg.\Sigma\prn{A,B}}
        }{
          \Sum{x:\Ne{\phi}{A}}{\Ne{\phi}{B\prn{x}}}
        }
      }{\Syn}{
        \lambda\prn{a,b}.\prn{a,b}
      }
    \end{align*}
  \end{nf-box}

  \begin{align*}
    \Sigma &: \Ext{
      \IHom{
        \prn{
          \Sum{A:\Tp}{
            \IHom*{A}{\Tp}
          }
        }\
      }{
        \Tp
      }
    }{\Syn}{\SynAlg.\Sigma}
    \\
    \brk{\Sigma\prn{A,B}} &\cong
    \Sum{x:A}{B\prn{x}}
    \\
    \Con{hcom}\Sub{\Sigma\prn{A,B}}\Sup{r\to s; \phi}p &=
    \Kwd{let}\ x\prn{k} = \Con{hcom}\Sub{A}\Sup{r\to k;\phi}{\lambda i. \brk{i=r\lor_\FF\phi\to \pi_1\prn{p\prn{i}}}}\
    \Kwd{in}\
    \prn{
      x\prn{s},
      \DelimMin{1}
      \Con{com}\Sub{\lambda i.B\prn{x\prn{i}}}\Sup{r\to s; \phi}
      \lambda i.
      \brk{i=r\lor_\FF\phi\to\pi_2\prn{p\prn{i}}}
    }
    \\
    \Con{coe}\Sub{\lambda i.\Sigma\prn{A\prn{i},B\prn{i}}}\Sup{r\to s}p &=
    \Kwd{let}\ x\prn{k} = \Con{coe}\Sub{A}\Sup{r\to k}\pi_1\prn{p}\
    \Kwd{in}\
    \prn{
      x\prn{s},
      \DelimMin{1}
      \Con{coe}\Sub{\lambda i.B\prn{x\prn{i}}}\Sup{r\to s}\pi_2\prn{p}
    }
    \\
    \NormTp{\Sigma\prn{A,B}} &=
    \frk{sg}\prn{
      \NormTp{A},
      \lambda x.
      \NormTp{
        B\prn{
          \Reflect{A}{\bot_\FF}{
            \NeGlue{\frk{var}\prn{x}}{\bot_\FF}{\brk{}\Sub{\SynAlg.\Tm\prn{A}}}
          }
        }
      }
    }
    \\
    \Reflect{\Sigma\prn{A,B}}{\phi}\NeGlue{p_0}{\phi}{p} &=
    \Kwd{let}\ \prn{x_0,y_0} = \frk{split}\prn{p_0};
    \tilde{x} = \Reflect{A}{\phi}\NeGlue{x_0}{\phi}{\pi_1\prn{p}}
    \ \Kwd{in}\
    \prn{
      \tilde{x},
      \Reflect{B\prn{\tilde{x}}}{\phi}{
        \NeGlue{y_0}{\phi}{
          \pi_2\prn{p}
        }
      }
    }
    \\
    \Reify{\Sigma\prn{A,B}}p &=
    \frk{pair}\prn{
      \Reify{A}{\pi_1\prn{p}},
      \Reify{B\prn{\pi_1\prn{p}}}{
        \pi_2\prn{p}
      }
    }
  \end{align*}

  \caption{The cubical normalization structure for dependent sum types.}
  \label{fig:norm-sg}
\end{figure*}

\begin{figure*}[p]
  \small
  \begin{nf-box}
    \begin{align*}
      \frk{pi} &: \Ext{\IHom{\prn{\Sum{A:\NfTp}{\Prod{x:\Con{var}\prn{A}}{\NfTp}}}}{\NfTp}}{\Syn}{\SynAlg.\Pi}\\
      \frk{lam} &: \Params{A,B}
      \Ext{
        \IHom{\prn{\Prod{x:\Con{var}\prn{A}}{\Nf{B\prn{x}}}}}{\Nf{\SynAlg.\Pi\prn{A,B}}}
      }{\Syn}{
        \lambda f.\lambda x.f\prn{x}
      }
      \\
      \frk{app} &: \Params{\phi,A,B}
      \Ext{
        \IHom{\Ne{\phi}{\SynAlg.\Pi\prn{A,B}}}{\Prod{x : \Nf{A}}{\Ne{\phi}{B\prn{x}}}}
      }{\Syn}{\lambda f.\lambda x.f\prn{x}}
    \end{align*}
  \end{nf-box}

  \begin{align*}
    \Pi &: \Ext{
      \IHom{
        \prn{
          \Sum{A:\Tp}{
            \IHom*{A}{\Tp}
          }
        }
      }{\Tp}
    }{\Syn}{\SynAlg.\Pi}
    \\
    \brk{\Pi\prn{A,B}} &\cong
    \Prod{x : A}{B\prn{x}}
    \\
    \Con{hcom}\Sub{\Pi\prn{A,B}}\Sup{r\to s; \phi}f &=
    \lambda x.
    \Con{hcom}\Sub{B}\Sup{r\to s; \phi}
    \lambda i.\brk{i=r\lor_\FF\phi\hookrightarrow f\prn{x,i}}
    \\
    \Con{coe}\Sub{\lambda i.\Pi\prn{A\prn{i},B\prn{i}}}\Sup{r\to s}f &=
    \lambda x.
    \Con{coe}\Sub{
      \lambda i. B\prn{i,
        \Con{coe}\Sub{A}\Sup{s\to i}x
      }
    }\Sup{r\to s}
    f\prn{
      \DelimMin{1}
      \Con{coe}\Sub{A}\Sup{s\to r}x
    }
    \\
    \NormTp{\Pi\prn{A,B}} &=
    \frk{pi}\prn{
      \NormTp{A},
      \lambda x.
      \NormTp{
        B\prn{
          \Reflect{A}{\bot_\FF}{
            \NeGlue{\frk{var}\prn{x}}{\bot_\FF}{\brk{}\Sub{\SynAlg.\Tm\prn{A}}}
          }
        }
      }
    }
    \\
    \Reflect{\Pi\prn{A,B}}{\phi}\NeGlue{f_0}{\phi}{f} &=
    \lambda x.
    \Reflect{B\prn{x}}{\phi}{
      \NeGlue{\frk{app}\prn{f_0, \Reify{A}{x}}}{\phi}{f\prn{x}}
    }
    \\
    \Reify{\Pi\prn{A,B}}f &=
    \frk{lam}\prn{
      \lambda x.
      \Kwd{let}\
      \tilde{x} =
      \Reflect{A}{\bot_\FF}{
        \NeGlue{\frk{var}\prn{x}}{\bot_\FF}{\brk{}\Sub{\SynAlg.\Tm\prn{A}}}
      }
      \
      \Kwd{in}\
      \Reify{B\prn{\tilde{x}}}{
        f\prn{\tilde{x}}
      }
    }
  \end{align*}

  \caption{The cubical normalization structure for dependent product types.}
  \label{fig:norm-pi}
\end{figure*}

\begin{figure*}[p]
  \small
  \begin{nf-box}
    \begin{align*}
      \frk{s}1 &: \Ext{\NfTp}{\Syn}{\SynAlg.\Con{S1}}\\
      \frk{base} &: \Ext{\Nf{\SynAlg.\Con{S1}}}{\Syn}{\SynAlg.\Con{base}}\\
      \frk{loop} &:
      \Ext{
        \Prod{i:\II}{
          \Ext{\Nf{\SynAlg.\Con{S1}}}{\partial{i}}{\frk{base}}
        }
      }{\Syn}{\SynAlg.\Con{loop}}
      \\
      \frk{fhcom} &:
      \Ext{
        \Con{HCom}\prn{\SynAlg.\Con{S1},\Nf{\SynAlg.\Con{S1}}}
      }{\Syn}{
        \SynAlg.\Con{hcom}\Sub{\SynAlg.\Con{S1}}
      }
      \\
      \frk{inds}1 &:
      \Ext{
        \Prod{
          C : \IHom{\Con{var}\prn{\SynAlg.\Con{S1}}}{\NfTp}
        }
        \Prod{
          b : \Nf{C\prn{\SynAlg.\Con{base}}}
        }
        \Prod{
          l : \Prod{i:\II}{
            \Ext{
              \Nf{C\prn{\SynAlg.\Con{loop}\prn{i}}}
            }{\partial{i}}{b}
          }
        }
        \Prod{
          x : \Ne{\phi}{\SynAlg.\Con{S1}}
        }
        \Ne{\phi}{
          C\prn{x}
        }
      }{\Syn}{\SynAlg.\Con{ind}\Sub{\Con{S1}}}
      \\
      \frk{lift} &:
      \Params{\phi}
      \Ext{
        \IHom{
          \prn{
            \Sum{
              b:\Ne{\phi}{\SynAlg.\Con{S1}}
            }{
              \Prod{\_:\brk{\phi}}{
                \Ext{\Nf{\SynAlg.\Con{S1}}}{\Syn}{b}
              }
            }
          }
        }{
          \Con{nf}\prn{\SynAlg.\Con{S1}}
        }
      }{\Syn\lor\phi}{
        \lambda\prn{b,b'}.
        \brk{
          \Syn\hookrightarrow b,
          \phi\hookrightarrow b'
        }
      }
    \end{align*}
  \end{nf-box}

  \bigskip
  \bigskip

  The following definition of $\Con{S1}$ is justified by realignment: the
  constructors give rise to a (quotient-inductive) type that is isomorphic to
  $\SynAlg.\Tm\prn{\SynAlg.\Con{S1}}$ underneath $z:\Syn$; after realignment, the constructor
  $\floors{-}$ becomes the identity function.

  \begin{align*}
    \begin{array}{l}
      \Kwd{data} \ \brk{\Con{S1}} : \Ext{\UU}{\Syn}{\SynAlg.\Tm\prn{\SynAlg.\Con{S1}}}\ \Kwd{where}\\
      \quad
      \begin{array}{l}
        \floors{-} : \Params{\Syn} \to \SynAlg.\Tm\prn{\SynAlg.\Con{S1}}\to\brk{\Con{S1}}\\
        \Con{base} : \Compr{\brk{\Con{S1}}}{\Syn\hookrightarrow\floors{\SynAlg.\Con{base}}}\\
        \Con{loop} : \Prod{i:\II}{
          \Compr{\brk{\Con{S1}}}{
            \partial{i}\hookrightarrow\Con{base},
            \Syn\hookrightarrow\floors{\SynAlg.\Con{loop}\prn{i}}
          }
        }
        \\
        \Con{lift} :
        \Prod{\phi:\FF}
        \Prod{x_0:\Ne{\phi}{\SynAlg.\Con{S1}}}
        \Prod{x:\IHom{\brk{\phi}}{\Ext{\brk{\Con{S1}}}{\Syn}{\floors{x_0}}}}
        \Compr{\brk{\Con{S1}}}{
          \phi\hookrightarrow x,
          \Syn\hookrightarrow \floors{x_0}
        }
        \\
        \Con{fhcom} :
        \Prod{r,s:\II}
        \Prod{\phi:\FF}
        \Prod{c:
          \Prod{i:\II}
          \Prod{\_:\brk{i=r\lor_\FF\phi}}
          \Sum{x:\SynAlg.\Tm\prn{\SynAlg.\Con{S1}}}
          \Ext{\brk{\Con{S1}}}{\Syn}{\floors{x}}
        }
        \Compr{\brk{\Con{S1}}}{
          r=s\lor_\FF\phi\hookrightarrow \pi_2\prn{c\prn{i}},
          \Syn\hookrightarrow
          \SynAlg.\Con{hcom}\Sub{\SynAlg.\Con{S1}}\Sup{r\to s;\phi}\prn{\lambda i.\pi_1\prn{c\prn{i}}}
        }
      \end{array}
    \end{array}
  \end{align*}

  \begin{align*}
    \Con{S1} &: \Ext{\Tp}{\Syn}{\SynAlg.\Con{S1}}\\
    \Con{hcom}\Sub{\Con{S1}}\Sup{r\to s;\phi}{c} &=
    \Con{fhcom}\prn{r,s,\phi,\lambda i.\brk{i=r\lor_\FF\phi\hookrightarrow \prn{c\prn{i},c\prn{i}}}}\\
    \Con{coe}\Sub{\lambda\_.\Con{S1}}\Sup{r\to s}x &= x\\
    \Reflect{\Con{S1}}{\phi}\NeGlue{x_0}{\phi}{x} &= \Con{lift}\prn{\phi,x_0,\brk{\phi\hookrightarrow x}}\\
    \Reify{\Con{S1}}{\floors{c}} &= c\\
    \Reify{\Con{S1}}{\Con{base}} &= \frk{base}\\
    \Reify{\Con{S1}}{\Con{loop}\prn{i}} &= \frk{loop}\prn{i}\\
    \Reify{\Con{S1}}{\Con{lift}\prn{\phi,x_0,x}} &= \frk{lift}_\phi\prn{x_0,\Reify{\Con{S1}}x}\\
    \Reify{\Con{S1}}{\Con{fhcom}\prn{r,s,\phi,c}} &=
    \frk{fhcom}\Sup{r\to s;\phi}\lambda i.\brk{i=r\lor_\FF\phi\hookrightarrow \Reify{\Con{S1}}{\pi_2\prn{c\prn{i}}}}
  \end{align*}

  \caption{Definition of the cubical normalization structure for the circle.}
  \label{fig:norm-s1}
\end{figure*}
\begin{figure*}
  \begin{align*}
    \Con{ind}\Sub{\Con{S1}} &:
    \Ext{
      \Prod{
        C : \IHom{\Con{S1}}{\Tp}
      }
      \Prod{
        b : C\prn{\Con{base}}
      }
      \Prod{
        l : \Prod{i:\II}{
          \Ext{
            C\prn{\Con{loop}\prn{i}}
          }{\partial{i}}{b}
        }
      }
      \Prod{
        x : \Con{S1}
      }
      C\prn{x}
    }{\Syn}{\SynAlg.\Con{ind}\Sub{\Con{S1}}}\\
    \Con{ind}\Sub{\Con{S1}}\prn{C,b,l,\floors{c}} &= \SynAlg.\Con{ind}\Sub{\Con{S1}}\prn{C,b,l,c}\\
    \Con{ind}\Sub{\Con{S1}}\prn{C,b,l,\Con{base}} &= b\\
    \Con{ind}\Sub{\Con{S1}}\prn{C,b,l,\Con{loop}\prn{i}} &= l\prn{i}\\
    \Con{ind}\Sub{\Con{S1}}\prn{C,b,l,\Con{lift}\prn{\phi,x_0,x}} &=
    \!\!\!
    \begin{array}[t]{l}
      \Kwd{let}\ \tilde{x} = \Reflect{\Con{S1}}{\phi}{\NeGlue{x_0}{\phi}{\prn{x_0,x}}}\ \Kwd{in}\\
      \Kwd{let}\ c_0 =
      \frk{inds}1\prn{
        \lambda y.
        \NormTp{
          C\prn{
            \Reflect{\Con{S1}}{\bot_\FF}{\NeGlue{\frk{var}\prn{y}}{\bot_\FF}{\brk{}\Sub{\Tm\prn{\Con{S1}}}}}
          }
        },
        \Reify{C\prn{\Con{base}}}{b},
        \lambda i. \Reify{C\prn{\Con{loop}\prn{i}}}{l\prn{i}},
        x_0
      }
      \ \Kwd{in}\\
      \Reflect{
        C\prn{\tilde{x}}
      }{\phi}{
        \NeGlue{
          c_0
        }{\phi}{
          \Con{ind}\Sub{\Con{S1}}\prn{C,b,l,\tilde{x}}
        }
      }
    \end{array}
    \\
    \Con{ind}\Sub{\Con{S1}}\prn{C,b,l,\Con{fhcom}\prn{r,s,\phi,x}} &=
    \Con{com}\Sub{
      \lambda i.
      C\prn{
        \Con{hcom}\Sub{\Con{S1}}\Sup{r\to i; \phi}x
      }
    }\Sup{r\to s; \phi}
    \lambda i.
    \brk{
      i=r\lor_\FF\phi
      \hookrightarrow
      \Con{ind}\Sub{\Con{S1}}\prn{C,b,l,\pi_2\prn{x\prn{i}}}
    }
  \end{align*}
  \caption{Implementation of the induction principle for the circle.}
  \label{fig:norm-s1:ind}
\end{figure*}

\subsection{Explicit computations}

In \cref{sec:axioms} we gave a presentation of $\CatSTC$
as a category of presheaves, an apparently necessary step to
substantiate the strict universes of synthetic Tait computability; it is still
useful, however, to gain intuitions for the more traditional presentation of
$\CatSTC$ as the comma category $\Comma{\Sh{\RxTop}}{\StrMap^*}$, and to
understand the explicit computations of the inverse image and direct image
parts of the open and closed immersions respectively.

\begin{computation}[Comma category]\label{cmp:comma-cat}
  An object of the comma category $\CatSTC \simeq
  \Comma{\Sh{\RxTop}}{\StrMap^*}$ is an object $E:\Psh{\ThCat}$ together with a
  family of presheaves $\Mor{E'}{\StrMap^*E} : \Psh{\RxCat}$; a morphism from
  $\prn{F,\Mor{F'}{\StrMap^*{F}}}$ to $\prn{E,\Mor{E'}{\StrMap^*{E}}}$ is a
  morphism $\Mor[e]{F}{E} : \ThCat$ together with a commuting square of the
  following kind:
  \[
    \DiagramSquare{
      nw = F',
      sw = \StrMap^*F,
      ne = E,
      se = \StrMap^*E,
      south = \StrMap^*e,
      north = e',
      height = 1.5cm,
    }
  \]
\end{computation}

\begin{computation}[Open immersion]\label{cmp:open-immersion}
  The open immersion $\Mor|open immersion|[\OpImm]{\ThTop}{\GlTop}$
  corresponds under inverse image to the \emph{codomain} fibration
  $\Mor|fibration|[\OpImm^*]{\Comma{\Psh{\RxCat}}{\StrMap^*}}{\Psh{\ThCat}}$.
  Hence we may compute the adjunction $\OpImm^*\dashv\OpImm_*$ as follows:
  \begin{align*}
    \OpImm^* &: \Mor|{|->}|{\prn{E,\Mor{E'}{\StrMap^*E}}}{E}\\
    \OpImm_* &: \Mor|{|->}|{E}{\prn{E,\Mor{\StrMap^*E}{\StrMap^*E}}}
  \end{align*}

  The direct image functor $\OpImm_*$ is fully faithful. In fact, we have two
  additional adjoints $\OpImm_!\dashv\OpImm^*\dashv\OpImm_*\dashv\OpImm^!$, the
  exceptional right adjoint by virtue of the adjunction
  $\StrMap^*\dashv\StrMap_*$.
  \begin{align*}
    \OpImm_! &: \Mor|{|->}|{E}{\prn{E,\Mor{\ObjInit}{\StrMap^*E}}}\\
    \OpImm^! &: \Mor|{|->}|{\prn{E,\Mor{E'}{\StrMap^*E}}}{
      \StrMap_*E'\times_{\StrMap_*\StrMap^*E}{E}
    }
  \end{align*}
\end{computation}

\begin{computation}[Closed immersion]\label{cmp:closed-immersion}
  The closed immersion $\Mor|closed immersion|[\ClImm]{\RxTop}{\GlTop}$
  corresponds under inverse image to the \emph{domain} functor
  $\Mor[\ClImm^*]{\Comma{\Sh{\RxTop}}{\StrMap^*}}{\Psh{\RxCat}}$.
  We compute the adjunction $\ClImm^*\dashv\ClImm_*$ as follows:
  \begin{align*}
    \ClImm^* &: \Mor|{|->}|{\prn{E,\Mor{E'}{\StrMap^*E}}}{E'}\\
    \ClImm_* &: \Mor|{|->}|{E'}{\prn{\ObjTerm,\Mor{E'}{\StrMap^*\ObjTerm}}}
  \end{align*}

  Because $\StrMap$ is an \emph{essential} morphism of topoi, we have an
  additional left adjoint $\ClImm_!\dashv\ClImm^*\dashv\ClImm_*$:
  \begin{align*}
    \ClImm_! &: \Mor|{|->}|{E'}{\prn{\StrMap_!E', \Mor{E'}{\StrMap^*\StrMap_!E'}}}
  \end{align*}
\end{computation}

\begin{computation}[Open and closed modality]
  Using the syntactic open $\Syn$ (\cref{cmp:syntactic-open}), we can compute
  the open modality $\Op = \OpImm_*\OpImm^*$ as the exponential
  $\IHom*{\Syn}{-}$. Likewise, $\Syn$ makes another computation of the closed
  modality $\Cl = \ClImm_*\ClImm^*$ available:
  \[
    \DiagramSquare{
      se/style = pushout,
      nw = A\times\Syn,
      sw = A,
      ne = \Syn,
      se = \Cl{A},
      width = 1.5cm,
      height = 1.5cm,
    }
  \]
\end{computation}

\begin{corollary}
  From
  \cref{cmp:comma-cat,cmp:open-immersion,cmp:syntactic-open,cmp:closed-immersion}
  we may make the following observations:
  \begin{enumerate}
    \item The open modality $\Op := \OpImm_*\OpImm^*$ on $\CatSTC$ has both a right adjoint $\OpImm_*\OpImm^!$ and left adjoint $\OpImm_!\OpImm^*$; hence the syntactic open $\Syn : \CatSTC$ is a tiny object; the adjunction $\OpImm_!\OpImm^*\dashv\Op$ can be computed as $\prn{-\times\Syn}\dashv\IHom*{\Syn}{-}$.
    \item The closed modality $\Cl := \ClImm_*\ClImm^*$ on $\CatSTC$ has a left adjoint $\ClImm_!\ClImm^*$.
    \item The gluing functor $\Mor[\StrMap^*]{\Psh{\ThCat}}{\Psh{\RxCat}}$ can be reconstructed as the composite $\ClImm^*\OpImm_*$.
  \end{enumerate}
\end{corollary}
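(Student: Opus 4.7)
The plan is to derive each of the three observations from the adjunction data already recorded in Computations~\ref{cmp:open-immersion} and~\ref{cmp:closed-immersion}; the only substantive point is the identification of both adjoints to the open modality, which crucially uses that $\StrMap$ is an essential geometric morphism.

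For item~1, I would first invoke the four-adjoint chain $\OpImm_!\dashv\OpImm^*\dashv\OpImm_*\dashv\OpImm^!$ from \cref{cmp:open-immersion}, noting that the outermost right adjoint $\OpImm^!$ exists precisely because the essentialness of $\StrMap$ supplies the additional left adjoint $\StrMap_!$. Gluing adjacent pairs through the middle adjoint $\OpImm^*$ then produces two new adjunctions on $\CatSTC$: the routine calculation $\Hom{\OpImm_!\OpImm^*A}{B}\cong\Hom{\OpImm^*A}{\OpImm^*B}\cong\Hom{A}{\OpImm_*\OpImm^*B}$ gives $\OpImm_!\OpImm^*\dashv\Op$, and a dual calculation gives $\Op\dashv\OpImm_*\OpImm^!$, supplying both requested adjoints. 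To identify $\OpImm_!\OpImm^*$ with $-\times\Syn$, I would work in the comma-category presentation of \cref{rem:comma-cat}: since $\Syn=\OpImm_!\ObjTerm$ has semantic component $\ObjInit$ and $\StrMap^*$ preserves products, the componentwise product of $A=\prn{E,\Mor{E'}{\StrMap^*E}}$ with $\Syn$ collapses to $\prn{E,\Mor{\ObjInit}{\StrMap^*E}}$, which matches $\OpImm_!\OpImm^*A$. The identification $\Op=\IHom*{\Syn}{-}$ is then forced by uniqueness of right adjoints, and the tininess of $\Syn$ follows immediately: $\IHom*{\Syn}{-}$ has a right adjoint, hence preserves all colimits.

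For item~2, composing the chain $\ClImm_!\dashv\ClImm^*\dashv\ClImm_*$ of \cref{cmp:closed-immersion} through the middle adjoint $\ClImm^*$ yields $\ClImm_!\ClImm^*\dashv\Cl$ via the routine calculation $\Hom{\ClImm_!\ClImm^*A}{B}\cong\Hom{\ClImm^*A}{\ClImm^*B}\cong\Hom{A}{\ClImm_*\ClImm^*B}$. For item~3, I would simply compute $\ClImm^*\OpImm_*E$ pointwise on $E:\Psh{\ThCat}$: by \cref{cmp:open-immersion}, $\OpImm_*E=\prn{E,\Mor{\StrMap^*E}{\StrMap^*E}}$ with identity structure map, and by \cref{cmp:closed-immersion}, $\ClImm^*$ projects the semantic component, yielding $\StrMap^*E$. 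I do not foresee any technical obstacle beyond bookkeeping; essentialness of $\StrMap$ does all the real work.
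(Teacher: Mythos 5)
Your argument is correct in substance and is essentially the reasoning the paper leaves implicit: the corollary is stated as a direct consequence of the cited computations, and your composition of adjacent adjunctions ($\OpImm_!\dashv\OpImm^*\dashv\OpImm_*\dashv\OpImm^!$ yielding $\OpImm_!\OpImm^*\dashv\Op\dashv\OpImm_*\OpImm^!$, and likewise for the closed chain), your identification of $\OpImm_!\OpImm^*$ with $-\times\Syn$ in the comma category, and your pointwise computation of $\ClImm^*\OpImm_*$ are all what is intended. One correction, though: you attribute the existence of the exceptional right adjoint $\OpImm^!$ to the essentialness of $\StrMap$, i.e.\ to the extra left adjoint $\StrMap_!$. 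That is not where it comes from. As \cref{cmp:open-immersion} states, $\OpImm^!$ exists by virtue of the ordinary direct-image adjunction $\StrMap^*\dashv\StrMap_*$ (its formula $\prn{E,\Mor{E'}{\StrMap^*E}}\mapsto\StrMap_*E'\times_{\StrMap_*\StrMap^*E}E$ visibly uses $\StrMap_*$), and $\OpImm_!$ exists for any Artin gluing. Essentialness of $\StrMap$ is needed only to produce $\ClImm_!$ in item~2. Consequently your closing remark that ``essentialness does all the real work'' is backwards: items~1 and~3 require no essentialness at all, and it is precisely and only item~2 that does. This does not break your proof, since you invoke the adjoint chains as already established, but the justification you offer for them should be repaired.
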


The following fracture theorem is from SGA~4~\cite{sga:4}.

\begin{lemma}[Fracture~\cite{sga:4,rijke-shulman-spitters:2017}]
  Any sheaf on $\GlTop$ can be reconstructed up to isomorphism from its
  restriction to $\ThTop$ and $\RxTop$; in particular, the following
  square is cartesian for any $A:\CatSTC$:
  \[
    \DiagramSquare{
      nw/style = pullback,
      nw = A,
      ne = \Cl{A},
      se = \Cl\Op{A},
      sw = \Op{A}
    }
  \]
\end{lemma}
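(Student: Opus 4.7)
The plan is to verify the cartesian square by direct pointwise computation, using the comma-category presentation $\CatSTC \simeq \Comma{\Psh{\RxCat}}{\StrMap^*}$ from \cref{cmp:comma-cat} together with the explicit formulas for the open and closed immersions in \cref{cmp:open-immersion,cmp:closed-immersion}.

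First I would unfold the four corners. Writing a generic sheaf as $A = \prn{E, \Mor[\alpha]{E'}{\StrMap^*E}}$, the explicit formulas yield
\[
  \Op{A} = \prn{E,\; \mathrm{id}_{\StrMap^*E}}, \quad
  \Cl{A} = \prn{\ObjTerm,\; E' \to \ObjTerm}, \quad
  \Cl\Op{A} = \prn{\ObjTerm,\; \StrMap^*E \to \ObjTerm}.
\]
Under these descriptions, the unit map $\Op A \to \Cl\Op A$ is the identity on the $\RxCat$-component, while the unit $\Cl A \to \Cl\Op A$ is the structure map $\alpha$ on that component.

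Next I would compute the pullback componentwise, using that limits in $\Comma{\Psh{\RxCat}}{\StrMap^*}$ are formed componentwise subject to compatibility with the structure maps (since $\StrMap^*$ preserves limits). On the $\ThCat$-side one gets $E \times_{\ObjTerm}\ObjTerm = E$; on the $\RxCat$-side, the pullback of $\mathrm{id}_{\StrMap^*E}$ and $\alpha$ is isomorphic to $E'$ via the graph of $\alpha$, and the induced structure map to $\StrMap^*E$ is precisely $\alpha$. Hence the canonical comparison $A \to \Op A \times_{\Cl\Op A}\Cl A$ is an isomorphism of sheaves on $\GlTop$. The first clause of the lemma is then immediate: $A$ is recovered up to canonical isomorphism from its syntactic restriction $\OpImm^*A$, its semantic restriction $\ClImm^*A$, and the unit $\ClImm^*A \to \ClImm^*\Op A = \StrMap^*\OpImm^*A$.

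The only real bit of bookkeeping is checking that pullbacks in the comma category decompose as claimed, which is standard. Alternatively, one may bypass the calculation entirely by appealing to the general fracture theorem for a complementary pair of lex modalities in the style of Rijke--Shulman--Spitters, applied to the open modality $\Op = \IHom*{\Syn}{-}$ generated by the subterminal $\Syn$ together with its closed complement $\Cl$. I would favor the explicit route, as it makes transparent how the two aspects of a computability structure reassemble and fits naturally with the comma-category presentation developed earlier in the paper.
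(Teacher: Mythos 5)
Your computation is correct, and it is worth noting that the paper itself offers no proof of this lemma at all: it simply cites SGA~4 and Rijke--Shulman--Spitters as the source of the fracture theorem for a complementary open/closed pair of modalities. Your explicit route through the comma-category presentation $\CatSTC \simeq \Comma{\Psh{\RxCat}}{\StrMap^*}$ is therefore more self-contained than what the paper provides. The unfoldings of the four corners match the formulas in the paper's own computations of the open and closed immersions, the identification of the two units is right, and the componentwise computation of the pullback is legitimate because $\StrMap^*$ preserves (finite) limits --- indeed it preserves all limits, having the left adjoint $\StrMap_!$ --- so limits in the comma category are computed on each component separately. The identification of the $\RxCat$-component of the pullback with $E'$ via the graph of $\alpha$, carrying the structure map back to $\alpha$ itself, is exactly the point, and it also substantiates the informal first clause of the lemma once one observes (as you do) that the reconstruction requires not just the two restrictions but also the gluing map $\ClImm^*A \to \StrMap^*\OpImm^*A$. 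The general modality-theoretic argument you mention as an alternative is precisely the one the paper delegates to the literature; your concrete calculation buys transparency at the cost of being specific to the Artin-gluing situation, which is all that is needed here.
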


\NewDocumentCommand\NfTpAt{m}{\brk{\Tp\ni_{\Con{nf}} #1}}
\NewDocumentCommand\NfAt{mm}{\brk{#1\ni_{\Con{nf}} #2}}
\NewDocumentCommand\NeAt{mmm}{\brk{#3\in^{#1}_{\Con{ne}} #2}}

\subsection{Explicit construction of neutral and normal forms}\label{sec:explicit-nf}

Our construction of the computability model of cubical type theory
(\cref{thm:computability-algebra}) requires only that certain constants
corresponding to the neutral and normal forms exist in $\CatSTC$. However, to
use this computability model to establish the injectivity and (external)
decidability properties of \cref{sec:normalization-result}, it is important to
ensure that the corresponding properties hold for our normal forms.

Concretely, we define them by a family of indexed quotient inductive types
(QITs~\cite{altenkirch-kaposi:2016}) valued in the modal universe $\UUSem$:
\begin{align*}
  \NfTpAt{A} &: \UUSem \qquad \prn{A : \SynAlg.\Tp}\\
  \NfAt{A}{a} &: \UUSem\qquad\prn{A:\SynAlg.\Tp,a : \SynAlg.\Tm\prn{A}}\\
  \NeAt{\phi}{A}{a} &: \UUSem \qquad\prn{\phi:\FF,A:\SynAlg.\Tp,a:\SynAlg.\Tm\prn{A}}
\end{align*}

In fact, we ensure that $\NeAt{\phi}{A}{a}$ is not only $\Op$-connected but
actually $\prn{\phi\lor\Syn}$-connected, capturing the sense in which the data
of a neutral form collapses to a point on its locus of instability.
Then, the collections of normal and neutral forms are obtained by dependent sum
and realignment as follows (noting that the fibers of each family are valued in
$\UUSem$ and are thus $\Op$-connected):
\begin{align*}
  \NfTp &\cong \Sum{A : \SynAlg.\Tp}{\NfTpAt{A}}\\
  \Nf{A} &\cong \Sum{a : \SynAlg.\Tm\prn{A}}{\NfAt{A}{a}}\\
  \Ne{\phi}{A} &\cong \Sum{a : \SynAlg.\Tm\prn{A}}{\NeAt{\phi}{A}{a}}
\end{align*}

Our use of quotienting in the definition of normal forms is to impose correct
\emph{cubical boundaries} on constructors: for instance, we must have
$\partial{i}\to \frk{loop}\prn{i} = \frk{base}$. Because the theory of
cofibrations is (externally) decidable, the quotient can be presented externally
by an effective rewriting system that reduces size and is therefore obviously
noetherian.

\begin{remark}
  An indexed quotient inductive type in $\UUSem$ also has a universal property
  in $\UU$, obtained by adding an additional quotient-inductive clause that
  contracts each fiber to a point under $z:\Syn$.
\end{remark}

In \cref{fig:nf-ne-qit} we present the indexed quotient inductive definition of
normal and neutral forms.

\begin{figure*}
  \begin{mathpar}
    \inferrule[circle type]{
    }{
      \frk{s}{1} : \NfTpAt{\SynAlg.\Con{S1}}
    }
    \and
    \inferrule[path type]{
      \frk{A} : \Prod{i : \II}{\NfTpAt{A\prn{i}}}
      \\
      \frk{a}_0 : \NfAt{A\prn{0}}{a_0}
      \\
      \frk{a}_1 : \NfAt{A\prn{1}}{a_1}\\
    }{
      \frk{path}\Mute{\brc{A,a_0,a_1}}\prn{\frk{A},\frk{a}_0,\frk{a}_1} : \NfTpAt{\SynAlg.\Con{path}\prn{A,a_0,a_1}}
    }
    \and
    \inferrule[glue type]{
      \phi : \FF\\
      \frk{B} : \NfTpAt{B}\\
      \frk{A} : \Prod{z : \brk{\phi}}{\NfTpAt{A\prn{z}}}\\
      \frk{f} : \Prod{z : \brk{\phi}}{\NfAt{\SynAlg.\Con{Equiv}\prn{A\prn{z},B}}{f\prn{z}}}
    }{
      \frk{glue}\Mute{\brc{B,A,f}}\prn{\phi,\frk{B},\frk{A},\frk{f}} : \NfTpAt{\SynAlg.\Con{glue}\prn{\phi,B,A,f}}
    }
    \and
    \inferrule[glue type boundary]{
      \phi : \FF\\
      \frk{B} : \NfTpAt{B}\\
      \frk{A} : \Prod{z : \brk{\phi}}{\NfTpAt{A\prn{z}}}\\
      \frk{f} : \Prod{z : \brk{\phi}}{\NfAt{\SynAlg.\Con{Equiv}\prn{A\prn{z},B}}{f\prn{z}}}\\
      z : \brk{\phi}
    }{
      \frk{glue}\Mute{\brc{B,A,f}}\prn{\phi,\frk{B},\frk{A},\frk{f}} = \frk{A}\prn{z} : \NfTpAt{A\prn{z}}
    }
    \and
    \inferrule[pi/sg type]{
      \frk{A} : \NfTpAt{A}
      \\
      \frk{B} : \Prod{x : \Con{var}\prn{A}} \NfTpAt{B\prn{x}}
    }{
      \frk{pi}\Mute{\brc{A,B}}\prn{\frk{A},\frk{B}} : \NfTpAt{\SynAlg.\Pi\prn{A,B}}
      \\
      \frk{sg}\Mute{\brc{A,B}}\prn{\frk{A},\frk{B}} : \NfTpAt{\SynAlg.\Sigma\prn{A,B}}
    }
  \end{mathpar}

  \begin{mathpar}
    \inferrule[unstable]{
      z : \brk{\phi}
    }{
      \star\prn{z} : \NeAt{\phi}{A}{a}
    }
    \and
    \inferrule[unstable collapse]{
      z : \brk{\phi}\\
      \frk{a} : \NeAt{\phi}{A}{a}
    }{
      \frk{a} = \star\prn{z} : \NeAt{\phi}{A}{a}
    }
    \and
    \inferrule[variable]{
      x : \Con{var}\prn{A}
    }{
      \frk{var}\Mute{\brc{A}}\prn{x} : \NeAt{\bot}{A}{x}
    }
    \and
    \inferrule[function application]{
      \frk{f} : \NeAt{\phi}{\SynAlg.\Pi\prn{A,B}}{f}\\
      \frk{a} : \NfAt{A}{a}
    }{
      \frk{app}\Mute{\brc{A,B,f,a}}\prn{\frk{f},\frk{a}} : \NeAt{\phi}{A}{a}
    }
    \and
    \inferrule[pair projection]{
      \frk{p} : \NeAt{\phi}{\SynAlg.\Sigma\prn{A,B}}{p}
    }{
      \frk{fst}\Mute{\brc{A,B,p}}\prn{\frk{p}} : \NeAt{\phi}{A}{\pi_1\prn{p}}\\
      \frk{snd}\Mute{\brc{A,B,p}}\prn{\frk{p}} : \NeAt{\phi}{B\prn{\pi_1\prn{p}}}{\pi_2\prn{p}}
    }
    \and
    \inferrule[path application]{
      \frk{p} : \NeAt{\phi}{\SynAlg.\Con{path}\prn{A,a_0,a_1}}{p}
      \\
      r : \II
    }{
      \frk{papp}\Mute{\brc{A,a_0,a_1}}\prn{\frk{p},r} : \NeAt{\phi\lor_\FF\partial{r}}{A\prn{r}}{p\prn{r}}
    }
    \and
    \inferrule[unglue destructor]{
      \frk{g} : \NeAt{\psi}{\SynAlg.\Con{glue}\prn{\phi,B,A,f}}{g}
    }{
      \frk{unglue}\Mute{\brc{B,A,f,g}}\prn{\phi,\frk{g}} : \NeAt{\psi\lor_\FF\phi}{B}{\SynAlg.\Con{unglue}\prn{g}}
    }
    \and
    \inferrule[circle induction]{
      \frk{C} : \Prod{x : \Con{var}\prn{\SynAlg.\Con{S1}}}{\NfTpAt{C\prn{x}}}\\
      \frk{b} :\NfAt{C\prn{\SynAlg.\Con{base}}}{b}\\
      \frk{l} : \Prod{i:\II}{\NfAt{C\prn{\SynAlg.\Con{loop}\prn{i}}}{l\prn{i}}}\\
      \frk{s} : \NeAt{\phi}{\SynAlg.\Con{S1}}{s}
    }{
      \frk{ind}\Mute{\brc{C,b,l,s}}\prn{\frk{C},\frk{b},\frk{l},\frk{s}} : \NeAt{\phi}{C\prn{s}}{\SynAlg.\Con{ind}\Sub{\Con{S1}}\prn{C,b,l,s}}
    }
  \end{mathpar}

  \begin{mathpar}
    \inferrule[circle neutral lift]{
      \frk{s}_0 : \NeAt{\phi}{\SynAlg.\Con{S1}}{s}\\
      \frk{s}_\phi : \Prod{\_:\brk{\phi}}{\NfAt{\SynAlg.\Con{S1}}{s}}
    }{
      \frk{lift}\Mute{\brc{s}}\prn{\frk{s}_0,\frk{s}_\phi} : \NfAt{\SynAlg.\Con{S1}}{s}
    }
    \and
    \inferrule[circle neutral lift boundary]{
      \frk{s}_0 : \NeAt{\phi}{\SynAlg.\Con{S1}}{s}\\
      \frk{s}_\phi : \Prod{\_:\brk{\phi}}{\NfAt{\SynAlg.\Con{S1}}{s}}\\
      z : \brk{\phi}
    }{
      \frk{lift}\Mute{\brc{s}}\prn{\frk{s}_0,\frk{s}_\phi} = \frk{s}_\phi\prn{z} : \NfAt{\SynAlg.\Con{S1}}{s}
    }
    \and
    \inferrule[circle base]{
    }{
      \frk{base} : \NfAt{\SynAlg.\Con{S1}}{\SynAlg.\Con{base}}
    }
    \and
    \inferrule[circle loop]{
      r : \II
    }{
      \frk{loop}\prn{r} : \NfAt{\SynAlg.\Con{S1}}{\SynAlg.\Con{loop}\prn{r}}
    }
    \and
    \inferrule[circle loop boundary]{
      r : \II
      \\
      \_ : \brk{\partial r}
    }{
      \frk{loop}\prn{r} = \frk{base} : \NfAt{\SynAlg.\Con{S1}}{\SynAlg.\Con{base}}
    }
    \and
    \inferrule[circle formal homogeneous composition]{
      r,s:\II\\
      \phi:\FF\\
      \frk{a} : \Prod{i:\II}\Prod{z:\brk{i=r\lor_\FF\phi}}\NfAt{\SynAlg.\Con{S1}}{a\prn{i,z}}
    }{
      \frk{fhcom}\Mute{\brc{a}}\prn{r,s,\phi,\frk{a}} : \NfAt{\SynAlg.\Con{S1}}{
        \SynAlg.\Con{hcom}\prn{\SynAlg.\Con{S1},r,s,\phi,a}
      }
    }
    \and
    \inferrule[function abstraction]{
      \frk{f} : \Prod{x : \Con{var}\prn{A}}{\NfAt{B\prn{x}}{f\prn{x}}}
    }{
      \frk{lam}\Mute{\brc{A,B,f}}\prn{\frk{f}} : \NfAt{\SynAlg.\Pi\prn{A,B}}{\lambda x.f\prn{x}}
    }
    \and
    \inferrule[pair constructor]{
      \frk{a} : \NfAt{A}{a}\\
      \frk{b} : \NfAt{B\prn{a}}{b}
    }{
      \frk{pair}\Mute{\brc{A,B,a,b}}\prn{\frk{a},\frk{b}} : \NfAt{\SynAlg.\Sigma}{\prn{a,b}}
    }
    \and
    \inferrule[path abstraction]{
      \frk{p} : \Prod{i: \II}{\NfAt{A\prn{i}}{p\prn{i}}}
    }{
      \frk{plam}\Mute{\brc{A,a_0,a_1,p}}\prn{\frk{p}} : \NfAt{\SynAlg.\Con{path}\prn{A,a_0,a_1}}{\lambda i.p\prn{i}}
    }
    \and
    \inferrule[englue constructor]{
      \frk{a} : \Prod{z : \brk{\phi}}{\NfAt{A}{a}}\\
      \frk{b} : \NfAt{B}{b}\\
      \forall z:\brk{\phi}. \frk{b} = \frk{a}\prn{z}
    }{
      \frk{englue}\Mute{\brc{B,A,f,a,b}}\prn{\phi,\frk{a},\frk{b}} : \NfAt{
        \SynAlg.\Con{glue}\prn{\phi,B,A,f}
      }{\SynAlg.\Con{glue}/\Con{tm}\prn{a,b}}
    }
    \and
    \inferrule[englue constructor boundary]{
      \frk{a} : \Prod{z : \brk{\phi}}{\NfAt{A}{a}}\\
      \frk{b} : \NfAt{B}{b}\\
      \forall z:\brk{\phi}. \frk{b} = \frk{a}\prn{z}\\
      z : \brk{\phi}
    }{
      \frk{englue}\Mute{\brc{B,A,f,a,b}}\prn{\phi,\frk{a},\frk{b}} = \frk{a}\prn{z} : \NfAt{A}{a}
    }
    \and
  \end{mathpar}

  \caption{The explicit indexed quotient-inductive definition of normal and neutral forms. The \textsc{unstable} and \textsc{unstable collapse} rules ensure that $\Ne{\phi}{A}$ collapses to $\SynAlg.\Tm\prn{A}$ within the locus of instability $\phi$.}
  \label{fig:nf-ne-qit}
\end{figure*}

\begin{lemma}[Decidability of equality of normal forms]\label{thm:dec-eq-nf}
  Given two external normal forms $\Mor[A_0,A_1]{\Atom{\Gamma}}{\NfTp}$,
  it is recursively decidable whether $A_0=A_1$ or $A_0\not=A_1$.
\end{lemma}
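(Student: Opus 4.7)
The plan is to present the quotient inductive definition of normal forms externally as a rewriting system and use confluence and termination to reduce decidability of equality to structural equality of canonical representatives. The paper already remarks that ``the theory of cofibrations is (externally) decidable, the quotient can be presented externally by an effective rewriting system that reduces size and is therefore obviously noetherian''; we make that remark precise here.

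First, I would orient each equational clause of the QIT in \cref{fig:nf-ne-qit} as a rewrite rule pointing to the structurally smaller side: the \textsc{unstable collapse} rule sends any $\frk{a}:\NeAt{\phi}{A}{a}$ to $\star(z)$ whenever a proof $z:\brk{\phi}$ is available; the \textsc{circle loop boundary} sends $\frk{loop}(r)$ to $\frk{base}$ whenever $\partial r$ is satisfied; the \textsc{glue type boundary}, \textsc{englue boundary}, and \textsc{circle neutral lift boundary} send each constructor to its $\phi$-component whenever $\phi$ holds. Each of these side conditions asks whether a specific cofibration $\phi\in\FF$ is forced to be $\top$ in the ambient atomic context $\Gamma$; since $\FF$ is generated by $=_\II$, $\land_\FF$, $\lor_\FF$, and $\forall_\II$, and atomic contexts contain only interval and term variables, this entailment problem is decidable by a finite case analysis on the dimension variables of $\Gamma$ (after normalizing the cofibration to a disjunctive form).

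Next, I would show that this rewriting system is strongly normalizing and (locally) confluent. Termination is immediate because every rule strictly decreases a lexicographic size measure on the raw syntax trees (the right-hand side is always a strict subterm of the left-hand side, modulo replacing a constructor by one of its arguments). Local confluence is easy to check because no two left-hand sides overlap nontrivially: the boundary rules all collapse a constructor to a subterm under a cofibration assumption, so any two applicable rules at a node agree by transitivity of the collapsed equations. Newman's lemma then gives confluence, so every $\NfTpAt{A}$-equivalence class has a unique $\rho$-normal representative computable by iterated rewriting.

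With canonical representatives in hand, equality of two normal forms $A_0,A_1$ reduces to structural equality of their canonical representatives, which I would decide by a straightforward mutual recursion on $\NfTpAt{-}$, $\NfAt{-}{-}$, and $\NeAt{\phi}{-}{-}$. The base cases appeal to decidability of equality in $\II$ (finitely generated by $0,1$ and variables of $\Gamma$) and in $\FF$, while the inductive cases compare constructors and descend into subterms; at binding positions (under $\Con{var}(A)$ or $\lambda i:\II$) we extend the atomic context and recurse, which is legitimate because $\Atom{-}$ is functorial in weakenings. The main obstacle I expect is verifying local confluence in the presence of the $\frk{fhcom}$ and $\frk{lift}$ constructors, whose partial-element arguments can themselves contain reducible subterms whose reductions may interact with the outer boundary conditions; this should be handled by showing that the rewrite rules commute with restriction to arbitrary cofibrations, which follows from the fact that every boundary rule is itself stated as a cofibration-indexed equation.
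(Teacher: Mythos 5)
Your rewriting-system preamble is a reasonable way to make precise the paper's informal remark that the quotient ``can be presented externally by an effective rewriting system,'' but the structural-comparison phase has a gap at exactly the point where the paper's own proof does its real work. You handle binders of type $\II$ and $\Con{var}\prn{A}$ by extending the atomic context, but several constructors ($\frk{glue}$, $\frk{englue}$, $\frk{lift}$, $\frk{fhcom}$, and the partial-element arguments generally) bind a proof $z:\brk{\phi}$ of a cofibration, and $\Gamma.\brk{\phi}$ is \emph{not} an atomic context --- atomic contexts contain only term and interval variables. So ``extend the context and recurse'' is unavailable there, and deciding whether two partial elements defined under $\brk{\phi}$ agree is not the same problem as deciding whether $\phi$ is forced in $\Gamma$ (the only cofibration question you address). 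The paper's proof resolves this by case-splitting on the generation of $\phi:\FF$, eliminating $\forall_\II$ by left inversion in the style of CCHM until only a conjunction of interval equations $\overline{r=_\II s}$ remains, halting if that conjunction entails $0=1$, and otherwise recursing under the \emph{equalizing atomic substitution} $\Mor{\Delta}{\Gamma}$. Your proposal needs this step both for the structural comparison and, more subtly, for your rewriting phase: the redex side conditions and the claim that ``rewrite rules commute with restriction to arbitrary cofibrations'' must be interpreted through these equalizers, since restriction along a cofibration is not itself an atomic substitution.

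Beyond that gap, the two arguments genuinely differ in architecture. The paper decides equality by a single direct recursion on $\Gamma, A_0, A_1$, never computing canonical representatives and never invoking confluence; you first normalize with respect to the oriented boundary equations and then compare syntactically, which requires termination (easy, as you say) and confluence (which you assert but do not really check --- e.g.\ the interaction between \textsc{unstable collapse} on a neutral and the boundary collapse of a constructor containing that neutral, or a $\frk{glue}$ whose $\phi$ becomes forced only after an inner reduction changes nothing but whose $\phi$-component itself reduces). The canonical-form route buys a cleaner statement (unique representatives) at the cost of a confluence proof the paper avoids; if you repair the $\brk{\phi}$-binder case by incorporating the left-inversion/equalizer mechanism, either architecture should go through.
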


\begin{proof}
  As in the normal form presentation of strict coproducts
  \citep{altenkirch-dybjer-hofmann-scott:2001}, elements of $\NfTp$ are not pure
  data: they include binders of type $\II$, $\Con{var}\prn{A}$, and
  $\brk{\phi}$. Nevertheless, equality is algorithmically decidable as follows,
  by recursion on $\Gamma,A_0,A_1$.

  At a binder of type $\II$ or $\Con{var}\prn{A}$, we continue at $\Gamma.\II$
  or $\Gamma.A$ respectively. At a binder of type $\brk{\phi}$, we note that our
  definition of $\FF$ (\cref{sec:cof}) licenses a case split on the form of
  $\phi:\FF$. We eliminate universal quantifications in the style of
  \citet{cchm:2017}, proceeding by ``left inversion'' until reaching a
  conjunction of equations $\overline{r=_{\II}s}$. If the conjunction implies
  $0=1$, we halt; otherwise, we proceed under the equalizing atomic substitution
  $\Delta\to\Gamma$.
\end{proof}

\begin{lemma}[Injectivity of normal form constructors]\label{lem:nf-con-inj}
  The following formula holds in the internal logic of $\CatSTC = \Sh{\GlTop}$:
  \begin{multline*}
    \forall A,A',B,B'.\\
    \frk{pi}\prn{A,B} = \frk{pi}\prn{A',B'}
    \implies
    \Cl\prn{
      \prn{A,B} = \prn{A',B'}
    }
  \end{multline*}
\end{lemma}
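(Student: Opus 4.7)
The strategy is the standard no-confusion argument for inductive-type constructors, adapted to the quotient-inductive structure of $\NfTp$ and the closed modality $\Cl$ appearing in the conclusion.

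The plan is to define by double IQIT-recursion a binary predicate
\[
  \Con{NoConf} : \NfTp \to \NfTp \to \Ext{\Omega}{\Syn}{\top}
\]
valued in closed-modal propositions (those automatically trivial under $\Syn$), arranged so that $\Con{NoConf}(\frk{pi}(A, B), \frk{pi}(A', B')) = \Cl((A, B) = (A', B'))$ and $\Con{NoConf}(X, Y) = \top$ on pairs with differing head constructors. A canonical reflexivity witness $\rho : \Prod{X : \NfTp}{\Con{NoConf}(X, X)}$ is then constructed by induction on $X$, using the $\Cl$-unit of reflexivity in the $\frk{pi}$ case and triviality elsewhere. Given the hypothesis $\frk{pi}(A, B) = \frk{pi}(A', B')$, transporting $\rho(\frk{pi}(A, B))$ along this equation yields a term of $\Con{NoConf}(\frk{pi}(A, B), \frk{pi}(A', B')) = \Cl((A, B) = (A', B'))$, completing the proof.

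The key technical subtlety is coherence with the glue boundary equation $\frk{glue}(\phi, \frk{B}, \frk{A}, \frk{f}) = \frk{A}(z)$ under $z : \brk{\phi}$, which is the only nontrivial quotient in the definition of $\NfTpAt{-}$. I would handle this by defining $\Con{NoConf}(\frk{glue}(\phi, \ldots), Y) := \Prod{z : \brk{\phi}}{\Con{NoConf}(\frk{A}(z), Y)}$, and symmetrically in the second argument; this is self-coherent because under $z : \brk{\phi}$ the universal quantification reduces to $\Con{NoConf}(\frk{A}(z), Y)$, matching the value at the identified term. The closed-modal quantified form $\Prod{z : \brk{\phi}}{(-)}$ also preserves $\Op$-connectedness, ensuring the codomain is a legitimate $\UUSem$-valued motive for the normal form QIT eliminator. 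Reflexivity at $\frk{glue}(\phi, \ldots)$ follows the same reduction: under $z : \brk{\phi}$ the required $\Con{NoConf}(\frk{A}(z), \frk{glue}(\phi, \ldots))$ collapses to $\Con{NoConf}(\frk{A}(z), \frk{A}(z))$, inhabited by the inductive hypothesis.

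The main anticipated obstacle is the bookkeeping of the mutual recursion and the verification of all coherence obligations on both sides of the glue boundary. Beyond this the proof is routine, because $\frk{pi}$ itself appears in no QIT quotient, making its injectivity a semantic near-triviality once the modal framework is in place. Analogous statements for $\frk{sg}$, $\frk{path}$, and $\frk{glue}$ follow by the same technique, with the additional observation that injectivity for $\frk{glue}$ only holds modulo the additional datum of the cofibration $\phi$, which itself is handled by $\FF$-locality as in \cref{lem:locality}.
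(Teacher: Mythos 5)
Your proof is correct, but it takes a genuinely different (though equally standard) route from the paper's. You prove injectivity by the \emph{no-confusion} method: a binary, $\Cl$-modal predicate $\Con{NoConf}$ defined by double recursion on $\NfTp$, a reflexivity witness, and transport along the hypothesized equation. The paper instead defines a unary characteristic predicate $\Con{isPi} : \IHom{\NfTp}{\Omega_{\Cl}}$ together with modal projection functions $\Con{dom},\Con{cod} : \IHom{\Compr{X:\NfTp}{\Con{isPi}\prn{X}}}{\Cl\NfTp}$, and applies these projections to both sides of $\frk{pi}\prn{A,B}=\frk{pi}\prn{A',B'}$. Both arguments hinge on exactly the same point --- the only quotient in $\NfTpAt{-}$ is the $\frk{glue}$ boundary, and both of you handle it by recursing through the partial type $\frk{A}\prn{z}$ under $z:\brk{\phi}$; your $\Prod{z:\brk{\phi}}{\Con{NoConf}\prn{\frk{A}\prn{z},Y}}$ plays the role of the paper's $\Cl\exists z:\brk{\phi}.\Con{isPi}\prn{\Mute{A},\frk{A}\prn{z}}$, and either works because $\brk{\phi}$ is a strict proposition that is inhabited precisely where the boundary equation fires, and because $\Cl$-modal propositions are closed under $\Prod$ over arbitrary domains. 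Your approach buys uniformity: one construction yields injectivity \emph{and} disjointness for all constructors at once, and avoids defining functions on a subtype of $\NfTp$. The paper's approach buys a lighter proof obligation: only single inductions with one boundary coherence each, rather than a nested double recursion whose glue coherences must be checked in both arguments (and where, as your own summary hints, the clause for ``differing head constructors'' must be stated with care when one side is a $\frk{glue}$, since the value there is not literally $\top$ but a further recursion through $\frk{A}$). Your closing remark that $\frk{glue}$-injectivity only holds modulo the cofibration $\phi$ is consistent with the paper's observation that $\Con{isPi}\prn{\frk{glue}\prn{\phi,\ldots}}$ forces $\phi=\top$.
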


\begin{proof}
  We prove this in the same way that one proves injectivity of constructors of
  any inductive type, with one subtlety: we must ensure that our constructions
  respect the cubical boundary of $\frk{glue}$, which is the only constructor of
  $\NfTpAt{-}$ subject to an equational clause.

  By induction on $\NfTpAt{-}$, we define a $\Op$-connected predicate
  $\Con{isPi} : \IHom{\NfTp}{\Omega_{\Cl}}$ satisfying a universal property:
  \[
    \forall X:\NfTp.
      \Con{isPi}\prn{X} \iff
      \Cl\prn{\exists\frk{A},\frk{B}. X = \frk{pi}\prn{\frk{A},\frk{B}}}
  \]

  We define $\Con{isPi}$ as follows:
  \begin{align*}
  \Con{isPi}\prn{\_,\frk{pi}\prn{\frk{A},\frk{B}}} &= \Cl\top\\
  \Con{isPi}\prn{\_,\frk{glue}\Mute{\brc{B,A,f}}\prn{\phi,\frk{B},\frk{A},\frk{f}}} &=
    \Cl\exists{z:\brk{\phi}}.{\Con{isPi}\prn{\Mute{A},\frk{A}\prn{z}}}\\
  \Con{isPi}\prn{\_,\frk{sg}\prn{\frk{A},\frk{B}}} &= \Cl\bot\\
  \Con{isPi}\prn{\_,\frk{path}\prn{\frk{A},\frk{a}_0,\frk{a}_1}} &= \Cl\bot\\
  \Con{isPi}\prn{\_,\frk{s}1} &= \Cl\bot
  \end{align*}
  We must verify that $\Con{isPi}$ respects the $\phi$-boundary of $\frk{glue} =
  \frk{A}(z)$. But supposing that $z:\brk{\phi}$, we can calculate that
  $\Con{isPi}\prn{\_,\frk{glue}\prn{\phi,\frk{B},\frk{A},\frk{f}}} =
  \Con{isPi}\prn{\_,\frk{A}(z)}$, which is exactly what it means for
  $\Con{isPi}$ to respect that equation.

  The reverse implication of the universal property of $\Con{isPi}$ is
  immediate. We prove the forward implication by induction on $X$, where again
  the only subtlety is in the case of $\frk{glue}$: if
  $\Con{isPi}\prn{\_,\frk{glue}\prn{\phi,\frk{B},\frk{A},\frk{f}}}$ then we must
  have $\Con{isPi}\prn{\Mute{A},\frk{A}\prn{\Mute{z}}}$, which by the inductive
  hypothesis implies $\frk{A}\prn{\Mute{z}} = \frk{pi}$.

  Then we define $\Con{dom} : \IHom{\Compr{X:\NfTp}{\Con{isPi}\prn{X}}}{\Cl\NfTp}$:
  \begin{align*}
  \Con{dom}\prn{\_,\frk{pi}\prn{\frk{A},\frk{B}}} &= \ClRet{\frk{A}}\\
  \Con{dom}\prn{\_,\frk{glue}\prn{\phi,\frk{B},\frk{A},\frk{f}}} &=
    \Con{dom}\prn{\Mute{A},\frk{A}}
  \end{align*}
  Note that these are the only constructors that may satisfy $\Con{isPi}$, and
  in the case of $\frk{glue}$, if
  $\Con{isPi}\prn{\_,\frk{glue}\prn{\phi,\frk{B},\frk{A},\frk{f}}}$ holds then $\phi=\top$ and
  $\Con{isPi}\prn{\Mute{A},\frk{A}}$. We may define $\Con{cod}$ similarly.

  Finally, suppose that $\frk{pi}\prn{A,B} = \frk{pi}\prn{A',B'}$. Then by
  applying $\Con{dom}$ to both sides of this equation, we get $\Cl\prn{A = A'}$;
  the equality of codomains likewise follows by applying $\Con{cod}$.
\end{proof}

\subsection{Proofs of theorems}

\LemIntervalPreserved*
\begin{proof}
  First we compute the representable points of $\StrMap^*\Yo[\ThCat]{\II}$ by transpose and the Yoneda lemma:
  \begin{align*}
    &\Hom[\Psh{\RxCat}]{\Yo[\RxCat]{\Gamma}}{\StrMap^*\Yo[\ThCat]{\II}}
    \\
    &\quad\cong
    \Hom[\Psh{\ThCat}]{\StrMap_!\Yo[\RxCat]{\Gamma}}{\Yo[\ThCat]{\II}}
    \\
    &\quad\cong
    \Hom[\Psh{\ThCat}]{\Yo[\ThCat]{\StrMap{\Gamma}}}{\Yo[\ThCat]{\II}}
    \\
    &\quad\cong
    \Hom[\ThCat]{\StrMap{\Gamma}}{\II}
  \end{align*}

  We see by induction on the definition of the objects and hom sets of
  $\RxCat$ that this is equivalent to $\Hom[\RxCat]{\Gamma}{\cdot.\II}$.
\end{proof}

\LemGluingTinyObjects*
\begin{proof}
  We must check that the exponential functor $\IHom*{X}{-}$ preserves colimits.
  Fixing a diagram $\Mor[E_\bullet]{I}{\Sh{\XTop}}$, we may compute the
  exponential $\IHom*{X}{\Colim{I}{E_\bullet}}$ in the language of $\Sh{\YTop}$
  as follows; first, the standard computation that glues a function from the
  open subtopos onto a function from the closed subtopos~\cite{johnstone:2002}:

  \vspace{-1em}
  \begingroup
  \small
  \[
    \DiagramSquare{
      width = 4cm,
      nw/style = pullback,
      ne = \IHom{i^*X}{i^*\Colim{I}{E_\bullet}},
      se = \IHom{i^*X}{f^*j^*\Colim{I}{E_\bullet}},
      sw = f^*\IHom*{j^*X}{j^*\Colim{I}{E_\bullet}},
      nw = i^*\IHom*{X}{\Colim{I}{E_\bullet}},
      west/style = exists,
    }
  \]
  \endgroup

  Commute cocontinuous functors past colimits.

  \vspace{-1em}
  \begingroup
  \small
  \[
    \DiagramSquare{
      width = 4cm,
      nw/style = pullback,
      ne = \IHom{i^*X}{\Colim{I}{i^* E_\bullet}},
      se = \IHom{i^*X}{\Colim{I}{f^*j^* E_\bullet}},
      sw = f^*\IHom*{j^*X}{\Colim{I}{j^* E_\bullet}},
      nw = i^*\IHom*{X}{\Colim{I}{E_\bullet}},
      west/style = exists,
    }
  \]
  \endgroup

  Use the tininess of $i^*X,j^*X$ and the cocontinuity of $f^*$.

  \vspace{-1em}
  \begingroup
  \small
  \[
    \DiagramSquare{
      width = 4cm,
      nw/style = pullback,
      ne = {\Colim{I}{\IHom*{i^*X}{i^* E_\bullet}}},
      se = {\Colim{I}{\IHom*{i^*X}{f^*j^* E_\bullet}}},
      sw = {\Colim{I}{f^*\IHom*{j^*X}{j^* E_\bullet}}},
      nw = i^*\IHom*{X}{\Colim{I}{E_\bullet}},
      west/style = exists,
    }
  \]
  \endgroup

  Hence by the universality of colimits we have:

  \vspace{-1em}
  \begingroup
  \small
  \[
    \DiagramSquare{
      width = 4cm,
      nw/style = pullback,
      ne = {\Colim{I}{\IHom*{i^*X}{i^* E_\bullet}}},
      se = {\Colim{I}{\IHom*{i^*X}{f^*j^* E_\bullet}}},
      sw = {\Colim{I}{f^*\IHom*{j^*X}{j^* E_\bullet}}},
      nw = {\Colim{I}{i^*\IHom*{X}{E_\bullet}}},
      west/style = exists,
    }
    \qedhere
  \]
  \endgroup
\end{proof}

\begin{lemma}\label{lem:formal-cxs-not-false}
  If $\Gamma:\RxCat$ is a formal context, then the hom set $\Hom[\ThCat]{\StrMap{\Gamma}}{\brk{0=1}}$ is empty.
\end{lemma}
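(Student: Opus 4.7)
The plan is to exhibit a ``trivial'' set-valued model $M : \ThCat \to \Set$ under which $\brk{0=1}$ is interpreted as the empty set while every formal context is interpreted as a non-empty set. The lemma then follows immediately: a morphism $f : \StrMap{\Gamma} \to \brk{0=1}$ in $\ThCat$ would be sent by the functor $M$ to a function from a non-empty set to $\emptyset$, which is impossible.

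The construction of $M$ collapses every type and term judgment to a singleton while keeping the cubical and cofibrant structure genuine. Explicitly, we set $M(\Tp) := \{\ast\}$, $M(\Tm)(\ast) := \{\ast\}$, $M(\II) := \{0, 1\}$, $M(\FF) := \{\bot, \top\}$, with $M(\brk{\top}) := \{\ast\}$ and $M(\brk{\bot}) := \emptyset$. Equality, conjunction, disjunction, and universal quantification on $\FF$ are given their usual Boolean interpretations, and every connective ($\Pi$, $\Sigma$, $\Con{path}$, $\Con{glue}$, $\Con{S1}$), constructor, eliminator, and Kan operation is interpreted as the unique inhabitant of the relevant singleton. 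Since $M(0) \neq M(1)$ in $M(\II)$, we have $M(0 = 1) = \bot$ and hence $M(\brk{0=1}) = \emptyset$ as desired.

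We then verify by induction on $\Gamma : \RxCat$ that $M(\StrMap{\Gamma})$ is non-empty. The base case gives $M(\StrMap{\cdot}) = M(\ObjTerm{\ThCat}) = \{\ast\}$; the context extension cases give $M(\StrMap{\Gamma.\II}) = M(\StrMap{\Gamma}) \times \{0, 1\}$ and $M(\StrMap{\Gamma.A}) = M(\StrMap{\Gamma}) \times \{\ast\}$, both of which inherit non-emptiness from $M(\StrMap{\Gamma})$. Combined with the previous paragraph, the existence of $f \in \Hom[\ThCat]{\StrMap{\Gamma}}{\brk{0=1}}$ would yield $M(f) : M(\StrMap{\Gamma}) \to \emptyset$, a contradiction.

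The main obstacle is verifying that $M$ really is an LCCC model of the signature, which requires checking that every equation in Figs.~\ref{fig:cubical-tt:judgments} and~\ref{fig:cubical-tt:connectives}---including the intricate computation rules for the Kan operations at each connective (omitted in the paper itself for space reasons)---holds in $M$. Fortunately, because $M(\Tp)$ and each fiber of $M(\Tm)$ are singletons, every equation between types or terms is vacuously satisfied, and the only substantive checks concern the Boolean lattice structure on $\FF$ (closure under $\land_\FF,\lor_\FF,\forall_\II,{=_\II}$, strict univalence, and the eliminators $\Con{abort}_\Kwd{J},\Con{split}_\Kwd{J}$), all of which are routine.
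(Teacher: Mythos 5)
Your proof is correct and follows the same strategy as the paper's: both arguments refute the existence of a morphism $\StrMap{\Gamma}\to\brk{0=1}$ by exhibiting a model of $\ThCat$ in which $\brk{0=1}$ is interpreted as the initial object while every formal context is interpreted as an inhabited one. The only difference is the choice of target model. The paper interprets $\ThCat$ into cubical sets, keeping the standard interpretation of the interval and of cofibrations and merely arranging for every type to be inhabited; this lets it lean on the existing model construction for Cartesian cubical type theory and re-examine only the clauses involving types (hence the remark about weakly empty types and $\Con{abort}$ returning a basepoint). You instead build a fully degenerate model in $\SET$ with a two-point interval, Boolean cofibrations, and singleton types, which is more elementary but shifts the burden onto verifying the entire signature from scratch. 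That verification does go through: equations between types and terms are vacuous in a singleton-valued model, the Boolean structure on $\FF$ validates strict univalence and the closure conditions, and the eliminators $\Con{abort}_{\Kwd{J}}$ and $\Con{split}_{\Kwd{J}}$ hold at \emph{all} generating judgments (including $\II$ and $\FF$, not only $\Tp$ and $\Tm$) because in the Boolean model the canonical map $\brk{\phi}\sqcup\brk{\psi}\to\brk{\phi\lor_\FF\psi}$ is surjective and the constraint that $x_\psi$ land in $\Ext{\Kwd{J}}{\phi}{x_\phi}$ makes the case split unambiguous. It would strengthen your write-up to say this last point explicitly rather than folding it into ``routine,'' since the splitting structure at $\II$ and $\FF$ is the one place where the degenerate model is not automatic; but there is no gap, and either model suffices for the lemma.
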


\begin{proof}
  Formal contexts do not induce assumptions of false cofibrations. This can be
  seen by a model construction in which $\ThCat$ is interpreted into cubical
  sets, where the interpretation of cofibrations is standard but each type is
  interpreted as an \emph{inhabited} type. Such an argument accommodates types
  that are ``weakly empty'' (e.g.\ the $\Con{void}$ type lacking an
  $\eta$-law), because the $\Con{abort}$ eliminator can simply return the
  basepoint of its motive.
\end{proof}

\BotImpliesSyn*

\begin{proof}
  We recall that the interval $\II$ is purely syntactic
  (\cref{lem:ii-preserved}), hence we have $\FakeFalse =
  \OpImm_*\Yo[\ThCat]{\brk{0=1}}$ and therefore $\ClImm^*\FakeFalse =
  \StrMap^*\Yo[\ThCat]{\brk{0=1}}$. To show the inequality $\FakeFalse\leq\Syn$, we
  need to exhibit a square of the following kind in $\Psh{\RxCat}$:
  \[
    \DiagramSquare{
      nw = \StrMap^*\Yo[\ThCat]{\brk{0=1}},
      sw = \StrMap^*\Yo[\ThCat]{\brk{0=1}},
      ne = \ObjInit,
      se = \ObjTerm,
      east = \Syn,
      west = \FakeFalse,
      north/style = exists,
    }
  \]

  It suffices to show that $\StrMap^*\Yo[\ThCat]{\brk{0=1}}$ is the initial
  object of $\Psh{\RxCat}$, but this follows from \cref{lem:formal-cxs-not-false}.
\end{proof}

\begin{lemma}\label{lem:01-connectedness}
  A type $A : \UU$ in $\CatSTC$ is $\FakeFalse$-connected if and only $\Op{A}$ is $\FakeFalse$-connected.
\end{lemma}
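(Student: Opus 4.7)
The plan is to prove the two implications separately, with the forward direction being essentially formal and the backward direction requiring \cref{lem:0=1-implies-syn}.

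For the forward direction, I would assume $A$ is $\FakeFalse$-connected, i.e., $\IHom*{\brk{\FakeFalse}}{A} \cong \ObjTerm$. Because $\Op A \cong \IHom*{\Syn}{A}$ as computed in \cref{sec:stc:modalities}, I would chase the isomorphisms by currying:
\[
  \IHom*{\brk{\FakeFalse}}{\Op A} \cong \IHom*{\brk{\FakeFalse}}{\IHom*{\Syn}{A}} \cong \IHom*{\Syn}{\IHom*{\brk{\FakeFalse}}{A}} \cong \IHom*{\Syn}{\ObjTerm} \cong \ObjTerm.
\]
This direction only uses the explicit formula for $\Op$ and does not require any special hypothesis on $\CatSTC$.

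For the backward direction, assume $\Op A$ is $\FakeFalse$-connected. The key observation is \cref{lem:0=1-implies-syn}, which gives $\FakeFalse \leq \Syn$ in the lattice of subterminals of $\CatSTC$. Thus, under the assumption $z:\brk{\FakeFalse}$ we also have $\Syn$ true, and so $\Syn$ becomes terminal under $\brk{\FakeFalse}$; consequently the canonical unit map $A \to \Op A = \IHom*{\Syn}{A}$ becomes an isomorphism fiberwise over $\brk{\FakeFalse}$. Passing to the function space, this gives $\IHom*{\brk{\FakeFalse}}{A} \cong \IHom*{\brk{\FakeFalse}}{\Op A} \cong \ObjTerm$, as desired.

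The main (very mild) obstacle is just keeping track of the interaction between the subterminal $\brk{\FakeFalse}$ and the proposition $\Syn$; once one unfolds that $\FakeFalse \leq \Syn$ means there is a (necessarily unique) inclusion $\brk{\FakeFalse} \hookrightarrow \Syn$, the argument becomes essentially a one-line manipulation of function spaces in the internal language. In particular, no use of strictification or tininess is required, only the computation of $\Op$ as $\IHom*{\Syn}{-}$ and the locality axiom.
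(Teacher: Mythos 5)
Your proof is correct and follows the same route as the paper's (very terse) argument: the forward direction is a formal consequence of $\Op A \cong \IHom*{\Syn}{A}$, and the converse uses \cref{lem:0=1-implies-syn} to make the unit $A\to\Op A$ invertible under $\brk{\FakeFalse}$. You have simply filled in the details the paper leaves implicit.
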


\begin{proof}
  If $A$ is $\FakeFalse$-connected, it is immediate that $\Op{A}$ is
  $\FakeFalse$-connected. Conversely, assume that $\Op{A}$ is
  $\FakeFalse$-connected; \cref{lem:0=1-implies-syn} implies that $A$ is also
  $\FakeFalse$-connected.
\end{proof}

\LemLocality*
\begin{proof}
  For $\FakeFalse$-connectedness, we use \cref{lem:01-connectedness}. Then,
  fixing $\phi,\psi : \FF$ we must check that a partial element
  $\Mor[\brk{\phi\hookrightarrow a_\phi,\psi\hookrightarrow a_\psi}]{\brk{\phi}\lor\brk{\psi}}{A}$ can be extended to a unique partial
  element $\Mor{\brk{\phi\lor_\FF\psi}}{A}$.

  We assume a proof of $\brk{\phi\lor_\FF\psi}$; by \cref{con:disj} we have
  $\SynAlg.\brk{\phi\mathbin{\SynAlg.\lor_\FF}\psi}$ and
  $\Cl\prn{\brk{\phi}\lor\brk{\psi}}$, which is the same as
  $\brk{\phi}\lor\brk{\psi}\lor\Syn$. Hence we may form the following
  partial element, using the fact that $\Op{A}$ is $\FF$-local:
  \[
    \left\lbrack
    \begin{array}{ll}
      \phi\hookrightarrow a_\phi\\
      \psi\hookrightarrow a_\psi\\
      \Syn\hookrightarrow \brk{\phi\hookrightarrow a_\phi,\psi\hookrightarrow a_\psi}\Sub{\Op{A}}
    \end{array}
    \right .
    \qedhere
  \]
\end{proof}

\begin{lemma}\label{lem:atomic-nerve}
  For any $X:\CatSTC$, we have a canonical isomorphism $\Hom*{\Atom}{X} \cong
  \ClImm^*X : \Psh{\RxCat}$ determined by adjoint transpose and the Yoneda
  lemma.
\end{lemma}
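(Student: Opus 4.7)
The plan is a direct application of the universal property of $\Atom{\Gamma}$ together with the Yoneda lemma, exactly as the lemma statement suggests. First I would unfold the definition $\Atom{\Gamma} = \ClImm_!\Yo[\RxCat]{\Gamma}$ introduced just before \cref{con:atm-to-cx}. Then the adjunction $\ClImm_! \dashv \ClImm^*$ (noted in the remark preceding \cref{con:atm-to-cx}) gives, for each $\Gamma : \RxCat$,
\[
  \Hom[\CatSTC]{\ClImm_!\Yo[\RxCat]{\Gamma}}{X} \;\cong\; \Hom[\Psh{\RxCat}]{\Yo[\RxCat]{\Gamma}}{\ClImm^*X},
\]
and the Yoneda lemma then identifies the right-hand side with $\prn{\ClImm^*X}\prn{\Gamma}$.

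To conclude that these pointwise isomorphisms assemble into an isomorphism of presheaves $\Hom*{\Atom}{X} \cong \ClImm^*X : \Psh{\RxCat}$, I would check naturality in $\Gamma$. This is automatic: the adjunction isomorphism is natural in its first argument, the Yoneda isomorphism is natural in its representing object, and both the Yoneda embedding $\Yo[\RxCat]{-}$ and the left adjoint $\ClImm_!$ are functorial, so the composite respects restriction along morphisms $\Mor{\Delta}{\Gamma}$ in $\RxCat$.

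There is no substantive obstacle, as the argument is a pure unpacking of the universal properties involved; the lemma is better understood as a convenient reformulation than as a theorem with content. Its role is to let sheaves on $\GlTop$ be reindexed against atomic substitutions, which is exactly the manipulation required in \cref{con:atm-to-cx,lem:reix-atom,thm:norm-fn}.
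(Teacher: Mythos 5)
Your proposal is correct and matches the paper's proof exactly: both unfold $\Atom{\Gamma}$ as $\ClImm_!\Yo[\RxCat]{\Gamma}$, apply the adjunction $\ClImm_!\dashv\ClImm^*$, and finish with the Yoneda lemma, with naturality in $\Gamma$ being automatic from the naturality of these identifications. No issues.
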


\begin{proof}
  Fix an atomic context $\Gamma : \RxCat$ and compute:
  \begin{align*}
    \Hom*{\Atom}{X}\prn{\Gamma} &\cong \Hom[\CatSTC]{\Atom{\Gamma}}{X}\\
    &= \Hom[\CatSTC]{\ClImm_!\Yo[\RxCat]{\Gamma}}{X}\\
    &\cong \Hom[\Psh{\RxCat}]{\Yo[\RxCat]{\Gamma}}{\ClImm^*X}\\
    &\cong \ClImm^*X\prn{\Gamma}
    \qedhere
  \end{align*}
\end{proof}

\ReixAtom*
\begin{proof}
  The vertical map is computed in the language of the comma category as the
  following square:
  \[
    \DiagramSquare{
      width = 3.5cm,
      nw = \Hom*{\bbrk{-}}{X},
      ne = \Hom*{\Atom}{X} \cong \ClImm^*X,
      sw = \StrMap^*\OpImm^*X,
      se = \StrMap^*\OpImm^*X,
      east = X,
      south = \StrMap^*\ArrId{\OpImm^*X},
      north = \Hom*{\AtmToCx}{X},
      west = X\Sub{\CmpAlg}
    }
  \]

  To see that the diagram commutes, we chase an element
  $\Mor[x]{\bbrk{\Gamma}}{X}$, using the fact that each component
  $\Mor[\AtmToCx_\Gamma]{\Atom{\Gamma}}{\bbrk{\Gamma}}$ is vertical.
\end{proof}

\NormFunction*
\begin{proof}
  Unfolding things more precisely,
  the vertical map $\Mor{\SynAlg.\Tp}{\prn{\CmpAlg.\Tp}\Sub{\CmpAlg}}$
  must be a square of the following form:
  \[
    \DiagramSquare{
      width = 3.5cm,
      nw = \StrMap^*\Yo[\ThCat]{\Tp},
      sw = \StrMap^*\Yo[\ThCat]{\Tp},
      se = \StrMap^*\Yo[\ThCat]{\Tp},
      south = \StrMap^*\ArrId{\Yo[\ThCat]{\Tp}},
      west = {\SynAlg.\Tp = \OpImm_*\Yo[\ThCat]{\Tp}},
      ne = \Hom*{\bbrk{-}}{\CmpAlg.\Tp},
      east = \prn{\CmpAlg.\Tp}\Sub{\CmpAlg},
      north/style = exists,
    }
  \]

  The upstairs map is defined by functoriality of the computability
  interpretation, taking a type $\Mor[A]{\StrMap{\Gamma}}{\Tp}$ to its chosen
  normalization structure $\Mor[\CmpAlg\prn{A}]{\bbrk{\Gamma}}{\CmpAlg.\Tp}$.
\end{proof}

\Injectivity*
\begin{proof}
  By completeness, $\SynAlg.\Pi\prn{A,B} = \SynAlg.\Pi\prn{A',B'}$ implies:
  \[
    \frk{pi}\prn{\Con{nbe}\prn{A},\lambda x.\Con{nbe}\prn{B\prn{x}}} =
    \frk{pi}\prn{\Con{nbe}\prn{A'},\lambda x.\Con{nbe}\prn{B'\prn{x}}}
  \]
  By soundness, it suffices to show:
  \[
    \Cl\prn{\prn{\Con{nbe}\prn{A},\lambda x.\Con{nbe}\prn{B\prn{x}}} =
    \prn{\Con{nbe}\prn{A'},\lambda x.\Con{nbe}\prn{B'\prn{x}}}}
  \]
  The result follows from injectivity of $\frk{pi}$ in $\NfTp$, shown by a
  standard inductive argument (\cref{lem:nf-con-inj}).
\end{proof}

\begin{notation}\label{not:cmp-maps}
  We write $\bbrk{-} : \SynAlg.\Tp \to \CmpAlg.\Tp$ and $\bbrk{-} :
  \SynAlg.\Tm\prn{A} \to \bbrk{A}$ for the maps constructed in
  \cref{thm:norm-fn} which send types and terms to their normalization data in
  $\CmpAlg$.
\end{notation}

\begin{lemma}[Idempotence for variables]\label{lem:stab-var}
  For all $x:\Con{var}\prn{A}$, we have $\bbrk{x} = \Reflect{\bbrk{A}}{\bot}{\NeGlue{\frk{var}\prn{x}}{\bot}{\brk{}}}$.
\end{lemma}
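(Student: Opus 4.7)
My plan is to unfold the definition of $\bbrk{-}$ on terms supplied by Notation~\ref{not:cmp-maps} and \cref{thm:norm-fn}, which factors as the action of the functor $\CmpAlg$ followed by restriction along the pointwise vertical natural transformation $\AtmToCx : \Atom \Rightarrow \bbrk{-}$; by \cref{lem:reix-atom}, the second step is exactly $\AtmToCx^*_{\CmpAlg.\Tm\prn{A}}$. In particular, for any $x:\Con{var}\prn{A}$ viewed as a syntactic term, $\bbrk{x}$ can be computed as $\CmpAlg\prn{\StrMap\prn{x}}$ precomposed with $\AtmToCx$ at the appropriate atomic context.

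I would then reduce to the generic case by naturality. Every variable $x : \Con{var}\prn{A}$ arises, up to iterated weakening, from the top variable $\frk{z}_A$ in some atomic context of the form $\Gamma.A$ (\cref{sec:atomic}); both sides of the desired equation are stable under the atomic substitutions witnessing these weakenings---the right-hand side because $\frk{var}$ is stable under atomic substitution by construction, the left-hand side by functoriality of $\CmpAlg$ and naturality of $\AtmToCx$---so it suffices to verify the equation at $\frk{z}_A$ in $\Gamma.A$.

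For this generic case, $\CmpAlg\prn{\StrMap\prn{\frk{z}_A}}$ is, by functoriality and the definition of the atomic context $\Gamma.A$, the second projection $q_{\CmpAlg\prn{A}}$ out of $\bbrk{\Gamma.A}$. Precomposing with $\AtmToCx_{\Gamma.A}$ evaluated on the identity atomic substitution and reading off the top component using the second clause of \cref{con:atm-to-cx} yields precisely
\[
  \Reflect{\CmpAlg\prn{A}}{\bot_\FF}{\NeGlue{\frk{var}\prn{\frk{z}_A}}{\bot_\FF}{\brk{}\Sub{\CmpAlg\prn{A}}}},
\]
which, after unfolding $\CmpAlg\prn{A} = \bbrk{A}$, is the required right-hand side.

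The main obstacle I anticipate is the bureaucratic matching between the term-level map $\bbrk{-} : \SynAlg.\Tm\prn{A} \to \bbrk{A}$ (a dependent variant of the type-level construction in \cref{thm:norm-fn}) and the explicit variable clause of \cref{con:atm-to-cx}: once the correspondence between the ``identity atomic substitution on $\Gamma.A$'' and the generic point of $\bbrk{\Gamma.A}$ is set up cleanly, the equation is forced by how $\AtmToCx$ was defined in the variable case, and the reduction to $\frk{z}_A$ requires only that the naturality square for $\AtmToCx$ along weakening atomic substitutions commutes---an immediate consequence of $\AtmToCx$ being a natural transformation.
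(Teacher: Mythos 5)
Your proposal is correct and takes essentially the same approach as the paper: both unfold $\bbrk{x}$ by passing to the pointwise description over atomic contexts (the syntactic part being trivial by verticality) and then read off the variable clause of \cref{con:atm-to-cx}. The only difference is organizational---the paper argues directly at an arbitrary variable $\IsVar{\Gamma}{x}{A}$, whereas you first reduce to the top variable $\frk{z}_A$ by naturality, a harmless detour.
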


\begin{proof}
  Because this equation already holds when restricted over the open subtopos,
  it suffices to reason ``upstairs'' in $\Sh{\RxTop}$, hence pointwise with
  respect to an arbitrary atomic context $\Gamma:\RxCat$.  Here $x$ is an atomic
  term $\IsVar{\Gamma}{x}{A}$ and $\bbrk{x}$ is a function that projects the corresponding cell
  from any atomic substitution $\gamma : \Yo[\RxCat]{\Gamma}$ and
  \emph{reflects it} in the chosen normalization structure $\bbrk{A}\gamma$ as
  $\Reflect{\bbrk{A}\gamma}{\bot}{\NeGlue{\frk{var}\prn{x\gamma}}{\bot}{\brk{}}}$,
  recalling \cref{con:atm-to-cx}.
\end{proof}

\begin{lemma}[Idempotence of normalization]\label{lem:norm-stab}
  The maps of \cref{not:cmp-maps} satisfy the following equations:
  \begin{enumerate}
    \item For all $x : \Ne{\phi}{A}$, we have
    $\bbrk{x} = \Reflect{\bbrk{A}}{\phi}{\NeGlue{x}{\phi}{\bbrk{x}}}$.

    \item For all $a : \Nf{A}$, we have $\Reify{\bbrk{A}}{\bbrk{a}} = a$.

    \item For all $A:\NfTp$, we have $\NormTp{\bbrk{A}} = A$.
  \end{enumerate}

  Unfolding the definition of the normalization function, we therefore have
  $\Con{nbe}\prn{A} = A$ for any normal type $A:\NfTp$, and
  $\Con{nbe}\prn{a} = a : \Nf{A}$ for any normal term $a:\Nf{A}$.
\end{lemma}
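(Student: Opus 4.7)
The plan is to prove the three claims simultaneously by induction on the indexed quotient-inductive definition of $\Ne{\phi}{A}$, $\Nf{A}$, and $\NfTp$. Because each family is valued in $\UUSem$ and the equations to be proved are (fiberwise) propositional and already hold strictly under $\Syn$, it suffices to verify them in $\UU$; the boundary conditions of the QIT---the $\phi$-equation on $\frk{glue}$, the $\partial{r}$-equation on $\frk{loop}$, the $\phi$-equation on $\frk{englue}$, and the collapse of neutrals on their locus of instability---are respected automatically because both sides of each equation are syntactically $\bbrk{-}$-natural and preserved by the restriction in question.

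For claim (1), the base case of $\frk{var}$ is exactly \cref{lem:stab-var}. For each eliminator-form neutral constructor, one unfolds the corresponding $\Reflect{\bbrk{A}}{\phi}$ clause from the normalization structures in \cref{fig:norm-path,fig:norm-sg,fig:norm-pi,fig:norm-glue,fig:norm-s1}. For instance, in the $\frk{app}$ case with $\bbrk{f(a)} = \bbrk{f}\bbrk{a}$, the inductive hypothesis $\bbrk{f} = \Reflect{\bbrk{\Pi(A,B)}}{\phi}{\NeGlue{f}{\phi}{\bbrk{f}}}$ rewrites the left-hand side into exactly the right-hand side of the desired equation after reading off the definition of reflection at $\Pi$. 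The $\frk{papp}$, $\frk{fst}/\frk{snd}$, $\frk{unglue}$, and $\frk{ind}\Sub{\Con{S1}}$ cases are handled analogously; the $\frk{unglue}$ case uses the partial element $\phi\hookrightarrow f(a)$ on the locus of instability, which matches $\bbrk{\SynAlg.\Con{unglue}(g)}$ under $\phi$ because $\bbrk{g}$ restricts to $f(a)$ under $\phi$ by the boundary of $\Con{glue}$.

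For claim (2), we proceed constructor by constructor on $\Nf{A}$. The introduction forms $\frk{lam}$, $\frk{pair}$, $\frk{plam}$, $\frk{englue}$, $\frk{base}$, $\frk{loop}$, and $\frk{fhcom}$ each unfold via the corresponding $\Reify{\bbrk{A}}$ clause; the binder cases ($\frk{lam}$ and $\frk{pi}/\frk{sg}$-quantification in (3)) introduce a fresh variable, and the equation $\Reify{\bbrk{A}}{\bbrk{a}} = a$ follows from the inductive hypothesis applied pointwise together with \cref{lem:stab-var} to collapse the reflected fresh variable back to itself. The neutral-lift case $\frk{lift}$ for the circle requires combining claim (1) with (2): reifying $\Con{lift}(\phi, s_0, s)$ produces $\frk{lift}(s_0, \Reify{\Con{S1}}s)$, which by induction equals $\frk{lift}(s_0, s)$, which equals the original normal by the boundary equation.

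Claim (3) follows by the same case analysis on $\NfTp$, unfolding $\NormTp{\bbrk{A}}$ and invoking claim (2) on sub-normals, claim (3) on sub-types, and \cref{lem:stab-var} for the fresh-variable reflections that appear under the $\Pi$, $\Sigma$, and $\Con{S1}$ induction binders. The main obstacle is the $\frk{glue}$ case, where one must verify $\NormTp{\bbrk{\Con{glue}(\phi,B,A,f)}} = \frk{glue}(\phi,\NormTp{\bbrk{B}},\NormTp{\bbrk{A}},\Reify{\Con{Equiv}(A,B)}{\bbrk{f}})$ equals the original $\frk{glue}(\phi,\frk{B},\frk{A},\frk{f})$ while respecting the $\phi$-boundary equation identifying $\frk{glue}$ with $\frk{A}$; this reduces to checking that the inductive hypothesis on $\frk{A}$ and the hypothesis on the equivalence datum $\frk{f}$ are compatible under $\phi$, which holds because reification at an equivalence type is defined by pairing the reifications of its components, and the computability algebra was aligned in such a way that this pairing agrees with $\frk{f}$ strictly under $\phi\lor\Syn$.
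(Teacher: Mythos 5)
Your proposal is correct and follows essentially the same route as the paper: a simultaneous induction over the quotient-inductive neutral/normal forms, reducing to $\Sh{\RxTop}$ since the syntactic part is the identity, anchoring the variable case in \cref{lem:stab-var}, and discharging each constructor by unfolding the corresponding reflection or reification clause. Your explicit observation that the QIT's boundary/collapse equations are respected because the goals are propositional is a point the paper leaves implicit, but it does not change the argument.
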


\begin{proof}
  By simultaneous induction on the normal and neutral forms, using essentially
  the argument of \citet{kaposi:thesis}. Because the syntactic part of the
  normalization function is the identity, it suffices to reason in the language
  of $\Sh{\RxTop}$. Representative cases follow:

  \begin{enumerate}

    \item The case for variables is \cref{lem:stab-var}.

    \item The case for neutral function application $\frk{app}\prn{f,a}$ is
      as follows. By induction, $\bbrk{f} =
      \Reflect{\bbrk{\SynAlg.\Pi\prn{A,B}}}{\phi}{\NeGlue{f}{\phi}{\bbrk{f}}}$ and
      $\Reify{\bbrk{A}}{\bbrk{a}} = a$, and we need to check that
      $\bbrk{\frk{app}\prn{f,a}}$ is equal to
      $\Reflect{\bbrk{B\prn{a}}}{\phi}{\NeGlue{\frk{app}\prn{f,a}}{\phi}{\bbrk{\frk{app}\prn{f,a}}}}$.
      \begin{align*}
        &\bbrk{\frk{app}\prn{f,a}} \\
        &\quad=
        \bbrk{f}\bbrk{a}
        \\
        &\quad=
        \prn{
          \Reflect{\bbrk{\SynAlg.\Pi\prn{A,B}}}{\phi}{\NeGlue{f}{\phi}{\bbrk{f}}}
        }\bbrk{a}
        \\
        &\quad=
        \prn{
          \Reflect{\Pi\prn{\bbrk{A},\lambda x.\bbrk{B\prn{x}}}}{\phi}{\NeGlue{f}{\phi}{\bbrk{f}}}
        }\bbrk{a}
        \\
        &\quad=
        \Reflect{\bbrk{B\prn{a}}}{\phi}{
          \NeGlue{
            \frk{app}\prn{f,\Reify{\bbrk{A}}{\bbrk{a}}}
          }{\phi}{
            \bbrk{f}\bbrk{a}
          }
        }
        \\
        &\quad=
        \Reflect{\bbrk{B\prn{a}}}{\phi}{
          \NeGlue{
            \frk{app}\prn{f,a}
          }{\phi}{
            \bbrk{f}\bbrk{a}
          }
        }
        \\
        &\quad=
        \Reflect{\bbrk{B\prn{a}}}{\phi}{
          \NeGlue{
            \frk{app}\prn{f,a}
          }{\phi}{
            \bbrk{\frk{app}\prn{f,a}}
          }
        }
      \end{align*}

    \item The case for neutral path application $\frk{papp}\prn{p,r}$ follows.
      \begingroup\scriptsize
      \begin{align*}
        &\bbrk{\frk{papp}\prn{p,r}}
        \\
        &\quad= \bbrk{p}\prn{r}
        \\
        &\quad=
        \prn{\Reflect{\bbrk{\SynAlg.\Con{path}\prn{A,a_0,a_1}}}{\phi}\NeGlue{p}{\phi}{\bbrk{p}}}
        \prn{r}
        \\
        &\quad=
        \Reflect{\bbrk{A\prn{r}}}{\phi\lor_\FF\partial{r}}{
          \NeGlue{\frk{papp}\prn{p,r}}{\phi\lor_\FF\partial{r}}{
            \brk{
              \phi\hookrightarrow \bbrk{p}\prn{r},
              \overline{r=\epsilon\hookrightarrow a_\epsilon}
            }
          }
        }
        \\
        &\quad=
        \Reflect{\bbrk{A\prn{r}}}{\phi\lor_\FF\partial{r}}{
          \NeGlue{\frk{papp}\prn{p,r}}{\phi\lor_\FF\partial{r}}{
            \bbrk{p}\prn{r}
          }
        }
        \\
        &\quad=
        \Reflect{\bbrk{A\prn{r}}}{\phi\lor_\FF\partial{r}}{
          \NeGlue{\frk{papp}\prn{p,r}}{\phi\lor_\FF\partial{r}}{
            \bbrk{\frk{papp}\prn{p,r}}
          }
        }
      \end{align*}
      \endgroup

    \item The case for stabilizing a neutral element of the circle is as
      follows. Starting with a neutral $s_0$ such that $\bbrk{s_0} =
      \Reflect{\Con{S1}}{\phi}{\NeGlue{s_0}{\phi}{\bbrk{s_0}}}$ and a partial
      normal form $s_\phi$ such that $\Reify{\Con{S1}}{\bbrk{s_\phi}} =
      s_\phi$, we compute:
      \begin{align*}
        &\Reify{\Con{S1}}{\bbrk{\frk{lift}_\phi\prn{s_0,s_\phi}}}
        \\
        &\quad=
        \Reify{\Con{S1}}{\bbrk{s_0}}
        \\
        &\quad=
        \Reify{\Con{S1}}{\Reflect{\Con{S1}}{\phi}{\NeGlue{s_0}{\phi}{\bbrk{s_0}}}}
        \\
        &\quad=
        \Reify{\Con{S1}}{\Reflect{\Con{S1}}{\phi}{\NeGlue{s_0}{\phi}{\bbrk{s_\phi}}}}
        \\
        &\quad=
        \Reify{\Con{S1}}{
          \Con{lift}\prn{\phi,s_0,\bbrk{s_\phi}}
        }
        \\
        &\quad=
        \frk{lift}_\phi\prn{s_0,\Reify{\Con{S1}}{\bbrk{s_\phi}}}
        \\
        &\quad=
        \frk{lift}_\phi\prn{s_0,s_\phi}
      \end{align*}

    \item The case for the dependent product type constant is as follows.
      \begin{align*}
        &\NormTp{\bbrk{\frk{pi}\prn{A,B}}}
        \\
        &\quad= \NormTp{
          \Pi\prn{
            \bbrk{A},
            \lambda x.
            \bbrk{B\prn{x}}
          }
        }
        \\
        &\quad=
        \frk{pi}\prn{
          \NormTp{\bbrk{A}},
          \lambda x.\NormTp{\bbrk{B\prn{x}}}
        }
        \\
        &\quad=
        \frk{pi}\prn{A, B}
        \qedhere
      \end{align*}
  \end{enumerate}
\end{proof}

 \fi

\end{document}